\documentclass[11pt,a4paper]{article}            
 \usepackage[skins,theorems]{tcolorbox}
\tcbset{highlight math style={enhanced,
  colframe=red,colback=white,arc=0pt,boxrule=1pt}}
  \usepackage[bookmarksopen, bookmarksnumbered, bookmarksopenlevel=2]{hyperref}
  \usepackage{tikz}
  \usepackage{tikz-3dplot}
 \usetikzlibrary{calc}
 \usetikzlibrary{decorations} %
 \usepackage[UKenglish]{babel}
 \usepackage[toc,page]{appendix}
 \usepackage{amsmath}
 \usepackage{amssymb}
 \usepackage{amsthm}
 \usepackage{graphicx}
 \usepackage{hhline}
 \usepackage[bf]{caption}
\usepackage{cite}
\usepackage[vcentermath]{youngtab}
\usepackage{geometry}
\usepackage{changepage}
\usepackage{slashed}
\usepackage{color}
\usepackage{stackrel}
\usepackage{tikz-cd} 
\usepackage{mathtools}
\usepackage{cancel} 
\usepackage{multirow}
\usepackage{cleveref}
\usepackage[margin=0pt,font=small,labelfont=normalfont,skip=22pt]{subcaption}
\usepackage{diagbox}
\usepackage[table]{xcolor}
\definecolor{coneYellow}{RGB}{255,240,0}
\definecolor{conePink}{RGB}{245,175,175}
\colorlet{conePinkPinklight}{conePink!50!white}
\definecolor{coneBlue}{RGB}{135,125,225}
\colorlet{coneBluelight}{coneBlue!50!white}
\colorlet{conePinklight}{coneBluelight!50!white}

\newcommand{\Y}{\cellcolor{conePinkPinklight}}
\newcommand{\Pnk}{\cellcolor{coneBluelight}}
\newcommand{\Pk}{\cellcolor{conePinklight}}

\usepackage{empheq}
\usepackage{arydshln}
\usepackage{yhmath}
\usepackage{setspace}
\newcommand{\bqa}{\begin{eqnarray}}
\newcommand{\eqa}{\end{eqnarray}}

\newcommand{\Z}{\mathbb{Z}}

\newcommand{\NE}{\overline{\mathrm{NE}}}

\newcommand{\Nef}{\mathrm{Nef}}

\newcommand{\Cone}{\operatorname{Cone}}

\newenvironment{eqn*}{\begin{equation*}\begin{aligned}}{\end{aligned}\end{equation*}\noindent}
\hypersetup{
    pdftitle={},
    pdfauthor={},
    pdfsubject={}
}
\numberwithin{equation}{section}
\numberwithin{table}{section}

\makeatletter

\def\punct{:}

\newtheoremstyle{dotless}{3pt}{3pt}{\itshape}{}{\bfseries}{\punct}{.5em}{}
\theoremstyle{dotless}

\newtheorem{theorem}{Theorem}

\newtheorem{definition}{Definition}
\newtheorem*{definition*}{Definition}

\newtheorem{proposition}{Proposition}
\newtheorem{conj}{Conjecture}
\newtheorem*{conj*}{Conjecture}

\newcommand{\be}{\begin{equation}}
\newcommand{\ee}{\end{equation}}
\newcommand{\beq}{\begin{equation}}
\newcommand{\eeq}{\end{equation}}
\newcommand{\ba}{\begin{aligned}}
\newcommand{\ea}{\end{aligned}}

\newcommand{\bea}{\begin{eqnarray}}
\newcommand{\eea}{\end{eqnarray}}

\newcommand{\cO}{\mathcal{O}}

\newcommand{\cE}{\mathcal{E}}

\newcommand{\cC}{\mathcal{C}}

\newcommand{\cL}{\mathcal{L}}

\newcommand{\cN}{\mathcal{N}}

\newcommand{\cA}{\mathcal{A}}
\newcommand{\cH}{\mathcal{H}}

\newcommand{\cF}{\mathcal{F}}

\newcommand{\cS}{\mathcal{S}}

\newcommand{\cV}{\mathcal{V}}

\newcommand{\cM}{\mathcal M}

\newcommand{\volume}{\text{{\small} vol}\, }

\newcommand\bi{\begin{itemize}}
\newcommand\ei{\end{itemize}}

\def\bra#1{{\langle{#1}|}}

\def\ket#1{{|{#1}\rangle}}

\def\unit{{1\kern-.65ex {\rm l}}}
\def\1{{1\kern-.65ex {\rm l}}}

\def\pl{{\mathrm{pl}}}

\def\dd{{\mathrm{d}}}

\newcount\hour \newcount\minute
\hour=\time \divide \hour by 60
\minute=\time
\def\now{%
\ifnum \hour<13
  \ifnum \hour=0 \advance \hour by 12 \number\hour:\else \number\hour:\fi%
     \ifnum \minute<10 0\fi%
     \number\minute%
\ A.M.%
\else \advance \hour by -12 \number\hour:%
  \ifnum \minute<10 0\fi%
  \number\minute%
  \ P.M.%
\fi%
}

\makeatother

\begin{document}

\begin{titlepage}
\begin{center}
\rightline{\small }

\vskip 15 mm

{\LARGE \bf
On BPS Branes
} 
\vskip 11 mm

Cumrun Vafa, David H. Wu, and Kai Xu
\vskip 11 mm
\small 
{\it Jefferson Physical Laboratory, Harvard University, Cambridge, MA 02138, USA}

\end{center}
\vskip 17mm

\begin{abstract}
We study supersymmetric BPS branes (BPS-B) in supergravity theories. Some of these states are anticipated by BPS black-brane (BPS-BB) solutions of supergravity. In particular, we define and distinguish the cone generated by BPS branes from the subcone of charges that admit BPS black-brane attractor solutions in the infrared limit of the supergravity effective field theory. We denote these cones by $\cC_{\rm BPS-B}$ and $\cC_{\rm BPS-BB}$, respectively.
We conjecture that, in any supersymmetric theory of quantum gravity, every integrally charged state lying in $\cC_{\rm BPS-BB}$ is realized by a BPS state in the spectrum. Furthermore, we conjecture and present evidence that when $\cC_{\rm BPS-B}$ is moduli independent, it can be determined as the cone dual to the $\cC_{\rm BPS-BB}$ under the electric-magnetic pairing.
\end{abstract}

\vfill
\end{titlepage}

\newpage

\tableofcontents

\setcounter{page}{1}
\section{Introduction}

One of the central features of a consistent quantum theory of gravity is the completeness~\cite{Polchinski:2003bq,Banks:2010zn} of charge lattices for all gauge symmetries (including higher form gauge symmetries).  This is related to the lack of global symmetries~\cite{Banks:2010zn} in quantum gravity theories, or more broadly the triviality of cobordism groups~\cite{McNamara:2019rup} for quantum gravity theories.  However, this feature by itself does not predict the masses or tensions of charged states.  The weak gravity conjecture (WGC) postulates~\cite{Arkani-Hamed:2006emk} that a finite index sublattice is generated by charged states whose tension (or mass) is less than the corresponding extremal black brane of the same charge
\begin{equation}
T(q)\leq T^{\rm Ext}_{\rm Black\, Brane}(q)\,,
\end{equation}
where $q$ denotes the charge.
There is also a sublattice version of the WGC~\cite{Heidenreich:2016aqi} which proposes that this relation holds not only for the generators of the sublattice but for all elements of the sublattice.

While these are general expectations for the tensions of these objects, there are sharper expectations in the supersymmetric cases.  Exploration of what the precise expectations for supersymmetric cases are is the main aim of this work.  The basic ingredient in the supersymmetric case is the existence of BPS branes, whose mass is protected by supersymmetry.  When there is an extremal black brane, we would have
\begin{equation}
T_{\rm BPS}(q)=T^{\rm Ext}_{\rm Black\, Brane}(q)\,,
\end{equation}
saturating the WGC bound. One important question in this context is to ask for which charge states $q$ are there supersymmetric BPS states, which will be the main focus of this work.

One can study BPS objects in the supergravity context with fixed black brane charges as well. In such cases, the asymptotic values of scalar fields (which are viewed as compactification moduli) flow and approach a particular value at the horizon of the black brane called the attractor value.  This minimizes the tension of the brane (in the Einstein frame), as a function of moduli.  When the attractor value is not singular, (i.e., lead to no additional massless modes in the EFT)
the supergravity solution predicts the existence of a BPS charged state. For a sufficiently large charge $q\rightarrow N q$, such a solution would be a reliable prediction of the existence of a black brane object.  The question would be whether this is valid for arbitrary charge $q$ which may not be large, where EFT is inapplicable.  We conjecture that despite the lack of reliable argument from EFT, this is still the case, and we provide strong evidence in support of this.

BPS states do not all correspond to black branes.  We can define a cone by considering arbitrary positive combinations of BPS charges.  This would be natural to consider only when the BPS states preserve the same supersymmetry. In the present work, we consider supersymmetric theories and BPS branes which are moduli independent. This is, for example, the case for BPS 1-branes in 6d ${\cal N}=(1,0)$ and also for BPS 1-branes, BPS 0-branes in 5d ${\cal N}=1$. 
In these cases,
we propose and provide evidence that the BPS electric black branes, are dual to BPS magnetic cone and BPS electric cone is dual to BPS magnetic black branes, when restricted to the same preserved supersymmetry sector and when the BPS branes can be defined independently of the moduli space. With the cone of BPS black branes defined, we also conjecture that there is a BPS state in the spectrum of the quantum gravitational theory for every integral charge site inside this cone.

The paper is organized as follows: In \cref{sec:def}, we introduce the definitions of central concepts in this paper, namely BPS compatibility, BPS occupancy, BPS branes (BPS-B), and BPS black branes (BPS-BB). Using these definitions, we introduce the associated charge cones of BPS-B and BPS-BB; In \cref{sec:conj}, we formulate the main conjectures about BPS completeness within the BPS-BB cone, the duality between BPS compatible cones via their elctric-magnetic pairings, and the classification of tensions of BPS branes across the BPS-B cone; In \cref{sec:QG with 32}, we show that M2-branes and M5-branes in 11d M-theory and D3-branes in 10d type IIB string theory are evidences for our conjectures; In \cref{sec:ECalabi--Yau threefolds} and \cref{sec:M-theory on Calabi--Yau threefold}, we provide more evidence for this conjecture in non-trivial cases using concrete string theory compactifications, namely F-theory and M-theory compactified on Calabi--Yau threefolds; and the paper concludes in \cref{sec:conclusions} with a summary and future directions.

\section{Definitions}
\label{sec:def}

In this section, we define the BPS states of interest in a given theory and clarify the correlation between the orientation of the worldvolume of the BPS brane and the supersymmetry preserved for such a BPS brane. We define BPS occupancy for a given charge in the charge lattice of a theory. The existence of a BPS state in the spectrum can be predicted from the weak gravity conjecture. Upon establishing these, we define the BPS brane (BPS-B) cone, $\cC_{\rm BPS-B}$, which is the positive real cone over the integral BPS charge lattices associated to a fixed supersymmetric sector of a BPS $p$-brane of any given theory.  For a fixed $\cC_{\rm BPS-B}$, we also define the BPS black brane (BPS-BB) cone, $\cC_{\rm BPS-BB}$, as a subcone of $\cC_{\rm BPS-B}$.

\subsection{BPS compatibility and supersymmetry}
\label{sec:BPS compatibility}

In $d$-dimensional Minkowski spacetime, the spacetime symmetry algebra contains the $d$-dimensional Poincar\'e algebra $\mathfrak{iso}(1,d-1)$. A supersymmetric extension can be obtained by adjoining the algebra with supercharges which transform as spinors of the Lorentz group, $Spin(1,d-1)$. The amount of supersymmetry is determined by the real spinor representations of $Spin(1,d-1)$. Furthermore, by unitarity, for any such state $\ket{\psi}$ in the theory, we have
\begin{equation}
    \bra{\psi}\{f(\eta)\,,f(\eta)^\dagger\}\ket{\psi}\geq0\,,
\end{equation}
where $f(\eta)=\eta^\alpha_IQ_\alpha^I$ denotes any complex linear combination of supercharges $Q_\alpha^I$ with $\alpha$ and $I$ labeling the spinor index and the independent copies of the minimal spinor representation, respectively. From this along with the supersymmetry algebra, the tension of extended objects such as $p$-branes are bounded below by
\begin{equation}
\label{eq:BPS bound}
    T^{(p)}\geq |Z^{(p)}|\,,
\end{equation}
where $|Z^{(p)}|$ denotes the norm of the $p$-form extension of the supersymmetry algebra. Then, a \textit{BPS state} is a state in the theory that saturates the above inequality. Correspondingly, there exists a non-zero linear combination of supercharges such that
\begin{equation}
    f(\eta)\ket{\psi_{\rm BPS}}=0\,.
\end{equation}
When such a state is found in the spectrum, we refer to these BPS states as BPS $p$-branes.

Consider a BPS brane where the supersymmetry satisfies
\begin{equation}
    \Gamma_{\rm brane}\epsilon=\epsilon\,.
\end{equation}
where $\Gamma_{\rm brane}$ denotes the supersymmetry projectors relevant to the brane configuration in spacetime while $\epsilon$ denotes the relevant spinors in a given dimension preserved along the BPS brane. We can also consider the preserved supersymmetry subspace associated to these branes as
\begin{equation}
    S_i=\{\epsilon\in S_{\rm vac}\,\vert\, \Gamma_{\rm brane}\epsilon=\epsilon\}\,,
\end{equation}
where $S_{\rm vac}$ denotes the space of supersymmetry parameters in the vacuum theory, without any insertion of branes.

Relevant for our later discussions, we introduce the notion of \textit{BPS compatibility} among BPS branes, which in the case of BPS branes with the same dimensions is equivalent to preserving the same supersymmetry. Namely, for BPS branes of the same dimensions to be BPS compatible, this requires $S_i=S_j$ for a pair of BPS $p_i$-branes and BPS $p_j$-branes. This would, for example, in the cases of BPS 0-branes in 4d $\cN=2$ require the associated central charges of BPS 0-branes to have the same complex phase. In heterotic theories on $T^n$, BPS states preserving the same supersymmetry is a fixed choice of $\hat{p}_R=\vec{p}_R/|\vec{p}_R|\in S^{n-1}\subset \mathbb{R}^n$. 

For BPS branes with different worldvolume dimensions, the branes are BPS compatible when the following three conditions are satisfied between a pair of BPS $p_i$-branes and BPS $p_j$-branes with $p_i\neq p_j$: 1) the BPS brane worldvolume theories preserve the same amount of supersymmetry, $\dim S_i=\dim S_j$; 2) their associated preserved supersymmetry subspace must have a non-trivial intersection, $\dim (S_i \cap S_j)>0$; and 3) the charge orientation of the BPS branes must also be aligned.  
In this present paper, the dualities between cones realized via the electric-magnetic pairing which we discuss in \cref{sec:duality between cones}, we restrict to BPS branes in theories where the choice of orientation is $\mathbb{Z}_2$, namely either the branes are BPS or anti-BPS. In other words, we focus on theories where preserving the same supersymmetry is automatic (up to charge conjugation) as we illustrate in the class of theories below.

Let us consider 5d $\cN=1$ supergravity theories. The supersymmetry algebra is 
\begin{equation}
\label{eq:5d n=1 susy algebra}
    \big\{Q^A_\alpha,(Q_\beta^B)^\dagger\big\}=2\big[\delta^{AB}(\Gamma^\mu\Gamma^0)_{\alpha\beta}P_\mu+(\sigma_3)^{AB}(i\Gamma^0)_{\alpha\beta}Z-(\sigma_1)^{AB}(\Gamma_\mu\Gamma^0)_{\alpha\beta}Z^\mu\big]\,,
\end{equation}
where $\{\Gamma^\mu,\Gamma^\nu\}=2\eta^{\mu\nu}$ with $\mu,\nu=0,1,\dots,4$, $\sigma_1,\sigma_3$ denote the relevant Pauli matrices, $A,B=1,2$ denotes the $SU(2)_R$ index, and $\alpha,\beta=1,\dots,4$ denotes the 5d spinor index. Let us start with the 0-branes in this theory. In the rest frame of a massive 0-brane with charge $q$ and mass $m$, the above supersymmetry algebra becomes $\{Q^A_\alpha,(Q^B_\beta)^\dagger\}=2\big[m\delta^{AB}\delta_{\alpha\beta}+(\sigma_3)^{AB}(i\Gamma^0)_{\alpha\beta}Z_q\big]$. In the $\eta^{\mu\nu}=(-,+,\dots,+)$ signature, $(i\Gamma^0)^2=1$ and the eigenvalues of $m+(i\Gamma^0\sigma_3)Z_q$ are $m\pm Z_q$. Therefore, the BPS bound on the mass of massive 0-branes is $m\geq |Z_q|$ where this is saturated for BPS 0-branes. For a BPS state with $m_{\rm BPS}=|Z_q|$ where $Z_qe^{i\theta_q}=|Z_q|$, the annihilating supersymmetries obey $\Pi_q\epsilon=\frac12\big(1-e^{i\theta_q}i\Gamma^0\sigma_3\big)\epsilon=\epsilon$. Then, the supercharges that annihilate the BPS 0-brane is\footnote{Note that the projector satisfies $\Pi_q^\dagger=\Pi_q$ and $\Pi_q^2=\Pi_q$, then we have 
\[
\big\{\cA^A_\alpha,(\cA^B_\beta)^\dagger\big\}=2|Z_q|(\Pi_q)_{\alpha~~C}^{~\gamma A}(\delta_\gamma^{~\rho}\delta^C_{~D}+e^{i\theta_q}(i\Gamma^0)_\gamma^{~\rho}(\sigma_3)^C_{~D})(\Pi_q)_{\rho~~D}^{~\beta B}=0\,,
\]
}
\begin{equation}
    \cA^A_\alpha(q)=(\Pi_q)_{\alpha~~B}^{~\beta A}Q_\beta^B\,.
\end{equation}
Therefore, the preserved supersymmetry subspace associated to this BPS state is
\begin{equation}
    S_q=\big\{c^\alpha_A(\Pi_q)_{\alpha~~B}^{~\beta A}Q^B_\beta\,\vert\, c^\alpha_A\in\mathbb{R}\big\}\,,
\end{equation}
such that again for any $Q\in S_q$, the combination of supercharges annihilates the BPS state $Q\ket{\psi_{\rm BPS}}=0$. The BPS 0-branes are BPS compatible among each other when $\theta_q=\theta_q'$ for any pair of BPS 0-branes with charges $q,q'$. This is equivalent to BPS 0-branes preserving the same supersymmetry.

Next, let us consider the 1-branes in this theory. The supersymmetry algebra in the rest frame of the massive 1-branes stretched along $x^1$ with charge $p$ and tension $T$ is $\{Q_\alpha^A,(Q_\beta^B)^\dagger\}=2(T\delta^{AB}\delta_{\alpha\beta}+(\sigma_1)^{AB}(\Gamma^{01})_{\alpha\beta}Z^1_p)$. Similarly, in the same signature, we have $(\Gamma^{01})^2=1$ and the eigenvalues of $T+\sigma_1\Gamma^{01}Z_p^1$ are $T\pm Z_p^1$. Therefore, the BPS bound on the tension of massive 1-branes is $T\geq |Z^1_p|$. The saturation again only occurs for BPS states. Then, for a BPS 1-brane, the supercharges that annihilate its BPS state is
\begin{equation}
    \cA^A_\alpha=(\Pi_p)_{\alpha~~B}^{~\beta A}Q_\beta^B\,,
\end{equation}
where $\Pi_p=\frac12(1-e^{i\theta_p}\sigma_1\Gamma^{01})$ with $e^{i\theta_p}=\pm 1$. Here, the convention is similar to the 0-brane case where $Z_p^1e^{i\theta_p}=|Z_p^1|$. Therefore, the preserved supersymmetry subspace of the BPS 1-brane with charge $p$ is
\begin{equation}
    S_p=\big\{\tilde{c}^\alpha_A(\Pi_p)_{\alpha~~B}^{~\beta A}Q_\beta^B\,\vert\, \tilde{c}^\alpha_A\in\mathbb{R}\big\}\,.
\end{equation}
Therefore, BPS 1-branes are BPS compatible only when they are extended along the same spatial directions. Additionally, they can either be parallel to each other $\theta_p=\theta_{p'}$, or anti-parallel to each other $\theta_p=\theta_{p'}+\pi$. Only when the BPS 1-branes are parallel to each other do they preserve the same supersymmetry.

Therefore, for either the BPS 0-brane or the BPS 1-brane, the same supersymmetry is preserved up to charge conjugation, namely a choice of BPS branes and anti-BPS branes. 
Let us now consider the intersection between the preserved supersymmetry subspaces associated to BPS 0-branes and BPS 1-branes. The common preserved supersymmetry subspace is 
\begin{equation}
    S_{q\vert p}:=S_q\cap S_p=\left\{\epsilon \in S_{\rm vac}\,\big\vert\, i\sigma_3\Gamma^0\epsilon=-e^{i\theta_q}\epsilon\,,\sigma_1\Gamma^{01}\epsilon=-e^{i\theta_p}\epsilon\right\}\,.
\end{equation}
As $[i\sigma_3\Gamma^0,\sigma_1\Gamma^{01}]=0$, we obtain $\dim S_{q\vert p}=2$. We can see that the above non-vanishing of $S_{q\vert p}$ is independent of the choice of $\theta_q$ and $\theta_p$. However, as these are BPS branes of different dimensions, the last condition imposed by BPS compatibility requires that the BPS compatible 0-branes and 1-branes must be aligned, namely $\theta_q=\theta_p$. Hence, in 5d $\cN=1$ theories, BPS 0-branes and BPS 1-branes are BPS compatible, while anti-BPS 0-branes and anti-BPS 1-branes are BPS compatible.

\subsection{BPS occupancy and the Weak Gravity Conjecture}

For a given charge $q$ associated to a $p$-brane in the charge lattice of a theory, we can examine its associated Hilbert space $\cH_q$ and its BPS subspace $\cH^{\rm BPS}_{q,S}\subseteq \cH(q)$. The BPS Hilbert space can be defined as
\begin{equation}
    \cH^{\rm BPS}_{q,S}:=\big\{\ket{\psi}\in \cH_q\,\big\vert\,f(\eta)_q\ket{\psi}=0\text{ for all }f(\eta)_q\in S\big\}\,,
\end{equation}
where $S$ is the preserved supersymmetry subspace along the worldvolume theory of the $p$-brane of interest with charge $q$.
Then, we arrive at the following definition.
\begin{definition}
\label{def:BPS occupied 1}
    The charge $q$, associated to a BPS-B whose preserved supersymmetry subspace is $S$, within the charge lattice of a given supersymmetric theory is \textit{occupied} by BPS states if 
    \begin{equation}
    \label{eq:BPS occupancy}
        \dim\cH^{\rm BPS}_{q,S}>0\,.
    \end{equation}
\end{definition}
For example, a single D$p$-brane, along with any such number of $N$ coincident parallel D$p$-branes in 10d, has $\dim\cH_N^{\rm BPS}>0$. Therefore, along the ray within the charge lattice $\mathbb{R}[Dp]$, all integral sites are occupied by BPS states. It is worth noting that the above definition is the coarsest notion of BPS occupancy. Namely, it cannot detect if such a BPS configuration forms a bound state. Nevertheless, with the above definition, integral charges occupied by mutually BPS compatible states naturally form a semi-group under addition. 

The above definition necessarily implies that all BPS occupied charges $q$ must satisfy $q\in\cC_{\rm BPS}$. A refinement of the above BPS occupancy can be made when we introduce appropriate BPS indices that can directly compute the BPS occupancy for a given charge. This is, for example, the Gopakumar--Vafa invariants for curves in 5d $\cN=1$ theories~\cite{Gopakumar:1998jq} which we will compute in the examples in \cref{sec:5d examples}. If such a BPS index is defined, then a non-zero BPS index implies BPS occupancy at that given charge. 

What we would like to especially highlight in this section is the utility of the Weak Gravity Conjecture (WGC) in predicting the existence of a BPS state. To start, let us briefly review the statement of the Weak Gravity Conjecture.
\begin{conj*}[Weak Gravity Conjecture~\cite{Arkani-Hamed:2006emk}]
    For any such $U(1)$ gauge theory coupled to a consistent theory of quantum gravity, there must exists a particle of mass $m$ and charge $q\sim \mathcal O(1)$ such that
    \begin{equation}
        \frac{m}{|q|}\leq \frac{M}{|Q|}\bigg\vert_{\rm Ext}\,,
    \end{equation}
    is satisfied. Here, $\frac{M}{|Q|}\big\vert_{\rm Ext}$ denotes the mass-to-charge ratio of an extremal black hole.
\end{conj*}
There is a natural extension of the above WGC to have implications on $p$-branes. In this case, we instead consider $(p+1)$-form gauge fields coupling to quantum gravity. Then, the conjecture states that there must exists a $p$-brane with charge $q$ and tension $T$ such that
\begin{equation}
    \frac{T}{|q|}\leq \frac{T}{|q|}\bigg\vert_{\rm Ext}\,,
\end{equation}
where $\frac{T}{|q|}\big\vert_{\rm Ext}$ is the tension-to-charge ratio of an extremal black $p$-brane.
A sharpened version of this conjecture~\cite{Ooguri:2016pdq} further states that the inequality from the WGC is saturated if and only if the theory is supersymmetric and the particle/state under consideration is BPS.

Another natural lower bound that occurs is due to supersymmetry, namely \cref{eq:BPS bound}. In particular, this bound remains as a moduli-dependent, supersymmetrically protected lower bound on the tension of the $p$-brane in the spectrum. Therefore, from supersymmetry, we also have the global bound on the tension of branes of charge $q$
\begin{equation}
    |Z_{\rm min}(q)|\leq T(q)\,,
\end{equation}
where $Z_{\rm min}(q)$ denotes the global minimum of the central charge across all moduli space.

Combining the above, we arrive at the following natural bound on the tension of any $p$-brane in the spectrum of a quantum gravitational theory
\begin{equation}
\label{eq:BPS predict}
    \big\vert Z_{\rm min}(q)\big\vert\leq T(q)\leq T^{\rm Ext}_{\text{Black Brane}}(q)\,,
\end{equation}
where $T^{\rm Ext}_{\text{Black Brane}}$ denotes the extremal tension of the black brane evaluated at its attractor point.
The saturation above can only occur if $q$ corresponds to a BPS black $p$-brane state in the theory. Therefore, we arrive at a criterion for BPS occupancy at a charge $q$ in the spectrum of theory.
\begin{definition}
\label{def:BPS occupied 2}
    In a quantum gravitational theory, the charge $q$ of a $p$-brane within the spectrum of the theory is occupied by BPS states if
    \begin{equation}
        |Z_{\rm min}(q)|=T^{\rm Ext}_{\rm{Black\, Brane}}(q)\,,
    \end{equation}
\end{definition}

As pointed out in~\cite{Alim:2021vhs}, when the two bounds above agree (which is a nontrivial assumption), the sublattice weak gravity conjecture predicts a tower of BPS states in the charge direction $q$. Reference~\cite{Long:2021lon} discusses interesting examples where the inequalities are strict.  We will see more of these examples in \cref{sec:ECalabi--Yau threefolds} and \cref{sec:M-theory on Calabi--Yau threefold}.

\subsection{The BPS brane cone}
\label{sec:BPS brane}

From the definition of BPS occupancy, for any two given charges $\gamma_1,\gamma_2\in \cS_S$, then we must have $\gamma_1+\gamma_2\in \cS_S$ where $\cS_S$ denotes the set of BPS compatible charges (with respect to the subspace $S$) within the full charge lattice $\cS_{S}\subset \Lambda$ of the theory. Therefore, BPS states (preserving the same supersymmetry) form a semi-group within the lattice of charges.
With this, we can define the BPS brane cone.
\begin{definition}
\label{def:BPS cone}
    For a supersymmetric theory, the positive real span of the set of BPS compatible charges $\cS^{(p)}_S$ for a given $p$-brane appearing in the spectrum of the theory is the \textit{BPS ($p$-)brane cone}
    \begin{equation}
        \cC^{(p)}_{\rm BPS}:=\mathrm{Cone}(\cS^{(p)}_S)=\bigg\{\sum_{i=1}^na_i\gamma_i\bigg\vert a_i\in\mathbb{R}_{\geq 0}\,,\gamma_i\in\cS^{(p)}_S\bigg\}\,.
    \end{equation}
\end{definition}

A peculiar phenomenon emerges among the BPS branes that has partially motivated the \textit{elementary constituent conjecture} in~\cite{Nevoa:2025xiq}. For example, consider BPS branes in type II string theories. The tensions for various D-branes in the Einstein frame are
\begin{equation}
\label{eq:D-brane tensions}
    T_{\mathrm{D}p}=\frac{g_s^{(p-3)/4}}{(2\pi)^p\alpha'{}^{(p+1)/2}}\,.
\end{equation}
and the tension of F1-strings and NS5-branes in Einstein frame are
\begin{equation}
\label{eq:F1 NS5 tension}
    T_{\rm F1}=\frac{g_s^{1/2}}{2\pi\alpha'}\,,\qquad T_{\rm NS5}=\frac{g_s^{-1/2}}{(2\pi)^{5}\alpha'{}^3}\,.
\end{equation}
The moduli space of type IIA string theory is the dilaton $g_s=e^{\phi}$ while the moduli space of type IIB string theory is parameterized by the axio-dilaton $\tau=C_0+ie^{-\phi}$.
For type IIA string theory, all D$p$-branes having vanishing tension in either the strong-coupling limit $g_s\to\infty$ or the weak-coupling limit $g_s\to0$, including the F1-strings and NS5-branes.
In type IIB string theory, again, most of the D$p$-brane tensions vanish in the limit of either $g_s\to 0$ or $g_s\to \infty$ again. However, for D3-branes, the tension is independent of the moduli and remains positive throughout the entire moduli space. 

Therefore, it seems natural that there are two categories of BPS branes within the BPS-B cone: 1) those with minimum positive tension achieved in the interior of the moduli space;\footnote{Note that the conifold points of Calabi--Yau threefolds that can be flopped should be viewed as the interior of moduli space. However, upon inclusion of all cones associated to the birational-family of Calabi--Yau threefolds, the flop curve no longer appears in the $\cC_{\rm BPS-B}$ as the Mori cone becomes restricted as the K\"ahler cone is extended.} and 2) the rest. In particular, the first type is intimately related to extremal black brane solutions with smooth horizons (that are reliable solutions within the leading IR description of the supergravity theory) which we will now turn to focus on.

As it turns out, the classification of the remaining BPS branes can be refined. The rest can have two possibilities: either the minimum tension is on the boundary of the black brane regions and is positive, or the minimum tension goes to zero. More details will be discussed in \cref{sec:tension properties}.

\subsection{The BPS black brane cone}
\label{sec:BPS black brane}

We have defined the BPS brane cones and its definition is independent of gravity. In the presence of gravity, certain BPS branes, when carrying a sufficiently large charge, can create large gravitational backreactions such that the configuration forms black branes. This is exemplified by the D3-brane discussion above. Namely, in the presence of a large $N$ stack of D3-branes, the near-horizon geometry of the $N$ D3-branes becomes $AdS_5\times S^5$ where $R_{AdS_5}=R_{S^5}=(4\pi g_s N \alpha'{}^2)^{1/4}$. These BPS branes in the BPS-B cone are distinguishable from the rest as they admit solutions to the attractor mechanism in supergravity. Hence, we can define the sub-cone of BPS branes within the BPS-B cone as the \textit{BPS black $p$-brane cone}, which we denote as $\cC_{\mathrm{BPS-BB}}$, in the following way.
\begin{definition}
\label{def:BPS black cone}
    In a supersymmetric quantum gravitational theory, the BPS black $p$-brane cone is the minimal real cone that contains all integral charge sites in the BPS-B cone which lead to supersymmetric BPS black $p$-branes that have smooth (near-)horizon geometries.  The boundary of the cone includes points where the attractor moduli is at the boundary, which may either have positive or vanishing tension.
\end{definition}
\noindent In particular, the BPS branes along the boundaries of $\cC_{\rm BPS-BB}$ which have vanishing tension occur when they exist along $\partial \cC_{\rm BPS-BB}\cap \partial \cC_{\rm BPS-B}$.

A unique property of BPS black branes in the interior of $\cC_{\rm BPS-BB}$ is that as we vary the moduli fields in the theory, their minimum tension obeys $T_{\rm min}>0$. Therefore, an equivalent definition of the BPS-BB cone is as follows.
\begin{definition}
\label{def:BPS black cone 2}
    In supersymmetric quantum gravities, the interior of the BPS black $p$-brane cone is the interior of the minimal real cone which contains all integral charge sites where the associated $p$-brane tension has a non-zero minimum in the strict interior of the moduli space.
\end{definition}

With either definition, we can systematically deduce charges that lead to a state with a smooth horizon in supergravity, which we will turn to in the following specific supergravity settings.

\section{Conjectures}
\label{sec:conj}

In this section, we present all our conjectures concerning properties of $\cC_{\rm BPS-B}$ and $\cC_{\rm BPS-BB}$. We propose two conjectures: 1) the BPS completeness in $\cC_{\rm BPS-BB}$ presented in \cref{conj:BPS completeness}, and 2) the electric-magnetic duality between $\cC_{\rm BPS-B}$ and $\cC_{\rm BPS-BB}$, which preserve the same supersymmetry, presented in \cref{conj:cone dual conjecture}. Lastly, we discuss the four classes of BPS branes in $\cC_{\rm BPS-B}$ categorized by the behavior of their tensions as we vary the moduli fields in the theory.

\subsection{BPS completeness in the BPS black brane cone}

The completeness of spectrum hypothesis~\cite{Banks:2010zn} states that in a consistent theory of quantum gravity, all possible representations of the gauge group appear in the spectrum. This is intimately related to the no global symmetry conjecture~\cite{Banks:2010zn,Polchinski:2003bq,Harlow:2018tng} and the triviality of cobordism groups of quantum gravity~\cite{McNamara:2019rup}.
However, no such analog of the completeness hypothesis is found for BPS states in the supersymmetric charge lattices. Furthermore, the naive extrapolated BPS completeness hypothesis for quantum gravity has known counter-examples.
A set of known examples are in 5d $\cN=1$ quantum gravity theories constructed from M-theory compactified on a Calabi--Yau threefold. 
In the BPS 1-brane cone from M5-branes wrapping 4-cycles in a Calabi--Yau threefold, there are many non-BPS sites populating the BPS-B cone as discussed in~\cite{Gendler:2026uux}. Additionally, there are also similar non-BPS holes in the 6d $(1,0)$ quantum gravity theories which have been discussed in~\cite{Kim:2024eoa}.\footnote{We will revisit both of these examples in \cref{sec:5d examples} and \cref{sec:6d examples} in light of our conjectures.}

\begin{figure}
    \centering

\tikzset{every picture/.style={line width=0.75pt}} 

\tikzset{every picture/.style={line width=0.75pt}} 


    \caption{A schematic illustration of the BPS occupancy within $\cC_{\rm BPS-BB}$ and $\cC_{\rm BPS-B}$. Here, the red dots indicates the integral charge site has a BPS state while the white dot indicates an empty site with no BPS states at that given charge in the spectrum of the theory. Additionally, we also illustrate that the BPS states are closed under a BPS algebra.}
    \label{fig:BPS occupancy}
\end{figure}

Motivated by these counter-examples and the necessity of a BPS completeness in quantum gravity, we make the following conjecture concerning the analogue of the completeness hypothesis for BPS states in the BPS-BB cone.
\begin{tcolorbox}
\begin{conj}
\label{conj:BPS completeness}
    In any consistent quantum gravitational theories with supersymmetry, \underline{all} integer sites sitting at the intersection between the BPS-BB cone of the theory and $\mathbb{Z}^{\dim(\cC_{\rm BPS\, BB})}$ are occupied by BPS states, in the sense of \cref{def:BPS occupied 1}.
\end{conj}
\end{tcolorbox}
\noindent The above conjecture is illustrated in \cref{fig:BPS occupancy} where we do not expect BPS completeness in the full $\cC_{\rm BPS-B}$ while we do expect it within $\cC_{\rm BPS-BB}$, including the extremal rays of $\cC_{\rm BPS-BB}$.

\subsection{Duality among BPS cones}
\label{sec:duality between cones}

\begin{figure}
    \centering

\tikzset{every picture/.style={line width=0.75pt}} 

\begin{tikzpicture}[x=0.75pt,y=0.75pt,yscale=-1,xscale=1]

\draw  [fill={rgb, 255:red, 155; green, 155; blue, 155 }  ,fill opacity=0.39 ] (498,119) -- (538,122) -- (556,168) -- (494,277) -- (443,127) -- cycle ;
\draw  [color={rgb, 255:red, 74; green, 144; blue, 226 }  ,draw opacity=1 ][fill={rgb, 255:red, 74; green, 144; blue, 226 }  ,fill opacity=0.11 ] (462,91) -- (549,86) -- (632,105) -- (494,277) -- (376,107) -- cycle ;
\draw  [fill={rgb, 255:red, 155; green, 155; blue, 155 }  ,fill opacity=0.17 ] (186,77) -- (273,145) -- (116,274) -- (39,152) -- (61,81) -- cycle ;
\draw  [color={rgb, 255:red, 0; green, 0; blue, 0 }  ,draw opacity=1 ][fill={rgb, 255:red, 74; green, 144; blue, 226 }  ,fill opacity=0.36 ] (177,102) -- (224,139) -- (116,274) -- (64,127) -- (69,106) -- cycle ;
\draw    (116,274) -- (22.07,124.69) ;
\draw [shift={(21,123)}, rotate = 57.82] [color={rgb, 255:red, 0; green, 0; blue, 0 }  ][line width=0.75]    (10.93,-3.29) .. controls (6.95,-1.4) and (3.31,-0.3) .. (0,0) .. controls (3.31,0.3) and (6.95,1.4) .. (10.93,3.29)   ;
\draw    (116,274) -- (182.88,175.65) ;
\draw [shift={(184,174)}, rotate = 124.22] [color={rgb, 255:red, 0; green, 0; blue, 0 }  ][line width=0.75]    (10.93,-3.29) .. controls (6.95,-1.4) and (3.31,-0.3) .. (0,0) .. controls (3.31,0.3) and (6.95,1.4) .. (10.93,3.29)   ;
\draw    (116,274) -- (297.45,126.26) ;
\draw [shift={(299,125)}, rotate = 140.85] [color={rgb, 255:red, 0; green, 0; blue, 0 }  ][line width=0.75]    (10.93,-3.29) .. controls (6.95,-1.4) and (3.31,-0.3) .. (0,0) .. controls (3.31,0.3) and (6.95,1.4) .. (10.93,3.29)   ;
\draw [color={rgb, 255:red, 0; green, 0; blue, 0 }  ,draw opacity=1 ]   (116,274) -- (199.33,39.88) ;
\draw [shift={(200,38)}, rotate = 109.59] [color={rgb, 255:red, 0; green, 0; blue, 0 }  ,draw opacity=1 ][line width=0.75]    (10.93,-3.29) .. controls (6.95,-1.4) and (3.31,-0.3) .. (0,0) .. controls (3.31,0.3) and (6.95,1.4) .. (10.93,3.29)   ;
\draw [color={rgb, 255:red, 0; green, 0; blue, 0 }  ,draw opacity=1 ]   (116,274) -- (47.55,32.92) ;
\draw [shift={(47,31)}, rotate = 74.15] [color={rgb, 255:red, 0; green, 0; blue, 0 }  ,draw opacity=1 ][line width=0.75]    (10.93,-3.29) .. controls (6.95,-1.4) and (3.31,-0.3) .. (0,0) .. controls (3.31,0.3) and (6.95,1.4) .. (10.93,3.29)   ;
\draw    (39,152) -- (61,81) ;
\draw    (61,81) -- (186,77) ;
\draw    (186,77) -- (273,145) ;
\draw    (273,145) -- (165,203) ;
\draw    (39,152) -- (165,203) ;
\draw [color={rgb, 255:red, 74; green, 144; blue, 226 }  ,draw opacity=1 ]   (116,274) -- (37.67,52.89) ;
\draw [shift={(37,51)}, rotate = 70.49] [color={rgb, 255:red, 74; green, 144; blue, 226 }  ,draw opacity=1 ][line width=0.75]    (10.93,-3.29) .. controls (6.95,-1.4) and (3.31,-0.3) .. (0,0) .. controls (3.31,0.3) and (6.95,1.4) .. (10.93,3.29)   ;
\draw [color={rgb, 255:red, 74; green, 144; blue, 226 }  ,draw opacity=1 ]   (116,274) -- (278.74,72.56) ;
\draw [shift={(280,71)}, rotate = 128.93] [color={rgb, 255:red, 74; green, 144; blue, 226 }  ,draw opacity=1 ][line width=0.75]    (10.93,-3.29) .. controls (6.95,-1.4) and (3.31,-0.3) .. (0,0) .. controls (3.31,0.3) and (6.95,1.4) .. (10.93,3.29)   ;
\draw [color={rgb, 255:red, 74; green, 144; blue, 226 }  ,draw opacity=1 ]   (116,274) -- (126.85,128.99) ;
\draw [shift={(127,127)}, rotate = 94.28] [color={rgb, 255:red, 74; green, 144; blue, 226 }  ,draw opacity=1 ][line width=0.75]    (10.93,-3.29) .. controls (6.95,-1.4) and (3.31,-0.3) .. (0,0) .. controls (3.31,0.3) and (6.95,1.4) .. (10.93,3.29)   ;
\draw [color={rgb, 255:red, 74; green, 144; blue, 226 }  ,draw opacity=1 ]   (64,127) -- (69,106) ;
\draw [color={rgb, 255:red, 74; green, 144; blue, 226 }  ,draw opacity=1 ]   (69,106) -- (177,102) ;
\draw [color={rgb, 255:red, 74; green, 144; blue, 226 }  ,draw opacity=1 ]   (177,102) -- (224,139) ;
\draw [color={rgb, 255:red, 74; green, 144; blue, 226 }  ,draw opacity=1 ]   (224,139) -- (123,174) ;
\draw [color={rgb, 255:red, 74; green, 144; blue, 226 }  ,draw opacity=1 ]   (64,127) -- (123,174) ;
\draw [color={rgb, 255:red, 74; green, 144; blue, 226 }  ,draw opacity=1 ]   (116,274) -- (69.54,107.93) ;
\draw [shift={(69,106)}, rotate = 74.37] [color={rgb, 255:red, 74; green, 144; blue, 226 }  ,draw opacity=1 ][line width=0.75]    (10.93,-3.29) .. controls (6.95,-1.4) and (3.31,-0.3) .. (0,0) .. controls (3.31,0.3) and (6.95,1.4) .. (10.93,3.29)   ;
\draw [color={rgb, 255:red, 74; green, 144; blue, 226 }  ,draw opacity=1 ]   (116,274) -- (176.33,103.88) ;
\draw [shift={(177,102)}, rotate = 109.53] [color={rgb, 255:red, 74; green, 144; blue, 226 }  ,draw opacity=1 ][line width=0.75]    (10.93,-3.29) .. controls (6.95,-1.4) and (3.31,-0.3) .. (0,0) .. controls (3.31,0.3) and (6.95,1.4) .. (10.93,3.29)   ;
\draw [color={rgb, 255:red, 74; green, 144; blue, 226 }  ,draw opacity=1 ]   (494,277) -- (653.75,76.56) ;
\draw [shift={(655,75)}, rotate = 128.56] [color={rgb, 255:red, 74; green, 144; blue, 226 }  ,draw opacity=1 ][line width=0.75]    (10.93,-3.29) .. controls (6.95,-1.4) and (3.31,-0.3) .. (0,0) .. controls (3.31,0.3) and (6.95,1.4) .. (10.93,3.29)   ;
\draw [color={rgb, 255:red, 74; green, 144; blue, 226 }  ,draw opacity=1 ]   (494,277) -- (559.44,52.92) ;
\draw [shift={(560,51)}, rotate = 106.28] [color={rgb, 255:red, 74; green, 144; blue, 226 }  ,draw opacity=1 ][line width=0.75]    (10.93,-3.29) .. controls (6.95,-1.4) and (3.31,-0.3) .. (0,0) .. controls (3.31,0.3) and (6.95,1.4) .. (10.93,3.29)   ;
\draw [color={rgb, 255:red, 74; green, 144; blue, 226 }  ,draw opacity=1 ]   (494,277) -- (456.33,54.97) ;
\draw [shift={(456,53)}, rotate = 80.37] [color={rgb, 255:red, 74; green, 144; blue, 226 }  ,draw opacity=1 ][line width=0.75]    (10.93,-3.29) .. controls (6.95,-1.4) and (3.31,-0.3) .. (0,0) .. controls (3.31,0.3) and (6.95,1.4) .. (10.93,3.29)   ;
\draw [color={rgb, 255:red, 74; green, 144; blue, 226 }  ,draw opacity=1 ]   (494,277) -- (353.14,72.65) ;
\draw [shift={(352,71)}, rotate = 55.42] [color={rgb, 255:red, 74; green, 144; blue, 226 }  ,draw opacity=1 ][line width=0.75]    (10.93,-3.29) .. controls (6.95,-1.4) and (3.31,-0.3) .. (0,0) .. controls (3.31,0.3) and (6.95,1.4) .. (10.93,3.29)   ;
\draw [color={rgb, 255:red, 74; green, 144; blue, 226 }  ,draw opacity=1 ]   (462,91) -- (549,86) ;
\draw [color={rgb, 255:red, 74; green, 144; blue, 226 }  ,draw opacity=1 ]   (376,107) -- (462,91) ;
\draw [color={rgb, 255:red, 74; green, 144; blue, 226 }  ,draw opacity=1 ]   (549,86) -- (632,105) ;
\draw    (494,277) -- (463.44,139.95) ;
\draw [shift={(463,138)}, rotate = 77.43] [color={rgb, 255:red, 0; green, 0; blue, 0 }  ][line width=0.75]    (10.93,-3.29) .. controls (6.95,-1.4) and (3.31,-0.3) .. (0,0) .. controls (3.31,0.3) and (6.95,1.4) .. (10.93,3.29)   ;
\draw    (494,277) -- (422.64,65.89) ;
\draw [shift={(422,64)}, rotate = 71.32] [color={rgb, 255:red, 0; green, 0; blue, 0 }  ][line width=0.75]    (10.93,-3.29) .. controls (6.95,-1.4) and (3.31,-0.3) .. (0,0) .. controls (3.31,0.3) and (6.95,1.4) .. (10.93,3.29)   ;
\draw    (494,277) -- (500.94,60) ;
\draw [shift={(501,58)}, rotate = 91.83] [color={rgb, 255:red, 0; green, 0; blue, 0 }  ][line width=0.75]    (10.93,-3.29) .. controls (6.95,-1.4) and (3.31,-0.3) .. (0,0) .. controls (3.31,0.3) and (6.95,1.4) .. (10.93,3.29)   ;
\draw    (494,277) -- (537.45,123.92) ;
\draw [shift={(538,122)}, rotate = 105.85] [color={rgb, 255:red, 0; green, 0; blue, 0 }  ][line width=0.75]    (10.93,-3.29) .. controls (6.95,-1.4) and (3.31,-0.3) .. (0,0) .. controls (3.31,0.3) and (6.95,1.4) .. (10.93,3.29)   ;
\draw    (494,277) -- (574.01,136.74) ;
\draw [shift={(575,135)}, rotate = 119.7] [color={rgb, 255:red, 0; green, 0; blue, 0 }  ][line width=0.75]    (10.93,-3.29) .. controls (6.95,-1.4) and (3.31,-0.3) .. (0,0) .. controls (3.31,0.3) and (6.95,1.4) .. (10.93,3.29)   ;
\draw    (468,167) -- (443,127) ;
\draw    (443,127) -- (498,119) ;
\draw    (498,119) -- (538,122) ;
\draw    (538,122) -- (556,168) ;
\draw    (468,167) -- (556,168) ;
\draw [color={rgb, 255:red, 74; green, 144; blue, 226 }  ,draw opacity=1 ]   (491,205) -- (632,105) ;
\draw [color={rgb, 255:red, 74; green, 144; blue, 226 }  ,draw opacity=1 ]   (376,107) -- (491,205) ;
\draw [color={rgb, 255:red, 74; green, 144; blue, 226 }  ,draw opacity=1 ]   (494,277) -- (491.06,175) ;
\draw [shift={(491,173)}, rotate = 88.35] [color={rgb, 255:red, 74; green, 144; blue, 226 }  ,draw opacity=1 ][line width=0.75]    (10.93,-3.29) .. controls (6.95,-1.4) and (3.31,-0.3) .. (0,0) .. controls (3.31,0.3) and (6.95,1.4) .. (10.93,3.29)   ;
\draw [color={rgb, 255:red, 74; green, 144; blue, 226 }  ,draw opacity=1 ] [dash pattern={on 0.84pt off 2.51pt}]  (235,18) .. controls (173.31,17.01) and (147.26,46.7) .. (145.03,120.88) ;
\draw [shift={(145,122)}, rotate = 271.53] [fill={rgb, 255:red, 74; green, 144; blue, 226 }  ,fill opacity=1 ][line width=0.08]  [draw opacity=0] (12,-3) -- (0,0) -- (12,3) -- cycle    ;
\draw [color={rgb, 255:red, 0; green, 0; blue, 0 }  ,draw opacity=1 ] [dash pattern={on 0.84pt off 2.51pt}]  (59,14) .. controls (106.76,14) and (134.72,17.96) .. (133.03,91.88) ;
\draw [shift={(133,93)}, rotate = 271.53] [fill={rgb, 255:red, 0; green, 0; blue, 0 }  ,fill opacity=1 ][line width=0.08]  [draw opacity=0] (12,-3) -- (0,0) -- (12,3) -- cycle    ;
\draw [color={rgb, 255:red, 74; green, 144; blue, 226 }  ,draw opacity=1 ] [dash pattern={on 0.84pt off 2.51pt}]  (590,23) .. controls (528.31,22.01) and (517.11,31.9) .. (515.03,105.88) ;
\draw [shift={(515,107)}, rotate = 271.53] [fill={rgb, 255:red, 74; green, 144; blue, 226 }  ,fill opacity=1 ][line width=0.08]  [draw opacity=0] (12,-3) -- (0,0) -- (12,3) -- cycle    ;
\draw [color={rgb, 255:red, 0; green, 0; blue, 0 }  ,draw opacity=1 ] [dash pattern={on 0.84pt off 2.51pt}]  (403,38) .. controls (454.74,38.99) and (485.69,62.76) .. (490.92,138.85) ;
\draw [shift={(491,140)}, rotate = 266.28] [fill={rgb, 255:red, 0; green, 0; blue, 0 }  ,fill opacity=1 ][line width=0.08]  [draw opacity=0] (12,-3) -- (0,0) -- (12,3) -- cycle    ;

\draw (58,31.4) node [anchor=north west][inner sep=0.75pt]  [xscale=0.75,yscale=0.75]  {$e_{1}$};
\draw (209,26.4) node [anchor=north west][inner sep=0.75pt]  [xscale=0.75,yscale=0.75]  {$e_{2}$};
\draw (301,128.4) node [anchor=north west][inner sep=0.75pt]  [xscale=0.75,yscale=0.75]  {$e_{3}$};
\draw (182,189.4) node [anchor=north west][inner sep=0.75pt]  [xscale=0.75,yscale=0.75]  {$e_{4}$};
\draw (15,146.4) node [anchor=north west][inner sep=0.75pt]  [xscale=0.75,yscale=0.75]  {$e_{5}$};
\draw (21,54.4) node [anchor=north west][inner sep=0.75pt]  [color={rgb, 255:red, 74; green, 144; blue, 226 }  ,opacity=1 ,xscale=0.75,yscale=0.75]  {$\tilde{e}_{1}$};
\draw (70,83.4) node [anchor=north west][inner sep=0.75pt]  [color={rgb, 255:red, 74; green, 144; blue, 226 }  ,opacity=1 ,xscale=0.75,yscale=0.75]  {$\tilde{e}_{2}$};
\draw (185,83.4) node [anchor=north west][inner sep=0.75pt]  [color={rgb, 255:red, 74; green, 144; blue, 226 }  ,opacity=1 ,xscale=0.75,yscale=0.75]  {$\tilde{e}_{3}$};
\draw (282,74.4) node [anchor=north west][inner sep=0.75pt]  [color={rgb, 255:red, 74; green, 144; blue, 226 }  ,opacity=1 ,xscale=0.75,yscale=0.75]  {$\tilde{e}_{4}$};
\draw (129,130.4) node [anchor=north west][inner sep=0.75pt]  [color={rgb, 255:red, 74; green, 144; blue, 226 }  ,opacity=1 ,xscale=0.75,yscale=0.75]  {$\tilde{e}_{5}$};
\draw (341,92.4) node [anchor=north west][inner sep=0.75pt]  [color={rgb, 255:red, 74; green, 144; blue, 226 }  ,opacity=1 ,xscale=0.75,yscale=0.75]  {$m_{1}$};
\draw (436,69.4) node [anchor=north west][inner sep=0.75pt]  [color={rgb, 255:red, 74; green, 144; blue, 226 }  ,opacity=1 ,xscale=0.75,yscale=0.75]  {$m_{2}$};
\draw (561,65.4) node [anchor=north west][inner sep=0.75pt]  [color={rgb, 255:red, 74; green, 144; blue, 226 }  ,opacity=1 ,xscale=0.75,yscale=0.75]  {$m_{3}$};
\draw (634,101.4) node [anchor=north west][inner sep=0.75pt]  [color={rgb, 255:red, 74; green, 144; blue, 226 }  ,opacity=1 ,xscale=0.75,yscale=0.75]  {$m_{4}$};
\draw (498,178.4) node [anchor=north west][inner sep=0.75pt]  [color={rgb, 255:red, 74; green, 144; blue, 226 }  ,opacity=1 ,xscale=0.75,yscale=0.75]  {$m_{5}$};
\draw (400,66.4) node [anchor=north west][inner sep=0.75pt]  [color={rgb, 255:red, 0; green, 0; blue, 0 }  ,opacity=1 ,xscale=0.75,yscale=0.75]  {$\tilde{m}_{1}$};
\draw (474,54.4) node [anchor=north west][inner sep=0.75pt]  [color={rgb, 255:red, 0; green, 0; blue, 0 }  ,opacity=1 ,xscale=0.75,yscale=0.75]  {$\tilde{m}_{2}$};
\draw (509,122.4) node [anchor=north west][inner sep=0.75pt]  [color={rgb, 255:red, 0; green, 0; blue, 0 }  ,opacity=1 ,xscale=0.75,yscale=0.75]  {$\tilde{m}_{3}$};
\draw (564,151.4) node [anchor=north west][inner sep=0.75pt]  [color={rgb, 255:red, 0; green, 0; blue, 0 }  ,opacity=1 ,xscale=0.75,yscale=0.75]  {$\tilde{m}_{4}$};
\draw (472,143.4) node [anchor=north west][inner sep=0.75pt]  [color={rgb, 255:red, 0; green, 0; blue, 0 }  ,opacity=1 ,xscale=0.75,yscale=0.75]  {$\tilde{m}_{5}$};
\draw (241,9) node [anchor=north west][inner sep=0.75pt]  [xscale=0.75,yscale=0.75] [align=left] {\textcolor[rgb]{0.29,0.56,0.89}{BPS-BB}};
\draw (6,5) node [anchor=north west][inner sep=0.75pt]  [color={rgb, 255:red, 0; green, 0; blue, 0 }  ,opacity=1 ,xscale=0.75,yscale=0.75] [align=left] {BPS-B};
\draw (595,17) node [anchor=north west][inner sep=0.75pt]  [xscale=0.75,yscale=0.75] [align=left] {\textcolor[rgb]{0.29,0.56,0.89}{BPS-B}};
\draw (152,256) node [anchor=north west][inner sep=0.75pt]  [xscale=0.75,yscale=0.75] [align=left] {Electric cone};
\draw (537,256) node [anchor=north west][inner sep=0.75pt]  [xscale=0.75,yscale=0.75] [align=left] {Magnetic cone};
\draw (350,33) node [anchor=north west][inner sep=0.75pt]  [color={rgb, 255:red, 0; green, 0; blue, 0 }  ,opacity=1 ,xscale=0.75,yscale=0.75] [align=left] {BPS-BB};

\end{tikzpicture}
    \caption{A schematic illustration of the duality between cones. Here, we have depicted the extremal rays of the relevant cones. In particular, between the electric and magnetic BPS cones, the blue cones are dual to each other, while the gray cones are dual to each other as well all via the electric/magnetic duality. Namely, $e_i.\tilde{m}_j\geq 0$ and $\tilde{e}_1.m_j\geq 0$.
    }
    \label{fig:dual cones}
\end{figure}

In the supergravity description, the BPS-BB cones are naturally conical hulls of the corresponding scalar moduli spaces by the attractor mechanism. For example, in 6d $\cN=(1,0)$ supergravity theories, the BPS black 1-brane cone is the cone over the tensor branch cone. In 5d $\cN=1$ supergravity theories, the BPS black 0-brane cone is the cone over the vector-multiplet moduli space written in the dual attractor coordinates (also referred to as dual coordinates, consequently the cone of dual coordinates), while the BPS black 1-brane cone is the cone over the same moduli space written in the original very-special coordinates~\cite{Alim:2021vhs}.

From the supersymmetry algebra, if a BPS-BB with charge $q$ and an electric/magnetic dual BPS brane with charge $p$ are BPS compatible, the mass of brane $p$ evaluated at the attractor point of $q$ is given by the neutral electric-magnetic pairing $\langle q,p\rangle$ with a positive overall constant depending on $q$. This in particular implies that $\langle q,p\rangle\geq 0$. In other words, the natural pairing on $\cC_{\rm BPS-B}$ and $\cC_{\rm BPS-BB}$ is non-negative. From this, we know that $\cC_{\rm BPS-B}\subseteq \cC_{\rm BPS-BB}^\vee$ and $\cC_{\rm BPS-BB}\subseteq \cC_{\rm BPS-B}^\vee$. The converse direction comes from the physical interpretation of boundaries of moduli space.  The vector multiplet/tensor multiplet moduli space can end only because some BPS sector becomes massless or tensionless. Hence, the walls (codimension-one boundaries) of $\cC_{\rm BPS-BB}$ are exactly vertices (generators) of $\cC_{\rm BPS-B}$. This implies that these two cones are exactly dual to each other. Importantly, the duality is not concerned of the size of the extremal black branes. Hence, we arrive at our second conjecture.
\begin{tcolorbox}
\begin{conj}
\label{conj:cone dual conjecture}
    In any consistent $d$-dimensional supersymmetric quantum gravitational theory with $d\geq 4$, the BPS ($p$-)brane cone is dual to the BPS black ($(d-4-p)$-)brane cone preserving the same supersymmetry when these notions do not depend on the choice of moduli,
    \begin{equation}
        \cC^{(p)}_{\rm BPS-B}=\big(\cC^{(d-4-p)}_{\rm BPS-BB}\big)^{\vee}\,,
    \end{equation}
    and the BPS black ($p$-)brane cone is dual to the BPS ($(d-4-p)$-)brane cone
    \begin{equation}
        \cC^{(p)}_{\rm BPS-BB}=\big(\cC^{(d-4-p)}_{\rm BPS-B}\big)^{\vee}\,.
    \end{equation}
\end{conj}
\end{tcolorbox}
In the present work, we verify this conjecture in theories where preserving the same supersymmetry is automatic (up to charge conjugation), similar to what we have seen for the 5d $\cN=1$ supersymmetry discussed in \cref{sec:BPS compatibility}. Namely, we examine this duality either among BPS-Bs and BPS-BBs or among $\overline{\rm BPS}$-Bs and $\overline{\rm BPS}$-BBs. However, as the latter duality is necessarily implies by the former, in these theories, it is sufficient to examine the cone duals among BPS compatible BPS-Bs.
Furthermore, we restrict to theories and BPS-Bs of which the $\cC_{\rm BPS-B}$ defined among BPS-Bs preserving the same supersymmetry is moduli independent.


\subsection{Landscape of BPS branes in quantum gravity}
\label{sec:tension properties}

With the above examples, we observe four distinct behaviors of the tension of BPS branes across the moduli space of quantum gravitational theories that are related to the definitions of $\cC_{\rm BPS-B}$ (\cref{def:BPS cone}) and $\cC_{\rm BPS-BB}$ (more relevantly \cref{def:BPS black cone 2}).
\begin{enumerate}
    \item When $q\in \cC_{\rm BPS-BB}\backslash \partial \cC_{\rm BPS-BB}$, the tension of the BPS brane has a minimum in the interior of the moduli space.
    \item When $q\in \partial \cC_{\rm BPS-BB}$, the tension of the BPS brane is strictly positive in the interior of the moduli space and it is minimized along the boundary of the moduli space. Additionally, this minimum tension is a stable minimum as $|\nabla T(q)|=0$ where $\nabla$ is taken with respect to the metric on the moduli space. If $q\notin \partial \cC_{\rm BPS-B}$, the minimum tension is strictly positive.
    \item When $q\in \cC_{\rm BPS-B}\backslash \cC_{\rm BPS-BB}$ and $q\notin \partial \cC_{\rm BPS-B}\cup \partial \cC_{\rm BPS-BB}$, the tension of the BPS brane minimizes along the boundary of the moduli space, which we denote the location of the minimum tension as $\phi_{\rm min}=\mathrm{argmin}_\phi T(q;\phi)\in \partial \cM$. However, the stability of this minimum tension depends the behavior of the tensions for BPS branes on the boundary of $\cC_{\rm BPS-B}$. More explicitly, the minimum value of $T(q;\phi_{\rm min})$ is determined by the tension of a BPS-B with charge $q'\in \partial{\cC_{\rm BPS-BB}}$ evaluated along the same boundary of the moduli space where $T(q')$ also is minimized, namely $T(q;\phi_{\rm min})=T(q';\phi_{\rm min})\equiv T_{\rm min}(q')$. On the other hand, the gradient of the tension of this BPS brane is controlled by a BPS-B with charge $q''\in \partial \cC_{\rm BPS-B}$ whose tension vanishes along the same boundary of moduli space where $T(q)$ is minimized, namely when $\lim_{\phi\to\phi_{\rm min}}T(q'';\phi)=0$, then $\nabla T(q,\phi_{\rm min})=\nabla T(q',\phi_{\rm min})$.
    \item When $q\in \partial\cC_{\rm BPS-B}$, the tension for these BPS branes remains positive in the interior of moduli space and vanishes in the limit as we approach a boundary of moduli space.
\end{enumerate}

The first and third class of charge states are mutually exclusive, whereas the second and fourth class of charge states need not be. The boundaries of moduli spaces can be categorized into infinite-distance boundaries and finite-distance boundaries. By the distance conjecture~\cite{Ooguri:2006in}, when the BPS branes in the fourth class vanish along an infinite-distance boundary, their tensions scale as $\sim e^{-\alpha \Delta}$ where $\Delta$ denotes the traversed geodesic distance in moduli space and $\alpha\sim \cO(1)$. On the other hand, the BPS branes in the fourth class whose tension vanishes along a finite-distance boundary generally does not exhibit the behavior of $\nabla_{\phi\to\partial \cM}\nabla T\to 0$.
Therefore, a BPS brane existing on both the boundary of $\cC_{\rm BPS-BB}$ and on the boundary of $\cC_{\rm BPS-B}$ must have a minimum tension along an infinite-distance boundary of moduli space.

For BPS $p$-branes, the attractor solution exists if the tension of such a BPS $p$-brane has a minimum in moduli space. Therefore, why does the second and third class differ from BPS black brane solutions? The answer lies in the fact that all such minima for the second and third classes of BPS branes are located on the boundary of field space. As we have seen from the distance conjecture, when we approach the infinite-distance boundaries of field space, we expect to encounter a tower of light states emerging. This signals the breakdown of the EFT description of quantum gravity. Hence, the attractor solution is no longer reliable in these limits. 
We may also encounter finite-distance boundaries, for example those in F-theory and M-theory compactified on Calabi--Yau manifolds~\cite{Witten:1996qb}. In these cases, we similarly encounter massless degrees of freedom appearing along these finite-distance boundaries such as the flop wall, the CFT wall, or the $SU(2)$ wall in 5d $\cN=1$ quantum gravities~\cite{Alim:2021vhs,Gendler:2022ztv}. Hence, the EFT description of quantum gravity similarly breaks down leading to the conclusion that the attractor solution obtained at these values in the field space are no longer reliable as well. Nevertheless, from a top-down perspective, we can still compute the tensions of BPS branes by knowing the volume of the minimum-volume representatives of a calibrated cycle class as a function of the appropriate parameters parametrizing the moduli space of the lower-dimensional quantum gravity theory.

\section{Quantum gravity with 32 supercharges}
\label{sec:QG with 32}

In this section, we consider our conjectures in 11d M-theory and 10d type II string theories, of which the bulk spacetime preserves 32 supercharges. In 11d M-theory, the BPS branes we consider are the stack of coincident M2-branes or M5-branes. The supergravity solution corresponding to a large number of such BPS branes in spacetime lead to an extremal black brane solution with a smooth horizon of finite size. This is similar to D3-branes in the type IIB string theory. Thus, these BPS branes are the simplest examples to test our conjectures in \cref{sec:conj}, which will be discussed in this section.

\subsection{11d M-theory}
\label{sec:M-theory}

Let us first consider 11d M-theory. The supersymmetry algebra for maximal supersymmetry in 11d is
\begin{equation}
    \{Q_\alpha,Q_\beta\}=(C\Gamma^\mu)_{\alpha\beta}P^\mu+\frac12(C\Gamma^{\mu_1\mu_2})_{\alpha\beta} Z_{\mu_1\mu_2}+\frac{1}{5!}(C\Gamma^{\mu_1\dots\mu_5})_{\alpha\beta}Z_{\mu_1\dots\mu_5}\,,
\end{equation}
where we have two central extensions via a 2-form central charge $Z_{\mu_1\mu_2}$ and a 5-form central charge $Z_{\mu_1\dots\mu_5}$. These correspond to charges carried by the M2-branes and M5-branes in M-theory, respectively. 
To illustrate our definitions and conjectures, we will focus on stacks of coincident M2-branes, or M5-branes.
The preserved supersymmetry of the configurations of isolated M2-branes extending in the (012) directions and of isolated M5-branes extending in the (012345) directions separately satisfy
\begin{equation}
\label{eq:M2 M5 spinors}
    \Gamma^{012}\epsilon=\epsilon\,,\qquad \Gamma^{012345}\epsilon=\epsilon\,,
\end{equation}
where $\epsilon$ denotes the 32-component real Majorana spinor.
Hence, these correspond to $1/2$-BPS configurations in the bulk theory. In this case, the BPS brane cone of M2-branes and M5-branes are simply
\begin{equation}
    \cC_{\rm BPS-B}^{(\rm M2)}=\mathbb{R}_+\,,\qquad \cC_{\rm BPS-B}^{(\rm M5)}=\mathbb{R}_+\,,
\end{equation}
where the integral charges $N\in \cC_{\rm BPS-B}\cap \mathbb{Z}$ correspond to the number of such parallel BPS branes in spacetime. Additionally, there is a corresponding anti-BPS brane cone associated to the M2-branes and M5-branes where \cref{eq:M2 M5 spinors} become $\Gamma\epsilon=-\epsilon$. In this case, the anti-BPS brane cone of M2-branes and M5-branes are $\cC_{{\overline{\rm BPS}\rm -B}}^{(\rm M2)}=\mathbb{R}_-$ and $\cC_{{\overline{\rm BPS}\rm -B}}^{(\rm M5)}=\mathbb{R}_-$. The negativity of the cones for anti-BPS branes arises due to anti-BPS branes having opposite orientation to the BPS branes.

In 11d M-theory, there are no moduli. The tensions of an M2-brane and an M5-brane in Einstein frame are
\begin{equation}
    T_{\rm M2}=2^{1/3}\pi^{2/3}
    M_{\pl}^3\,,\qquad 
    T_{\rm M5}=\bigg(\frac{\pi}{2}\bigg)^{1/3}
    M_{\pl}^6\,,
\end{equation}
where $M_\pl$ denotes the 11d Planck mass.\footnote{Here, we use $M_\pl^9=(2\kappa_{11}^{2})^{-1}$ with $(2\kappa_{11}^2)=(2\pi)^8 l_\pl^9$.} The tension of $N$ parallel M2-branes and M5-branes are $NT_{\rm M2}$ and $NT_{\rm M5}$, respectively. Therefore, from \cref{def:BPS black cone 2}, we can identify all such choices of $N$ as the BPS black brane cone.

The above definition coincides with the BPS black cone identified via \cref{def:BPS black cone}. To see this, we can also study the extremal solutions of a stack of coincident M2-branes or M5-branes in 11d maximal supergravity. To start, for a stack of coincident $N$ M2-branes extending in the $(012)$ directions in spacetime, the extremal solution in Einstein frame is
\begin{equation}
\label{eq:M2brane metric}
    \dd s^2=\bigg(1+\frac{L^6}{r^6}\bigg)^{-2/3}\big(-\dd t^2+\dd y_1^2+\dd y_2^2\big)+\bigg(1+\frac{L^6}{r^6}\bigg)^{1/3}\big(\dd r^2+r^2\dd \Omega_7^2\big)\,,
\end{equation}
where $L^6=2^5\pi^2l_{11}^6 N$. The warp factors can be deduced by solving for the harmonic solution of $\nabla^2_{\mathbb{R}^8} H=0$. In this case, as the curvature is inversely proportional to $N$, as we take the large $N$ limit, we can reliably trust the above ans\"atz. In the near-horizon region $r\ll L$, the geometry becomes the familiar $AdS_4\times S^7$ where 
\begin{equation}
    2R_{AdS_4}=R_{S^7}=L\,.
\end{equation}
Hence, the BPS black brane cone of M2-branes is simply the ray
\begin{equation}
    \cC_{\rm BPS-BB}^{\rm (M2)}=\mathbb{R}_+\,.
\end{equation}

Similarly, for a stack of $N$ M5-branes, the metric becomes
\begin{equation}
    \dd s^2=\bigg(1+\frac{L^3}{r^3}\bigg)^{-1/3}\big(-\dd t^2+\dd\vec{y}_{5}^2\big)+\bigg(1+\frac{L^3}{r^3}\bigg)^{2/3}\big(\dd r^2+r^2\dd \Omega_4^2\big)\,,
\end{equation}
where $L^3=\pi N l_{11}^3$.
The curvatures become small in the limit of large $N$, and hence the solution is reliable. In the near-horizon limit where $r\ll L$, the geometry becomes $AdS_7\times S^4$ with
\begin{equation}
    R_{AdS_7}=2R_{S^4}=L\,.
\end{equation} 
Therefore, the BPS black brane cone of M5-branes is again
\begin{equation}
    \cC_{\rm BPS-BB}^{(\rm M5)}=\mathbb{R}_+\,.
\end{equation}

As emphasized in \cref{sec:BPS compatibility}, in the present setting, there is a choice of $\mathbb{Z}_2$ supersymmetry of which the M-branes preserve. This is associated to their charge orientation and corresponds to either a BPS M-brane or an anti-BPS M-brane. We can see that by the top-down construction of the BPS black brane solutions in $\cC_{\rm BPS-BB}$, regardless of which supersymmetry the BPS black brane preserves, at all such integral sites, namely $\cC_{\rm BPS-BB}\cap \mathbb{Z}$, there is a BPS state in the spectrum at that given charge. Hence, \cref{conj:BPS completeness} remains satisfied across all BPS black brane cones. 
The completeness of BPS states in $\cC_{\rm BPS-BB}$ is also naturally anticipated from the holographic correspondence where for each positive integer $N$, the corresponding near-horizon geometries $AdS_4\times S^7$ and $AdS_7\times S^4$ have dual descriptions in terms of the worldvolume theories of $N$ coincident M2-branes and M5-branes, respectively.

However, when discussing the duality among BPS cones, we can see that the choice of supersymmetry preserved along the BPS brane worldvolumes becomes relevant. Namely, the electric-magnetic duality between cones in \cref{conj:cone dual conjecture} is indeed only satisfied among cones preserving the same supersymmetry. In other words, among the cones for BPS branes, we have the following duality
\begin{equation}
    \cC_{\rm BPS-BB}^{(\rm M2)}=\big(\cC_{\rm BPS-B}^{(\rm M5)}\big)^\vee\,,\qquad \cC_{\rm BPS-B}^{(\rm M2)}=\big(\cC_{\rm BPS-BB}^{(\rm M5)}\big)^\vee\,.
\end{equation}
And identically for the anti-BPS branes, the duality becomes $\cC_{\overline{\rm BPS}\rm -BB}^{(\rm M2)}=\big(\cC_{\overline{\rm BPS}\rm -B}^{(\rm M5)}\big)^\vee$ and $\cC_{\overline{\rm BPS}\rm -B}^{(\rm M2)}=\big(\cC_{\overline{\rm BPS}\rm -BB}^{(\rm M5)}\big)^\vee$. Among the BPS compatible pairings, the electric-magnetic duality among these cones is realized as the self-duality of the corresponding ray, namely $\mathbb{R}_+$ in the BPS sector and $\mathbb{R}_-$ in the anti-BPS sector.

\subsection{10d Type II string theories}

Next, let us consider 10d type II string theories. In this case, we again only focus on the simplest case of a stack of coincident BPS branes of the same type in the theory, namely only of one type among all D$p$-branes, F1-strings, or NS5-branes. This is sufficient in illustrating our conjectures. 
Hence, the relevant BPS brane cones are
\begin{equation}
    \cC_{\rm BPS-B}^{(\rm F1)}=\cC_{\rm BPS-B}^{(\rm NS5)}=\cC_{\rm BPS-B}^{(\mathrm{D}p)}=\mathbb{R}_+\,,\qquad \cC_{\overline{\rm BPS}\rm -B}^{(\rm F1)}=\cC_{\overline{\rm BPS}\rm -B}^{(\rm NS5)}=\cC_{\overline{\rm BPS}\rm -B}^{(\mathrm{D}p)}=\mathbb{R}_-\,.
\end{equation}
Here, $\cC_{\rm BPS-B}^{(\mathrm{D}p)}$ denotes the BPS brane cone associated to a specific D$p$-brane with positively oriented charge, as opposed to $\cC_{\overline{\rm BPS}\rm -B}^{({\rm D}p)}$ which denotes the cone of $\overline{\mathrm{D}p}$-branes.
The dilaton $\phi$ couples to the relevant gauge field strengths, to which the D$p$-branes couple to electrically, as $e^{a_p\phi}F^2_{p+2}$ where $a_p=\frac{3-p}{2}$ in the 10d type II supergravity action in Einstein frame.
The single-charge extremal solutions sourced by F1-strings, NS5-branes, and all D$p$-branes, except the special case of D3-branes, all lead to singular black brane solutions where radius of the transverse sphere measured in Einstein frame scales as $R_{S^{8-p}}\sim r^{\frac{(p-3)^2}{16}}$ in the near-horizon limit and vanishes at the would-be horizon $r=0$.\footnote{In string frame, the near-horizon geometry of a large stack of parallel NS5-branes develops a linear dilaton profile. However, in Einstein frame, such an extremal solution nevertheless has a singular horizon. This is related to the divergence of the dilaton at the origin in the string frame.} For D3-branes, the 5-form field strength appears in the action without a coupling to the dilaton. Therefore, the effective black D3-brane potential is independent of the moduli. This would imply that the attractor mechanism $\partial_\phi V_3=0$ is trivially satisfied where $V_3\sim N^2$ for a stack of $N$ D3-branes. This is also related to the tension of D3-branes being moduli independent, which we have discussed previously in \cref{sec:BPS brane}.
More explicitly, to see the smooth horizon of D3-branes in type IIB supergravity, the extremal solution of a stack of parallel $N$ D3-branes extending in the $(0123)$ directions of spacetime in Einstein frame
\begin{equation}
    \dd s_{\rm D3}^2=\bigg(1+\frac{L^4}{r^4}\bigg)^{-1/2}\big(-\dd t^2+\dd y_1^2+\dd y_2^2+\dd y_3^2\big)+\bigg(1+\frac{L^4}{r^4}\bigg)^{1/2}\big(\dd r^2+r^2 \dd\Omega_5\big)\,,
\end{equation}
where $L^4=4\pi g_s N \alpha'{}^2$ for a stack of $N$ D3-branes. 
In the near horizon limit $r\ll L$, 
we encounter the familiar $AdS_5\times S^5$ geometry with
\begin{equation}
    R_{AdS_5}=R_{S^5}=L\,.
\end{equation}
Hence, from the supergravity perspective, a sufficiently large stack of D3-branes leads to an extremal black brane with a smooth horizon of finite size.
Hence, the BPS black brane cone for D3-branes is
\begin{equation}
    \cC_{\rm BPS-BB}^{(\rm D3)}=\mathbb{R}_+\,.
\end{equation}
Similar to the discussion above for M-branes in \cref{sec:M-theory}, there are also equivalent extremal solutions for the $\overline{\rm D3}$-branes where the BPS black brane cone is $\cC_{\overline{\rm BPS}\rm -BB}^{(D3)}=\mathbb{R}_-$. 
Again, from the top-down perspective of type IIB string theory, all such integral $N$ sitting at the intersection between $\cC_{\rm BPS-BB}^{(\rm D3)}\cap \mathbb{Z}$ are occupied by a BPS state consisting of $N$ parallel D3-branes extending along the $(0123)$ directions in spacetime. (The same is true for $\cC_{\overline{\rm BPS}\rm-BB}\cap \mathbb{Z}$.) This is similarly anticipated from the holographic correspondence. Hence, again, \cref{conj:BPS completeness} is satisfied for D3-branes in type IIB string theory.

Similar to the electric-magnetic pairing between M2-branes and M5-branes, the supersymmetry preserved along D3-branes is a $\mathbb{Z}_2$ choice of orientation, corresponding to a choice of positive or negative charge. As D3-branes are self-dual via electric-magnetic duality, the cone duality in \cref{conj:cone dual conjecture} is between the relevant BPS cones preserving the same supersymmetry. Namely, the cone duality is
\begin{equation}
    \cC_{\rm BPS-B}^{(\rm D3)}=\big(\cC_{\rm BPS-BB}^{(\rm D3)}\big)^\vee\,,\qquad \cC_{\overline{\rm BPS}\rm -B}^{(\rm D3)}=\big(\cC_{\overline{\rm BPS}\rm -BB}^{(\rm D3)}\big)^\vee\,.
\end{equation}

\section{F-theory on elliptic Calabi--Yau threefolds}
\label{sec:ECalabi--Yau threefolds}

In this section, we provide evidence for our conjectures for the specific case of F-theory on elliptic Calabi--Yau threefolds, leading to ${\cal N}=(1,0)$ supsersymmetric theories in $d=6$. In this geometric setup, we prove both our \cref{conj:BPS completeness} and \cref{conj:cone dual conjecture}. Lastly, we study a variety of base surfaces to illustrate the intricate behaviors of BPS branes.

To start, a general 6d $\cN=(1,0)$ supergravity contains one gravity multiplet, $n_V$ vector multiplets, $n_H$ hypermultiplets, and $n_T$ tensor multiplets. However, these are not independent values as the cancellation of the gravitational anomaly in the theory requires $n_H-n_V+29n_T=273$, with additional constraints arising from the gauge and mixed anomaly cancellation.
The scalar fields in the tensor multiplets can be organized into a vector $J^\alpha\in\mathbb{R}^{1,n_T}$ with $J^0>0$ and $\frac12J\cdot J=1$. They parameterize the tensor moduli space of the theory
\begin{equation}
    \cM_T=SO(1,n_T)/SO(n_T)\,.
\end{equation}
The metric on this moduli space, restricted to $\frac12J.J=1$, is
\begin{equation}
    g_{\alpha\beta}=2J_\alpha J_\beta-\Omega_{\alpha\beta}\,,
\end{equation}
where $\Omega_{\alpha\beta}$ is an $SO(1,n_T)$-invariant bilinear form and $J_\alpha=\Omega_{\alpha\beta}J^\beta$.
The theory also contains two-form gauge potentials $B^\alpha$. Thus, in this theory, we have both electric and magnetic strings that couple to $B^\alpha$. For a BPS string of charge $q^\alpha$, its tension is $T\sim q^\alpha J_\alpha$.

When the 6d $\cN=(1,0)$ theory is constructed from F-theory compactified on an elliptic Calabi--Yau threefold, the above quantities can be derived from the geometry. Let $\pi:Y\to B$ be an elliptic Calabi--Yau threefold with smooth base surface $B$.  Then, the vector $J^\alpha$ is can be identified with the K\"ahler form $J$ of $B$. Choosing a basis of divisor classes $[D_\alpha]\in H^2(B,\mathbb{Z})$, we have
\begin{equation}
\label{eq:kahler form 6d}
    J=\sum_{\alpha=1}^{h^{1,1}(B)}t^\alpha[D_\alpha]\,,
\end{equation}
where $t^\alpha$ are the K\"ahler parameters. Here, $[D]$ denotes the cohomology class associated to a divisor $D$. When the K\"ahler parameters take values within the K\"ahler cone of the base surface, $t^\alpha\in K(B)$, the volume of all effective divisors and curves must be non-negative. Using the K\"ahler form above, the volume of the base surface can be computed as
\begin{equation}
\label{eq:6d constant slice}
    \cV=\frac12 C_{\alpha\beta}t^\alpha t^\beta\,,
\end{equation}
where $C_{\alpha\beta}=\int_B[D_\alpha]\wedge[D_\beta]$ is the intersection form on divisor classes in the base. Then, the tensor moduli space of the quantum gravitational theory is identified with the constant-volume slice ($\cV=1$) within $K(B)$.

\subsection{BPS strings and the Mori cone}

The spectrum of the 6d $\cN=(1,0)$ theory consists of D3-branes wrapping curves within $B$. Hence, the six-dimensional tensor/string charge lattice is realized geometrically by $H_2(B,\Z)$, or equivalently by divisor classes on $B$ under Poincar\'e duality. Within $H_2(B,\mathbb{R})$, the natural BPS string cone is the effective cone of curves, namely the Mori cone,
\begin{equation}
    \cC_{\rm BPS-B}:=M(B)\subset N_1(B)\,,
\end{equation}
where $N_1(B)$ is the real vector space of curve classes in $B$ modulo numerical equivalence.
The intersection form on the base gives the natural lattice pairing. Namely, suppose $q_{C},q_{C'}\in \cC_{\rm BPS-B}\cap H_2(B,\Z)$, where $q_C$ and $q_{C'}$ are the charges associated with the strings from D3-branes wrapping the curves $C$ and $C'$, respectively, then we have
\begin{equation}
    q_C\cdot q_{C'}=C.C'\in \Z\,,
\end{equation}
where $C.C'=\int_B [C]\wedge [C']$ is the intersection between the curves $C,C'$ within $B$.
In 6d, as the strings can either couple to the two-form gauge potential electrically or magnetically, the electric and magnetic charge lattice of strings are self-dual, $\Lambda_e=\Lambda_m$. 
This implies that by the electric-magnetic duality, the natural six-dimensional consistency requires the full charge lattice of strings to be self-dual (with respect to the Dirac pairing). 

The tension of a D3-brane wrapping a curve $C\in H_2(B,\Z)$ can be computed from geometry as
\begin{equation}
    T=\volume(C)\geq \int_{[C]} J=q_at^a=:|Z|\,,
\end{equation}
where the tension is related to the volume of the curves it wraps and $q_a$ denotes the quantized charges, namely the decomposition of $[C]=q_a[C_a]$ into the basis of curve classes dual to the basis of divisor classes used in \cref{eq:kahler form 6d}. When $[C]$ is an effective curve class, the resulting string in 6d is supersymmetric and the above inequality is saturated as we have discussed in \cref{sec:BPS compatibility}. Then, as previously stated, the constraints on the K\"ahler cone arise from imposing all such BPS strings in 6d must have non-negative tensions everywhere in $K(B)$.

\subsection{Black strings and the nef cone}
 
The physical tensor branch is controlled by the K\"ahler cone, and the closure of the K\"ahler cone is the nef cone $\Nef(B)$.  With the tensor moduli space and the BPS spectrum identified, we can now study the corresponding attractor mechanism for BPS black strings in these theories. For a string with charge $q$, the relevant ans\"atz is
\begin{equation}
    \dd s_6^2=H(r)^{-1}\big(-\dd t^2+\dd y^2\big)+H(r)\big(\dd r^2+r^2\dd\Omega_3^2\big)\,,
\end{equation}
where the solution is 
\begin{equation}
    K^\alpha(r):=H(r)J^\alpha(r)=J_\infty^\alpha+\frac{q^\alpha}{r^2}\,,
\end{equation}
with $J_\infty^\alpha$ denoting the asymptotic value of the tensor modulus. Restricting to the constant-volume slice, the harmonic function $H(r)$ is
\begin{equation}
    H(r)=\sqrt{\frac12\Omega_{\alpha\beta}K^\alpha(r)K^\beta(r)}\,.
\end{equation}
The scalar flow follows from the harmonic function
\begin{equation}
    J^\alpha(r)=\frac{J_\infty^\alpha+q^\alpha/r^2}{\sqrt{\frac12\Omega_{\gamma\beta}\left(J^\gamma_\infty+q^\gamma/r^2\right)\left(J_\infty^\beta+q^\beta/r^2\right)}}\,.
\end{equation}
In the near-horizon limit $r\to 0$, the charge term $q^\alpha/r^2$ dominates, leading to the attractor value
\begin{equation}
\label{eq:6d attractor point}
    J_*^\alpha=\frac{q^\alpha}{\sqrt{\frac12\Omega_{\gamma\beta}q^\gamma q^\beta}}\,.
\end{equation}
Thus, a necessary condition for a black string with charge $q$ to have a smooth horizon is that the charges must satisfy $q\cdot q>0$.
Equivalently, we can introduce the corresponding black string potential for a black string with charge $q$ as
\begin{equation}
    V_{\rm BS}=g_{\alpha\beta}q^\alpha q^\beta=2Z^2-q^2\,,
\end{equation}
where $Z=q^\alpha J_\alpha$.
The BPS attractor equation can be similarly obtained by extremizing $Z$ along the tensor moduli space
\begin{equation}
    \partial_i Z=0\,,
\end{equation}
where $\partial_i$ denotes the derivatives tangent to the constant-volume slice. 
For the solution to describe a black string with a smooth horizon, the attractor point must lie within the K\"ahler cone. 
Evaluating the central charge at \cref{eq:6d attractor point}, or equivalently, evaluating $V_{\rm BS}$ at the BPS attractor point, givse the tension of the extremal BPS black string
\begin{equation}
\label{eq:6d attractor solution}
    T^{\rm Ext}\sim \sqrt{2q\cdot q}\,.
\end{equation}

From this perspective, a regular black-string charge is naturally represented by a class compatible with the tensor-branch positivity cone.  From the geometric perspective, this is encoded by effective nef (equivalently semi-ample, in the standard base situations of interest) divisor classes. Therefore, we identify the BPS-BB cone with the nef cone
\begin{equation}
    \cC_{\rm BPS-BB}:=\mathrm{Nef}(B)\,.
\end{equation}

The natural cone duality is
\begin{equation}
    M(B)^\vee=\Nef(B)\,,
\end{equation}
which is the usual Kleiman duality for a projective surface. This is also the geometric expression of the duality between BPS-Bs and BPS-BBs in 6d.

\begin{figure}
    
\centering
\tikzset{every picture/.style={line width=0.75pt}} 

\begin{tikzpicture}[x=0.75pt,y=0.75pt,yscale=-0.75,xscale=0.75]

\draw [color={rgb, 255:red, 208; green, 2; blue, 27 }  ,draw opacity=1 ]   (329,275) -- (54.77,129.94) ;
\draw [shift={(53,129)}, rotate = 27.88] [color={rgb, 255:red, 208; green, 2; blue, 27 }  ,draw opacity=1 ][line width=0.75]    (10.93,-3.29) .. controls (6.95,-1.4) and (3.31,-0.3) .. (0,0) .. controls (3.31,0.3) and (6.95,1.4) .. (10.93,3.29)   ;
\draw [color={rgb, 255:red, 208; green, 2; blue, 27 }  ,draw opacity=1 ]   (329,275) -- (607.23,128.93) ;
\draw [shift={(609,128)}, rotate = 152.3] [color={rgb, 255:red, 208; green, 2; blue, 27 }  ,draw opacity=1 ][line width=0.75]    (10.93,-3.29) .. controls (6.95,-1.4) and (3.31,-0.3) .. (0,0) .. controls (3.31,0.3) and (6.95,1.4) .. (10.93,3.29)   ;
\draw  [color={rgb, 255:red, 208; green, 2; blue, 27 }  ,draw opacity=0 ][fill={rgb, 255:red, 208; green, 2; blue, 27 }  ,fill opacity=0.34 ] (593,137) -- (329,275) -- (67,137) -- cycle ;
\draw [color={rgb, 255:red, 208; green, 2; blue, 27 }  ,draw opacity=0.34 ][fill={rgb, 255:red, 208; green, 2; blue, 27 }  ,fill opacity=0.34 ]   (67,137) .. controls (86,92) and (230,71) .. (326,68) .. controls (422,65) and (575,99) .. (593,137) ;
\draw [color={rgb, 255:red, 74; green, 144; blue, 226 }  ,draw opacity=1 ]   (329,275) -- (164.2,56.6) ;
\draw [shift={(163,55)}, rotate = 52.96] [color={rgb, 255:red, 74; green, 144; blue, 226 }  ,draw opacity=1 ][line width=0.75]    (10.93,-3.29) .. controls (6.95,-1.4) and (3.31,-0.3) .. (0,0) .. controls (3.31,0.3) and (6.95,1.4) .. (10.93,3.29)   ;
\draw [color={rgb, 255:red, 74; green, 144; blue, 226 }  ,draw opacity=1 ][fill={rgb, 255:red, 74; green, 144; blue, 226 }  ,fill opacity=1 ]   (329,275) -- (438.13,48.8) ;
\draw [shift={(439,47)}, rotate = 115.76] [color={rgb, 255:red, 74; green, 144; blue, 226 }  ,draw opacity=1 ][line width=0.75]    (10.93,-3.29) .. controls (6.95,-1.4) and (3.31,-0.3) .. (0,0) .. controls (3.31,0.3) and (6.95,1.4) .. (10.93,3.29)   ;
\draw  [color={rgb, 255:red, 74; green, 144; blue, 226 }  ,draw opacity=0 ][fill={rgb, 255:red, 74; green, 144; blue, 226 }  ,fill opacity=0.34 ] (329,275) -- (185,84) -- (425,75) -- cycle ;
\draw [color={rgb, 255:red, 74; green, 144; blue, 226 }  ,draw opacity=0.34 ][fill={rgb, 255:red, 74; green, 144; blue, 226 }  ,fill opacity=0.34 ]   (185,84) .. controls (257,69) and (256,73) .. (306,69) .. controls (356,65) and (387,70) .. (425,75) ;

\draw (250,29.4) node [anchor=north west][inner sep=0.75pt]  [xscale=0.75,yscale=0.75]  {$\textcolor[rgb]{0.29,0.56,0.89}{\cC_{\rm BPS-BB}=\mathrm{Nef}( B)}$};
\draw (523,69.4) node [anchor=north west][inner sep=0.75pt]  [color={rgb, 255:red, 208; green, 2; blue, 27 }  ,opacity=1 ,xscale=0.75,yscale=0.75]  {$\cC_{\rm BPS-B}=M(B)$};

\end{tikzpicture}
    \caption{A schematic illustration of the BPS-B and BPS-BB cones for BPS 1-branes in F-theory compactified on elliptic Calabi--Yau threefolds with a base $B$.}
    \label{fig:Cones in F-theory on ECalabi--Yau threefold}
\end{figure}

\begin{proposition}[F-theory cone dictionary]
In the 6d $(1,0)$ quantum gravity theory constructed as F-theory compactified on an elliptically fibered Calabi--Yau threefold with base $B$, the following are true:
\begin{enumerate}
\item the BPS-B cone is the effective cone of curves $M(B)$;
\item the BPS-BB cone (which is also the tensor cone) is the nef cone $\Nef(B)$;
\item these cones are dual under the intersection pairing on the self-dual string charge lattice.
\end{enumerate}
\end{proposition}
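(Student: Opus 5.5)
I will prove the three items in order, leaning on the identifications set up in this section together with standard facts about smooth projective surfaces.

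\emph{Item 1.} In F-theory on $\pi:Y\to B$, the 6d strings charged under $B^\alpha$ come from D3-branes wrapping curve classes $[C]\in H_2(B,\Z)$. By the BPS bound $T=\volume(C)\geq\int_{[C]}J=|Z|$, the bound is saturated exactly when the wrapped cycle is calibrated by $J$, that is, when it is a holomorphic curve. Supersymmetric strings preserving the fixed $(1,0)$ supersymmetry (rather than the anti-BPS one) are therefore holomorphic curves with their complex orientation. Their integral charges form the semigroup of effective classes. By \cref{def:BPS cone}, $\cC_{\rm BPS-B}$ is the closed real cone these classes generate, which is $M(B)\subset N_1(B)$. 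Two points need a remark: on a surface $N_1(B)=H^{1,1}(B,\mathbb{R})\cap H^2$, and taking the closure $\overline{\mathrm{NE}}$ does not change the dual cone.

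\emph{Item 2.} The attractor value \cref{eq:6d attractor point} is $J_*\propto q$. By \cref{def:BPS black cone}, a charge has a smooth horizon precisely when $J_*$ lies in the K\"ahler cone $K(B)$, i.e.\ the tensor branch. Hence the interior of $\cC_{\rm BPS-BB}$ is $K(B)$ intersected with the charge lattice, and it contains all integral ample classes. Since rational ample classes are dense in $K(B)$, the minimal closed cone containing them is $\overline{K(B)}=\Nef(B)$. I would cross-check this against \cref{def:BPS black cone 2}. For $q$ ample, $T(q;J)=q\cdot J$ restricted to $\frac12 J\cdot J=1$ has its minimum at $J_*$ by the reverse Cauchy--Schwarz inequality for the signature $(1,n_T)$ form given by the Hodge index theorem. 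For $q$ outside $\Nef(B)$, the critical point leaves $K(B)$, so the minimum sits on $\partial\cM$. The condition $q\cdot q>0$ follows automatically for $q$ in the interior of $\Nef(B)$.

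\emph{Item 3.} With items 1 and 2 in hand, this is Kleiman duality $\Nef(B)=M(B)^\vee$ together with biduality $M(B)=\Nef(B)^\vee$ for closed convex cones. The pairing is the intersection form, which is unimodular on $H^2(B,\Z)$ by Poincar\'e duality. This is exactly the self-dual string charge lattice $\Lambda_e=\Lambda_m$, so the duality holds over $\Z$-structures and not merely over $\mathbb{R}$.

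\emph{Main obstacle.} The delicate step is item 2. It asks us to identify boundary charges of $\Nef(B)$ (attractor on $\partial\cM$, or $q^2=0$) as belonging to the closed BPS-BB cone, and to show that no charge outside $\Nef(B)$ has a regular attractor. For the first part, I would argue via the closure clause of \cref{def:BPS black cone}. For the second, if $q\cdot C<0$ for some curve $C$, then $J_*\propto q$ gives $\volume(C)<0$, which is not allowed. A related subtlety is whether the relevant base classes are semi-ample. I would note that this is automatic for the rational and Enriques-type bases that arise in F-theory and otherwise take it as an assumption.
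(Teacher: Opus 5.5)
Your proposal is correct and takes essentially the same approach as the paper. Item 1 identifies BPS strings with D3-branes on calibrated (holomorphic) curves, giving $M(B)$; item 2 uses regularity of the attractor $J_*\propto q$ to force $q$ into the K\"ahler cone, whose closure is $\Nef(B)$; item 3 is Kleiman duality. Your reverse Cauchy--Schwarz check via the Hodge index theorem ($q\cdot J\geq\sqrt{2q\cdot q}$ on the slice, with equality at $J_*$) is a useful detail the paper leaves implicit, and your semi-ampleness caveat is not needed for the cone identification itself.
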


\noindent This identification is illustrated in \cref{fig:Cones in F-theory on ECalabi--Yau threefold} and is one of the cleanest instances in which the physics and the birational geometry match exactly.

\subsection{The worldsheet theory and effectiveness}
\label{sec:proof in ECalabi--Yau threefold}

In this section, we will review the worldsheet theory of BPS strings and also prove \cref{conj:BPS completeness} both in the context of 6d $(1,0)$ quantum gravitational theories constructed from F-theory compactified on an elliptic Calabi--Yau threefold. In F-theory compactified on an elliptic Calabi--Yau threefold with base $B$, a D3-brane wrapping a holomorphic curve $C\subset B$ gives rise to a string in 6d whose worldsheet theory is a 2d $\cN=(0,4)$ theory. Let $M_6$ denote the 6d spacetime and $\Sigma\subset M_6$ the 2d string worldsheet, then the embedding $\Sigma\hookrightarrow M_6$ has a rank-four normal bundle $N\to \Sigma$. The transverse rotation group is $SO(4)_N\cong SU(2)_L\times SU(2)_R$. For later convenience, we denote the second Chern classes associated to these two $SU(2)$ bundles by $c_2(L):=c_2(SU(2)_L)$ and $c_2(R):=c_2(SU(2)_R)$. With this convention, the Euler class and the first Pontryagin class of the normal bundle are $\chi_4(N)=c_2(R)-c_2(L)$ and $p_1(N)=-2(c_2(L)+c_2(R))$. Since the restriction of the six-dimensional tangent bundle to the string worldsheet decomposes as $TM_6\vert_\Sigma=T\Sigma\oplus N$, we have $p_1(TM_6)\vert_\Sigma=p_1(T\Sigma)+p_1(N)$. Lastly, the left-moving $(c_L)$ and right-moving $(c_R)$ central charges of the 2d worldsheet theory can extracted from the 4-form anomaly polynomial obtained by anomaly inflow from the 6d bulk theory. This takes on the following form
\begin{equation}
\label{eq:2d anomaly polynomial}
    X_4=\frac{c_R-c_L}{24}p_1(T\Sigma)+k_Rc_2(R)-k_Lc_2(L)+\dots\,,
\end{equation}
where $k_R,k_L$ denote the $SU(2)_R$ and $SU(2)_L$ current algebra levels, respectively, and $\dots$ denote additional flavor and mixed anomalies of the theory which will not be needed below. These quantities will enter the Cardy formula of black holes in 5d upon a circle reduction of these strings in 6d.

Let us study the anomaly inflow onto the worldsheet theory of such a string in the 6d bulk theory. To start, there is a Green--Schwarz term in the 6d action of the form
\begin{equation}
    S_{6}\supset \int_{M_6} \Omega_{\alpha\beta}B^\alpha\wedge X_4^\beta\,,
\end{equation}
where $B^\alpha$ are the chiral two-form potentials and $X_4^\alpha\supset -\frac14 a^\alpha p_1(TM_6)$ with the 6d gravitational anomaly vector as $a=-K_B$ with $-K_B=c_1(B)$ denoting the anti-canonical class on $B$. In the presence of a string, the Bianchi identity is modified to $\dd H^\alpha=2\pi q^\alpha \delta_4(\Sigma)$. Then, the relevant terms in the anomaly inflow along the string worldsheet becomes
\begin{align}
    X_4&=\frac12 q^2\chi_4(N)-\frac14 (a\cdot q)p_1(TM_6)\vert_{\Sigma}\,,\nonumber\\
    &=\frac12 q^2(c_2(R)-c_2(L))-\frac14 (a\cdot q)p_1(T\Sigma)+\frac12 (a\cdot q)(c_2(R)+c_2(L))\,,
\end{align}
where $q^2:=\Omega_{\alpha\beta}q^\alpha q^\beta$ and $a\cdot q:=\Omega_{\alpha\beta}a^\alpha q^\beta$.
After adding the center-of-mass contribution to the anomaly polynomial, we obtain the following expressions for the central charges
\begin{align}
    c_R&=3q^2+3(a\cdot q)+6=3C.C+3c_1(B).C+6\,,\\
    c_L&=3q^2+9(a\cdot q)+6=3C.C+9c_1(B).C+6\,,
\end{align}
where the latter equalities are expressed in terms of geometric data on the base $B$. One can similarly extract the current algebra levels and obtain
\begin{equation}
    k_L=1+\frac12 (q^2-a\cdot q)=1+\frac12(C.C-c_1(B).C)=g(C)\,,
\end{equation}
where the last equality is the adjunction formula with $g(C)$ denoting the genus of $C$. 

Similar to~\cite{Maldacena:1997de}, we can consider the entropy of the BPS particle obtained from wrapping this BPS string in 6d on a compact circle down to 5d. In this case, the BPS string wrapping the spatial circle can carry quantized KK momentum $P_y=n/R_{\rm KK}$. Then, in the Cardy regime where the KK momentum satisfies $n\gg c_L$, the entropy is
\begin{equation}
    S=2\pi \sqrt{\frac{c_Ln}{6}}\,.
\end{equation}
For a smooth base $B$ of an elliptic Calabi--Yau threefold, we have $\chi_h(B,\cO_B)=1$. Thus, the Hirzebruch--Riemann--Roch theorem~\cite{Hirzebruch1966} gives the holomorphic Euler characteristic of the line bundle $\cO_B(C)$ of a curve in the base surface, $C\subset B$, as
\begin{align}
\label{eq:holomorphic euler 6d}
    \chi_h(B,\cO_B(C))&=h^0(B,\cO_B(C))-h^1(B,\cO_B(C))+h^2(B,\cO_B(C))\,,\\
    &=1+\frac12(C.C+c_1(B).C)\,,
\end{align}
Therefore, the above entropy formula for the 5d BPS particle can be rewritten as
\begin{equation}
    S=2\pi \sqrt{n\big[\chi_h(B,\cO_B(C))+c_1(B).C\big]}\,.
\end{equation}
This shows that $\chi_h(B,\cO_B(C))$ is relevant to the microscopic degeneracy of the wrapped string. Therefore, from the entropy, we have $\chi_h(B,\cO_B(C))+c_1(B).C\geq 0$. This is not yet sufficient to show $C$ must be effective.
However, the lower bound on $\chi_h(B,\cO_B(C))$ can be further improved by considering additional geometric ingredients. When $B$ is a smooth base of an elliptic Calabi--Yau threefold, multiples of the anti-canonical class of $B$ have an effective representation, e.g., $[\Delta]=-12K_B=12c_1(B)$ where $\Delta$ denotes the discriminant of the associated Weierstrass model. In the above expression, we can see that when $C$ is a nef curve, we obtain the relation $c_1(B).C\geq 0$.\footnote{This inequality does not rely on $c_1(B)$ itself being an integral effective divisor class. For example, this inequality is true when $B$ is an Enriques surfaces. However, $c_1(B)$ is not an effective divisor class in $B$ as we will discuss in \cref{sec:Enriques}.} Therefore, we must have $\chi_h(B,\cO_B(C))>0$ for a nef curve as $C.C\geq 0$ by definition, alongside $c_1(B).C\geq 0$. 
When $C$ is ample, by Serre duality, we have $h^2(B,\cO_B(C))=h^0(B,\cO_B(K_B-C))$. As $(K_B-C).C=-c_1(B).C-C.C<0$, $(K_B-C)$ cannot be effective. Therefore, we arrive at $h^2(B,\cO_B(C))=0$.
Combining all the ingredients above, we are led to the following theorem.
\begin{theorem}
\label{theorem:6d effectiveness}
    For any smooth base $B$ of an elliptic Calabi--Yau threefold, all integral ample divisor classes in $B$ are effective.
\end{theorem}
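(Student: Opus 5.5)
The plan is to show $h^0(B,\cO_B(C))>0$ for an integral ample class $C$ by combining Riemann--Roch with a vanishing of $h^2$. By \cref{eq:holomorphic euler 6d} we have $h^0-h^1+h^2=\chi_h(B,\cO_B(C))=1+\tfrac12(C.C+c_1(B).C)$. Then $h^0\geq \chi_h - h^2 = 1+\tfrac12(C.C+c_1(B).C)-h^2$, and it suffices to prove (i) $h^2(B,\cO_B(C))=0$ and (ii) $C.C+c_1(B).C\geq 0$. Indeed, (i) and (ii) give $h^0\geq 1$, so the class has a nonzero section and is effective.

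For (ii), ampleness gives $C.C>0$ directly by Nakai--Moishezon. For $c_1(B).C\geq 0$ I would use that the discriminant divisor $\Delta$ of the Weierstrass model is effective with $[\Delta]=12c_1(B)$. An ample class meets every effective divisor nonnegatively, so $12\,c_1(B).C=\Delta.C\geq 0$. In the degenerate case where $\Delta$ is empty, $c_1(B)=0$ numerically and the inequality is trivial. This also covers bases like the Enriques surface mentioned in the footnote, where $c_1(B)$ is torsion rather than effective but $12c_1(B)\sim 0$.

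For (i), use Serre duality: $h^2(B,\cO_B(C))=h^0(B,\cO_B(K_B-C))$. If $K_B-C$ were effective, pairing it with the ample class $C$ would give $(K_B-C).C\geq 0$. But $(K_B-C).C=-c_1(B).C-C.C<0$ by (ii) together with $C.C>0$. So $K_B-C$ has no sections and $h^2=0$.

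The main obstacle is making sure the Riemann--Roch input is correct, namely $\chi(\cO_B)=1$ for every smooth base. I would justify this from the fact that such bases are rational surfaces, Enriques surfaces, or blowups of $\mathbb{P}^2$ and Hirzebruch surfaces (by the known classification of F-theory bases). All of these have $q=p_g=0$, so $\chi(\cO_B)=1$; the paper already assumes this in \cref{eq:holomorphic euler 6d}. A secondary subtlety is that the argument yields effectiveness of the line bundle $\cO_B(C)$, hence of the divisor class. For torsion-ambiguous cases such as the Enriques surface, the class should be understood in $\mathrm{Pic}(B)$. The argument works for any representative line bundle, since all the numerical inputs are invariant under torsion shifts.
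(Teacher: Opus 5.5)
Your proposal is correct and follows essentially the same route as the paper. Both use Riemann--Roch with $\chi(\cO_B)=1$, get $c_1(B).C\geq 0$ from the effective discriminant $[\Delta]=12c_1(B)$, and show $h^2(B,\cO_B(C))=h^0(B,\cO_B(K_B-C))=0$ because $(K_B-C).C<0$, which gives $h^0\geq\chi_h\geq 1$.

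Your justification of $\chi(\cO_B)=1$, which the paper simply assumes, is fine, though it follows more directly from the injection $H^{p,0}(B)\hookrightarrow H^{p,0}(X)=0$ for $p=1,2$. Your remark that the argument applies to every torsion lift in $\mathrm{Pic}(B)$ is consistent with the paper's Enriques discussion.
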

\begin{proof}
    By the positivity of $\chi_h(B,\cO_B(C))$ for nef divisors, the vanishing of $h^2(B,\cO_B(C))$ for ample divisors, and the non-negativity of $h^1(B,\cO_B(C))$, we have $h^0(B,\cO_B(C))>0$ when $C$ is ample.
\end{proof}
As the ingredients of this proof are general, there is an analogous proof for effectiveness of integral ample divisor classes in Calabi--Yau threefolds, relevant to 5d $\cN=1$ theories. This has been stated in e.g.,~\cite{Katz:2020ewz}, which we will review in \cref{sec:proof in Calabi--Yau threefold}.

\subsection{Examples}
\label{sec:6d examples}

While the above proofs and arguments have been presented in full generality for any smooth elliptic Calabi--Yau threefold, it is instructive to see all of this at play in a series of explicit and well-known toric bases. 

The following examples are motivated by a series of questions and curiosities.
\begin{itemize}
    \item What is the simplest example of the BPS-B cone and the BPS-BB cone in a 6d $\cN=(1,0)$ theory? From a geometric perspective, the simplest theories are those obtained from F-theory on elliptic Calabi--Yau threefolds where the base is a Hirzebruch surface. In this case, the cones are two-dimensional while the K\"ahler moduli space (restricted to the constant-volume slice) is one-dimensional. The case of Hirzebruch surfaces is presented in \cref{sec:Hirzebruch}. 
    \item Since the Hirzebruch surfaces, with the exception of $\mathbb F_1$, arise naturally as minimal models of rational surfaces, a relevant question is how do the BPS-B cone and BPS-BB cone behave under birational transformations of the base manifold? In these cases, all of our conjectures remain true while the boundary of the tensor moduli space becomes increasingly complicated. This is discussed in \cref{sec:blowups}.
    \item Another class of minimal models is the del Pezzo surfaces. In these base manifolds, the dimension of the cones also can become large. Nevertheless, the study of these manifolds can be carried out straightforwardly similar to the previous cases and are presented in \cref{sec:dPr}.
    \item There is a family of surfaces at $h^{1,1}=10$ which cannot be obtained from blowups of Hirzebruch surfaces or del Pezzo surfaces, namely the Enriques surfaces. These are discussed in \cref{sec:Enriques}. There is a potential subtlety regarding seemingly non-effective divisors in the effective cone. Nevertheless, all integer sites in the BPS-B cone are occupied by BPS states.  
    \item Within the above examples, BPS completeness is satisfied in the BPS-B cone. However, this is not a general phenomenon in 6d $(1,0)$ quantum gravity theories. Hence, as a last example, we discuss the geometry in e.g.,~\cite[App. D]{Kim:2024eoa} to illustrate non-BPS holes in $\cC_{\rm BPS-B}\backslash\cC_{\rm BPS-BB}$ in F-theory compactified on elliptic Calabi--Yau threefolds.
\end{itemize}

\subsubsection{Hirzebruch surfaces}
\label{sec:Hirzebruch}

The simplest toric varieties that one can consider appearing as the base manifold in elliptic Calabi--Yau threefolds are the Hirzebruch surfaces $\mathbb{F}_n$ with $n=0,1,\dots,8$ and $n=12$. The Mori cone of $\mathbb{F}_n$ is generated by the negative section $H$ and the fibers $F$ with their intersection data being
\begin{equation}
\label{eq:Fn intersection}
    H.H=-n\,,\qquad F.H=H.F=1\,,\qquad F.F=0\,.
\end{equation}
The Mori cone is simplicial and is generated by these two effective curve
classes.
Namely, BPS strings in F-theory on a base $\mathbb{F}_n$ arise from
D3-branes wrapping either the negative section $H$ or the fiber $F$
\begin{equation}
    \cC_{\rm BPS-B}=M(\mathbb{F}_n)=\Cone(H,F)\,.
\end{equation}
Equivalently, since $\mathbb{F}_n$ is a surface, we may also discuss the
effective cone of divisor classes. In the basis $(H,F)$, the effective
cone is again generated by the two classes
\begin{equation}
\label{eq:hirzebruch effective cone}
    \cE(\mathbb{F}_n)=\Cone(H,F)\,.
\end{equation}

The canonical class of $\mathbb{F}_n$ is
\begin{equation}
    K_{\mathbb{F}_n}=-2H-(n+2)F\,.
\end{equation}
Its intersections with the Mori cone generators are
\begin{equation}
    (-K_{\mathbb{F}_n}).F=2\,,\qquad
    (-K_{\mathbb{F}_n}).H=2-n\,.
\end{equation}
Hence, for $n=2$, the anti-canonical class is nef but not ample and for $n>2$, the anti-canonical class is no longer nef.

From these, we can construct the K\"ahler form as
\begin{equation}
    J=h\tilde{H}+f\tilde{F}\,,
\end{equation}
where we have introduced the nef basis $\tilde{F}\sim F$ and $\tilde{H}\sim H+nF$. The K\"ahler cone in this basis is then simply
\begin{equation}
    K(\mathbb{F}_n)=\big\{(h,f)\in \mathbb{R}_+^2\big\}\,,
\end{equation}
which is deduced by demanding the positivity of curves in the Mori cone. The volume of the base $B$ can then be written as
\begin{equation}
    \cF=2hf+nh^2\,.
\end{equation}
Then, we can introduce the canonical normalized scalar $\Delta$ as $h(\Delta)=e^{-\Delta}$ and $f(\Delta)=\frac12(e^\Delta-ne^{-\Delta})$ where the K\"ahler cone constraint along this codimension-one slice is $\Delta\geq \log[n]/2$.
In the basis of divisors used to define the effective cone in \cref{eq:hirzebruch effective cone} where the K\"ahler form is expressed as $J=hH+fF$, the K\"ahler cone is $K(\mathbb{F}_n)=\{(h,f)\in \mathbb{R}_+^2\,\vert\, h>0,f>nh\}$ which is more straightforward to observe that $K(\mathbb{F}_n)\subset \cE(\mathbb{F}_n)$.

Let us study the black string solutions in these theories. To begin, the central charge associated to a string from wrapping the divisor $q_h\tilde{H}+q_F\tilde{F}$ is
\begin{equation}
\label{eq:Fn Z}
    Z=\frac{q_h}{2h}+\bigg(q_f+\frac{nq_h}{2}\bigg)h\,,
\end{equation}
when restricted to the constant volume slice $\cF=1$. Then, the black string potential takes on the form
\begin{equation}
    V_{q}=2Z^2-q^2=2\bigg[\frac{q_h}{2h}+\bigg(q_f+\frac{nq_h}{2}\bigg)h\bigg]^2-(nq_h^2+2q_hq_f)\,.
\end{equation}
Therefore, attractor solutions exist when 
\begin{equation}
    \partial_hV_q=4Z\partial_hZ=0\,.
\end{equation}
The two branches of solutions are $Z=0$ and $\partial_h Z=0$. These each correspond to the non-BPS and BPS attractor solutions.
Restricting to BPS solutions, we find the attractor points as
\begin{equation}
    h_*=\frac{q_h}{\sqrt{nq_h^2+2q_hq_f}}\,,\qquad f_*=\frac{q_f}{\sqrt{nq_h^2+2q_hq_f}}\,.
\end{equation}
Tor such a solution to be valid, we must ensure $(h_*,f_*)\in K(\mathbb{F}_n)$. Therefore, the BPS black string solutions exist for $(q_h,q_f)\in \mathbb{R}_+^2$. These are precisely the ample divisor classes on $\mathbb{F}_n$. Hence, the black string cone for these theories is
\begin{equation}
    \cC_{\rm BPS-BB}=K(\mathbb{F}_n)\,,
\end{equation}
where the closure of $\cC$ includes the supergravity strings~\cite{Katz:2020ewz} and is hence identified with $\cC_{\rm BPS-BB}=M(\mathbb{F}_n)$.

Let us consider a BPS string obtained from a D3-brane wrapping the effective curve $C=xH+yF\in M(\mathbb{F}_n)$ with $x,y\geq 0$. For a black string charge $Q=q_h(H+nF)+q_fF$, by the intersection data \cref{eq:Fn intersection}, we have $Q.H=q_f$ and $Q.F=q_H$. Therefore, the pairing between this black string and the BPS string is
\begin{equation}
    Q.C=xq_f+yq_h\geq 0\,.
\end{equation}
Therefore, indeed from the attractor mechanism along with positivity of tension, we can see that the black cone is dual to the BPS cone. 
Geometrically, this is the dual between the nef cone of divisors and the Mori cone of curves. Similarly, the cone of movable curves is dual to the effective cone of divisors.

\begin{figure}[!tp]
    \centering
    \begin{subfigure}{.475\textwidth}
        \includegraphics[width=\linewidth]{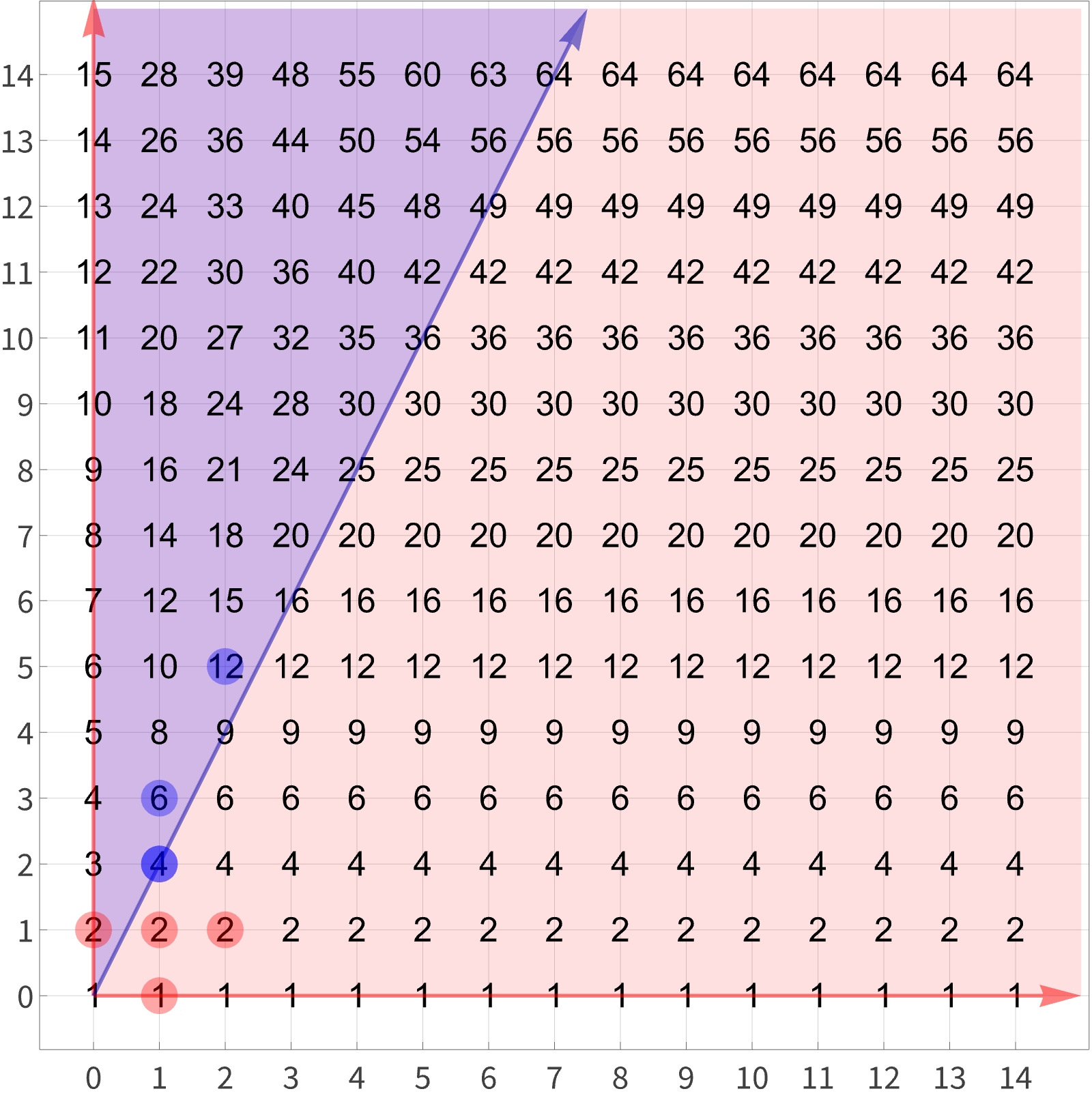}
        \begin{picture}(0,0)\vspace*{-1.2cm}
            \put(-20,140){$[F]$}
            \put(120,0){$[H]$}
            \put(100,250){$[H]+2[F]$}
        \end{picture}
        \vspace*{-0.25cm}
        \caption{$\cC_{\rm BPS-B}$ and $\cC_{\rm BPS-BB}$}
        \label{fig:F2cones}
    \end{subfigure}
    \hfill
    \begin{subfigure}{.475\textwidth}
        \includegraphics[width=\linewidth]{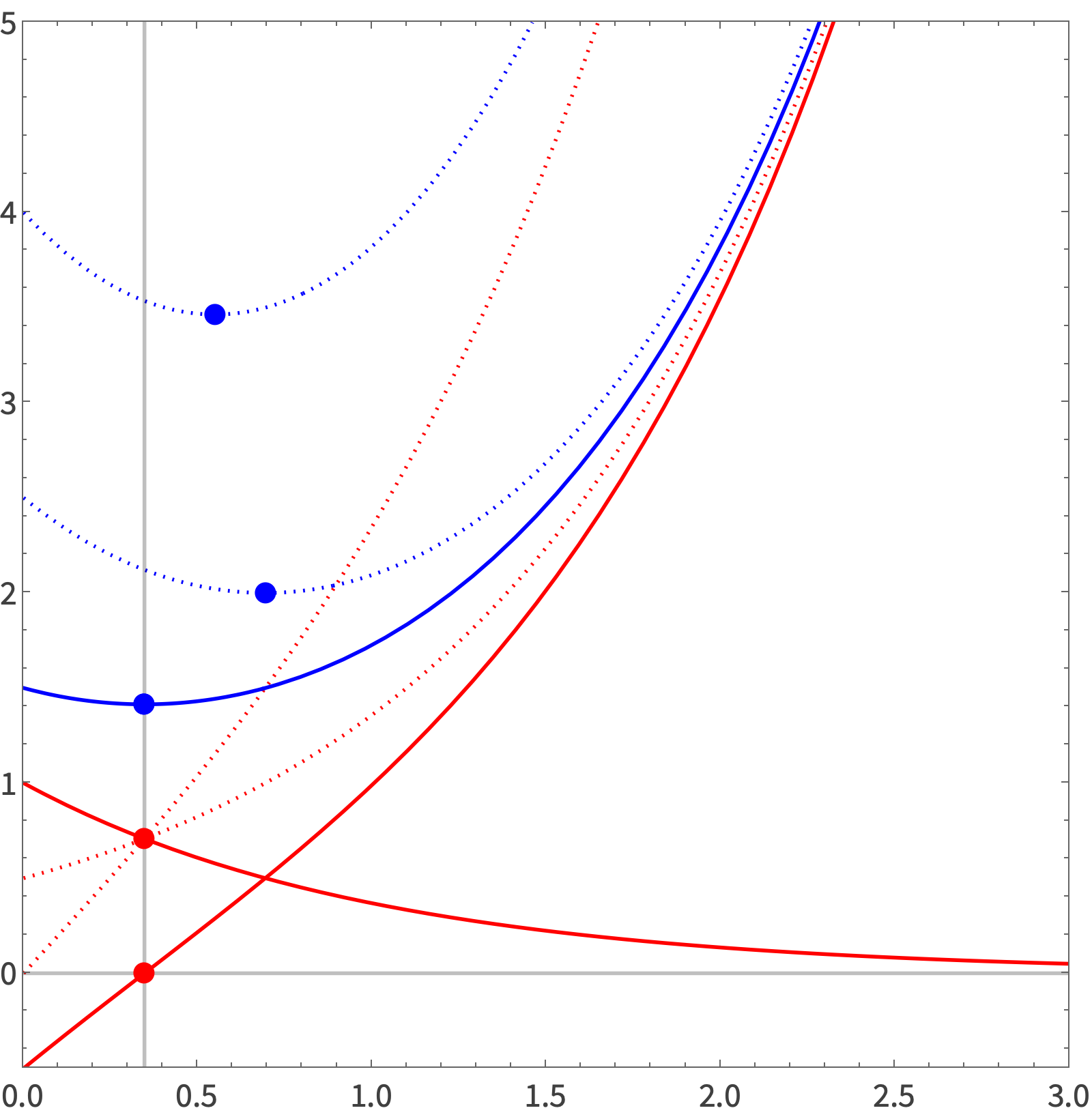}
        \begin{picture}(0,0)\vspace*{-1.2cm}
            \put(-15,140){$T$}
            \put(120,0){$\Delta$}
            \put(16,250){$\frac{1}{2}\log[2]$}
            \put(200,55){\footnotesize \textcolor{red}{$[F]$}}
            \put(100,90){\footnotesize \textcolor{red}{$[H]$}}
            \put(-25,62){\footnotesize \textcolor{red}{$[H]+[F]$}}
            \put(115,250){\footnotesize \textcolor{red}{$2[H]+[F]$}}
            \put(-25,110){\footnotesize \textcolor{blue}{$[H]+2[F]$}}
            \put(32,135){\footnotesize \textcolor{blue}{$[H]+3[F]$}}
            \put(32,200){\footnotesize \textcolor{blue}{$2[H]+5[F]$}}
        \end{picture}
        \vspace*{-0.25cm}
        \caption{Tensions of BPS strings}
        \label{fig:F2tensions}
    \end{subfigure}
    \caption{We illustrate the relevant physics when $B=\mathbb{F}_2$. In \cref{fig:F2cones}, the BPS brane cone is the red region and the BPS black brane cone is the blue region. The two red rays indicate the extremal rays of $\cC_{\rm BPS-B}$ which correspond to $H,F$. Additionally, the blue ray denotes the boundary between $\cC_{\rm BPS-BB}$ and $\cC_{\rm BPS-B}$. The value of $h^0(\mathbb{F}_2,\cO_{\mathbb{F}_2}(aH+bF))$ is computed via \cref{eq:F2h0} and indicated at each integral site in $\cC_{\rm BPS-B}$. The dotted sites are those chosen to be further analyzed in \cref{fig:F2tensions}. In \cref{fig:F2tensions}, the tension of the extremal rays for $\cC_{\rm BPS-B}$ and $\cC_{\rm BPS-BB}$ are shown in red and blue solid lines, respectively. We have also chosen to illustrate the tension of BPS strings in the interior of $\cC_{\rm BPS-B}$ and in the interior of $\cC_{\rm BPS-BB}$ in red and blue dashed lines. The gray vertical solid line indicates the boundary of the K\"ahler cone.}
    \label{fig:F2}
\end{figure}

From the above attractor solutions, we find that the extremal tension associated to strings of charge $q=q_h\tilde{H}+q_f\tilde{F}$ inside the black cone is 
\begin{equation}
    T^{\rm Ext}=\sqrt{nq_h^2+2q_hq_f}\,,
\end{equation}
which exactly coincides with $Z_{\rm min}$ for all integral charge sites inside the black cone. Hence, all integral sites inside the black cone are occupied by a BPS state. Geometrically, as all such sites are non-negative combinations of effective curves $(\tilde{H},\tilde{F})$, they are by definition effective as well. Nevertheless, we can still compute $h^0$ for any generic divisor in the effective cone. Namely, suppose $D=aH+bF$ and viewing $\pi: \mathbb{F}_n=\mathbb{P}(\cO_{\mathbb{P}^1}\oplus \cO_{\mathbb{P}^1}(n))\to \mathbb{P}^1$, then we have the following pushforward to the base $\mathbb{P}^1$
\begin{equation}
    \pi_*\cO_{\mathbb{F}_n}(aH+bF)\cong \bigoplus_{i=0}^a\cO_{\mathbb{P}^1}(b-ni)\,,\quad \Rightarrow \quad H^0(\mathbb{F}_n,\cO_{\mathbb{F}_n}(aH+bF))\cong \bigoplus_{i=0}^aH^0(\mathbb{P}^1,\cO_{\mathbb{P}^1}(b-ni))\,,
\end{equation}
for all $a,b$ such that $[D]\in \NE(\mathbb{F}_n)$. Therefore, for any divisor class $[D]$ in the effective cone, we have
\begin{equation}
\label{eq:F2h0}
    h^0(\mathbb{F}_n,\cO_{\mathbb{F}_n}(aH+bF))=\sum_{i=0}^a\mathrm{max}(b-ni+1,0)\,,
\end{equation}
where we have made use of $h^0(\mathbb{P}^1,\cO_{\mathbb{P}^1}(a))=H(a)(a+1)$ with $H(a)$ being the Heaviside function. The behavior of $h^0$ is illustrated in \cref{fig:F2cones} for all integral sites within $\cC_{\rm BPS-B}$. We can see that all such integral sites in $\cC_{\rm BPS-B}$ for $\mathbb{F}_n$ are hence occupied by a BPS state. Furthermore, the tensions of strings shown in \cref{fig:F2tensions} reveal the distinct four categories of BPS strings we encounter as summarized in \cref{sec:tension properties}. Additionally, we can see that the primitive generators of $\cC_{\rm BPS-B}$ and $\cC_{\rm BPS-BB}$, namely $[H],[F],[H]+2[F]$, all have minimum tension $\lesssim 1$ as conjectured in~\cite{Nevoa:2025xiq}.

As a last remark on these theories, let us consider non-BPS strings with charges $q=q_H\tilde{H}+q_F\tilde{F}$ where $q_H>0$ and $q_F<0$. Clearly $q\notin M(\mathbb{F}_n)$. Let us still suppose $q^2>0$. In this case, the central charge takes on the same form as \cref{eq:Fn Z} and is crucially minimized along the boundary $f\to 0$ and $h\to 1/\sqrt{n}$ where the minimum value is
\begin{equation}
    Z_{\rm min}^{\rm non-BPS}=\frac{nq_h+q_f}{\sqrt{n}}\,.
\end{equation}
Using this, the extremal tension is deduced from the black string potential as
\begin{equation}
    V^{\rm Ext}=nq_h^2+2q_hq_f+\frac{2q_f^2}{h}\,,
\end{equation}
which leads to the extremal tension $T^{\rm Ext}=\sqrt{V^{\rm Ext}}$.
This is an unstable value. Nevertheless, we can see that 
\begin{equation}
    V^{\rm Ext}-(Z_{\rm min}^{\rm non-BPS})^2=\frac{q_f^2}{n}>0\,,\qquad \Rightarrow \qquad T^{\rm Ext}>|Z_{\rm min}|\,.
\end{equation}
Therefore, non-BPS states indeed satisfy the strict inequality
\begin{equation}
    |Z_{\rm min}|<T^{\rm non-BPS}<T^{\rm Ext}\,,
\end{equation}
in these theories.

\subsubsection{Blowups of Hirzebruch surfaces}
\label{sec:blowups}

\begin{figure}[!tp]
    \centering
    \begin{subfigure}{.475\textwidth}
        \includegraphics[width=\linewidth]{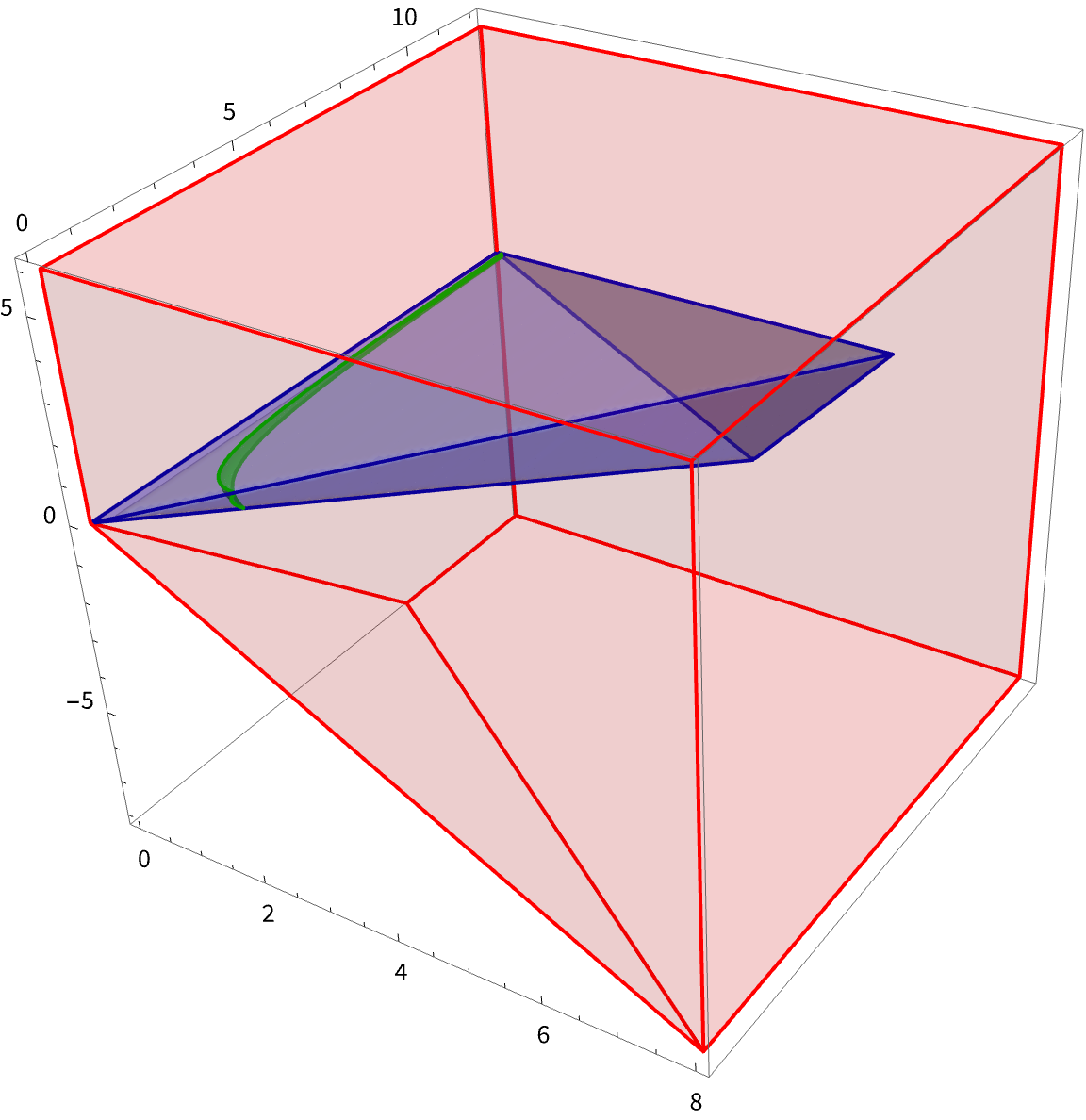}
        \begin{picture}(0,0)\vspace*{-1.2cm}
            \put(-10,120){$e$}
            \put(60,30){$h$}
            \put(40,240){$f$}
        \end{picture}
        \vspace*{-0.25cm}
        \caption{$\cC_{\rm BPS-B}$ and $\cC_{\rm BPS-BB}$}
        \label{fig:Bl1F2cones}
    \end{subfigure}
    \hfill
    \begin{subfigure}{.475\textwidth}
        \includegraphics[width=\linewidth]{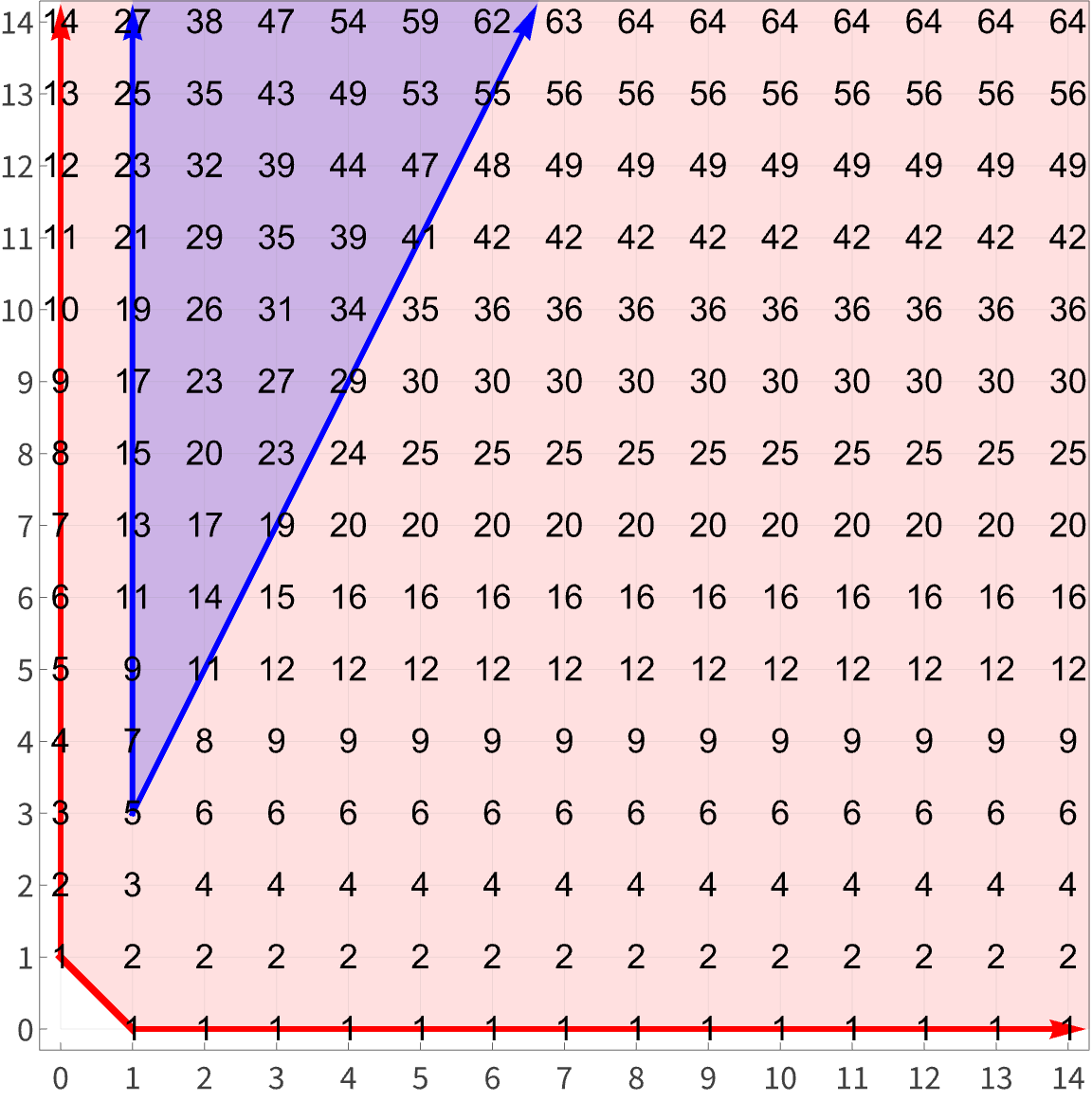}
        \begin{picture}(0,0)\vspace*{-1.2cm}
            \put(-15,125){$f$}
            \put(120,0){$h$}
        \end{picture}
        \vspace*{-0.25cm}
        \caption{$h^0$ along $e=-1$}
        \label{fig:Bl1F2effective}
    \end{subfigure}
    \caption{The behavior of cones and effectiveness of divisors in $\mathrm{Bl}_1(\mathbb{F}_2)$. In \cref{fig:Bl1F2cones}, the red cone indicates $\cC_{\rm BPS-B}$ while the blue cone indicates $\cC_{\rm BPS-BB}$. Additionally, the green surface indicates the constant-volume slice within the K\"ahler cone. In \cref{fig:Bl1F2effective}, we have chosen a generic 2d slice in $\cC_{\rm BPS-B}$ (for simplicity of illustration) to demonstrate BPS completeness at all integral sites in $\cC_{\rm BPS-B}$.}
    \label{fig:Bl1F2}
\end{figure}

The Hirzebruch surfaces serve as the simplest base manifold of which we can iteratively perform blowups on to obtain higher $h^{1,1}$ toric bases for elliptic Calabi--Yau threefolds. Or, in other words, these manifolds provide a perfect ground for studying the relation of $\cC_{\rm BPS-B}$ and $\cC_{\rm BPS-BB}$ within a birationally-equivalent family of surfaces. For simplicitly, let us just consider the one-point blowup of Hirzebruch surfaces such that $h^{1,1}(B)=3$, which we will denote as $\mathrm{Bl}_1(\mathbb{F}_n)$. In this case, blowing up a point at $p\in H\cap F$ and denoting the resulting exceptional curves by $E$, we have the following intersection data
\begin{equation}
    H_1.H_1=-(n+1)\,,\quad H.F=1\,,\quad F.F=0\,,\quad E.E=-1\,,\quad H.E=1\,,\quad E.F=0\,.
\end{equation}
The two curves passing through $p$ are $H_1=H-E$ and $L=F-E$. The primitive rational curves of $\mathrm{Bl}_1(\mathbb{F}_n)$ are $E,L,H_1$ and the canonical class is
\begin{equation}
    K_{\mathrm{Bl}_1(\mathbb{F}_n)}=\pi^*K_{\mathbb{F}_n}+E=-2H_1-(n+2)F\,.
\end{equation}
Alternatively, we could also choose to perform the blowup at a generic point $p\notin H$, then $H_1=H$ in $\mathbb{F}_n$ while the remaining intersection data remains the same.

In the basis $H_1,F,E$, a divisor is $D=hH_1+fF+eE$. Then, the effective cone in this choice of basis is
\begin{equation}
    M(\mathrm{Bl}_1(\mathbb{F}_n))=\{(h,f,e)\in \mathbb{R}^3\,\vert\, h\geq 0\,,f\geq 0\,, h+f+e\geq 0\}\,.
\end{equation}
In other words, the effective cone is generated by $E,F-E,H_1-E$.
Similar to the $\mathbb{F}_n$ cases, we can choose to express the K\"ahler two-form in the more standard positive section basis. However, let us continue to work in the same basis as that for the effective cone. Namely, we can express the K\"ahler two-form as $J=hH_1+fF+eE$. Then, the volumes associated to the generators of the effective cone are 
\begin{equation}
    J.E=-e\,,\qquad J.(F-E)=h+e\,,\qquad J.(H_1-E)=f-nh+e\,.
\end{equation}
Therefore, the K\"ahler cone is
\begin{equation}
    K(\mathrm{Bl}_1(\mathbb{F}_n))=\{(h,f,e)\in \mathbb{R}^3\,\vert\, e<0\,, h+e>0\,, f-nh+e>0\}\,.
\end{equation}
The attractor solution then again exactly reproduces the bounds we have for the K\"ahler cone. Therefore, we have $\cC_{\rm BPS-BB}=K(\mathrm{Bl}_1(\mathbb{F}_n))$. The $\cC_{\rm BPS-B}$ and $\cC_{\rm BPS-BB}$ have been illustrated for $\mathrm{Bl}_1(\mathbb{F}_2)$ in \cref{fig:Bl1F2cones}. Within a generic 2d slice in $\cC_{\rm BPS-B}$, BPS completeness is satisfied as can be seen in \cref{fig:Bl1F2effective}. In particular, we can algorithmically determine $h^0(\mathrm{Bl}_1(\mathbb{F}_n),\cO_{\mathrm{Bl}_1(\mathbb{F}_n)}([D]))$ for any given $[D]\in \cC_{\rm BPS-B}$ and observe that the above observation is indeed general in $n$ and we have BPS completeness within the entire $\cC_{\rm BPS-B}$ for all such $\mathrm{Bl}_1(\mathbb{F}_n)$.

\begin{figure}[!tp]
    \centering
    \begin{subfigure}{.45\textwidth}
        \includegraphics[width=\linewidth]{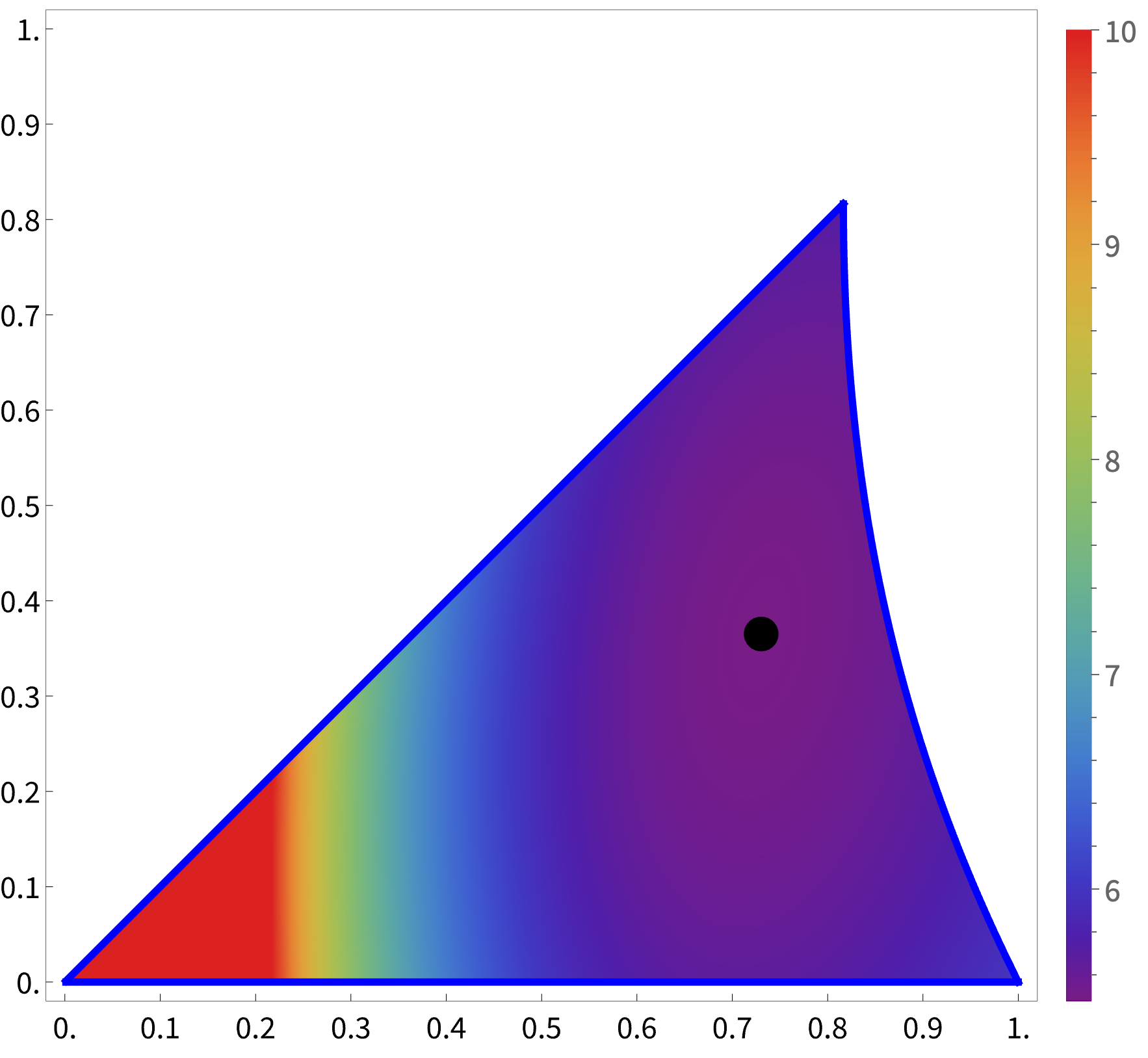}
        \begin{picture}(0,0)\vspace*{-1.2cm}
            \put(-20,110){$\tilde{e}$}
            \put(100,0){$h$}
            \put(200,225){$T$}
            \put(50,225){$[D]=2[H_1]+6[F]-[E]$}
        \end{picture}
        \vspace{-0.5cm}
        \caption{$q\in \cC_{\rm BPS-BB}$}
        \label{fig:Bl1F2ample}
    \end{subfigure}
    \hfill
    \vspace*{0.8cm}
    \begin{subfigure}{.45\textwidth}
        \includegraphics[width=\linewidth]{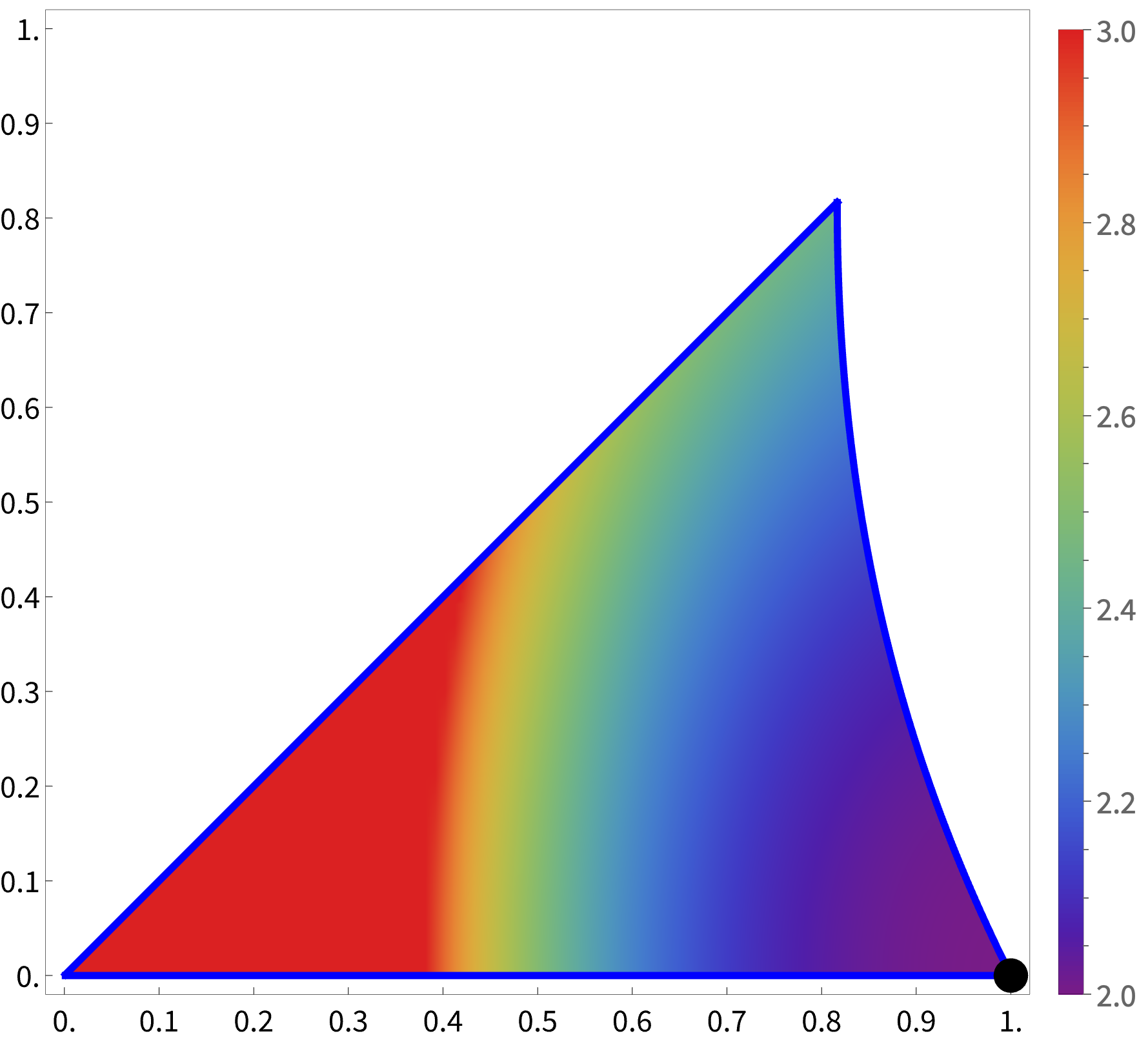}
        \begin{picture}(0,0)\vspace*{-1.2cm}
            \put(-20,110){$\tilde{e}$}
            \put(100,0){$h$}
            \put(200,225){$T$}
            \put(70,225){$[D]=[H_1]+2[F]$}
        \end{picture}
        \vspace{-0.5cm}
        \caption{$q\in \partial \cC_{\rm BPS-BB}$}
        \label{fig:Bl1F2nef}
    \end{subfigure}
    \hfill
    \begin{subfigure}{.45\textwidth}
        \includegraphics[width=\linewidth]{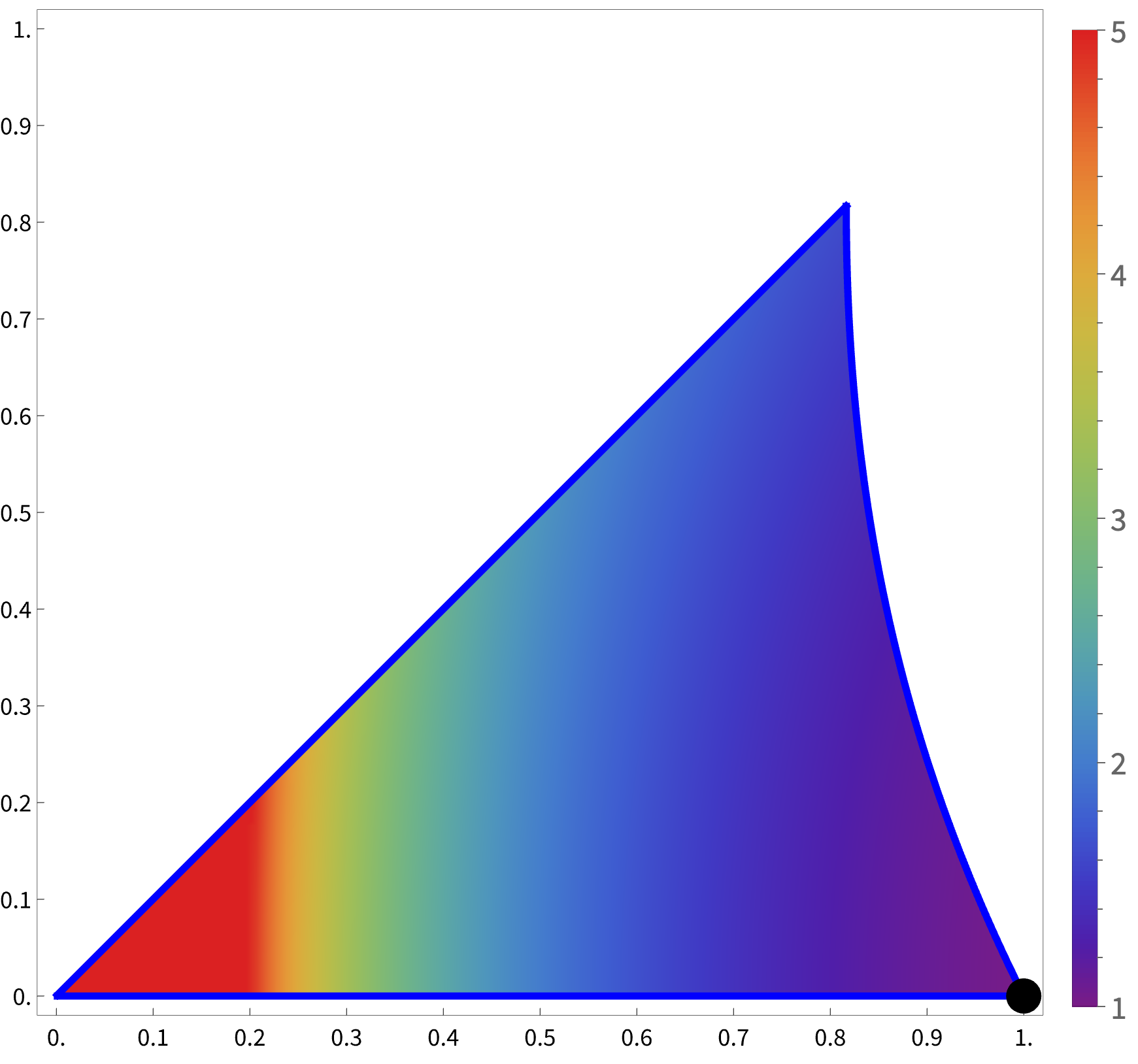}
        \begin{picture}(0,0)\vspace*{-1.2cm}
            \put(-20,110){$\tilde{e}$}
            \put(100,0){$h$}
            \put(205,225){$T$}
            \put(70,225){$[D]=[H_1]+[F]$}
        \end{picture}
        \vspace{-0.5cm}
        \caption{$q\in \cC_{\rm BPS-B}\backslash\cC_{\rm BPS-BB}$ and $q\notin \partial \cC_{\rm BPS-B}$}
        \label{fig:Bl1F2eff}
    \end{subfigure}
    \hfill
    \begin{subfigure}{.45\textwidth}
        \includegraphics[width=\linewidth]{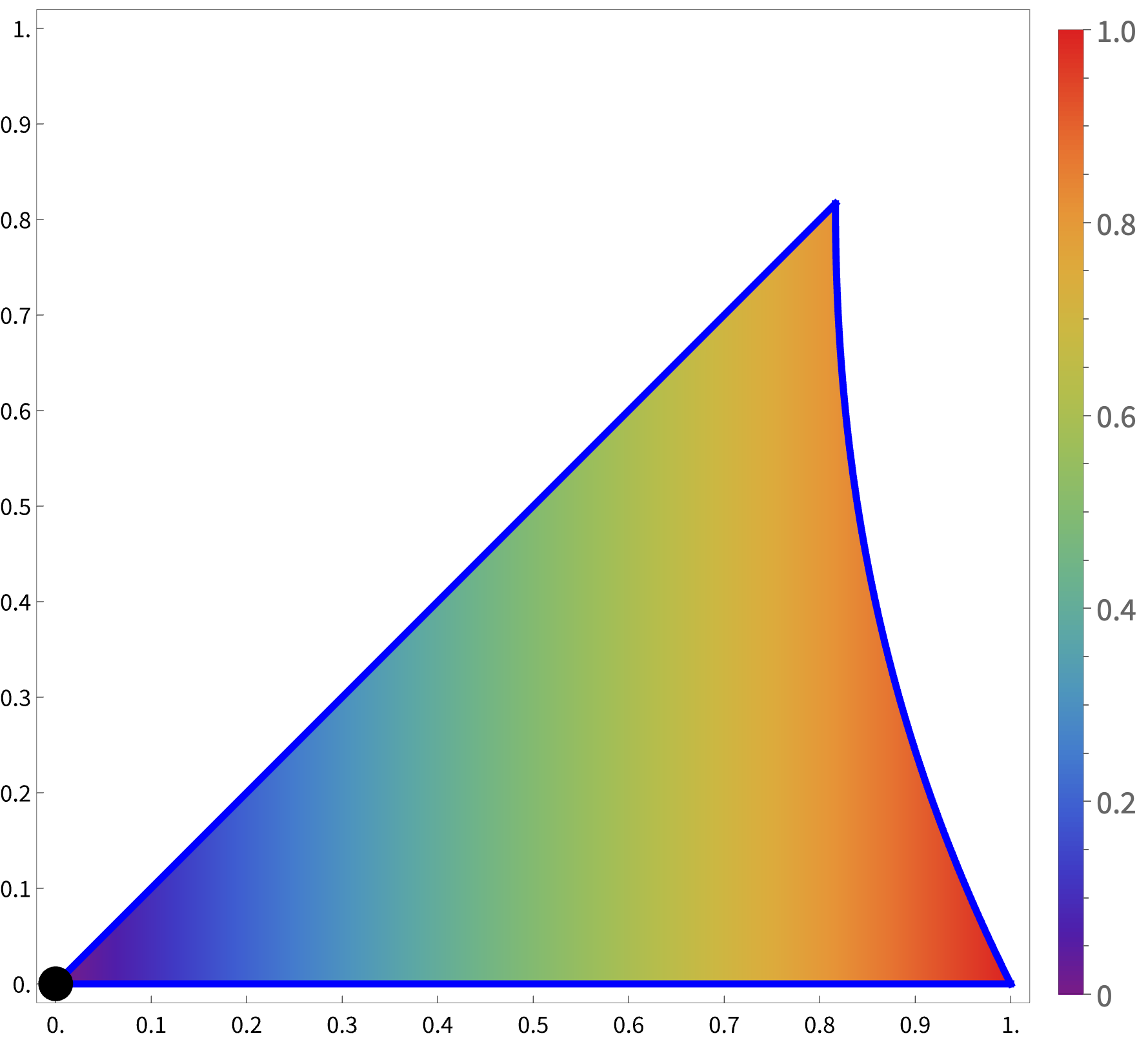}
        \begin{picture}(0,0)\vspace*{-1.2cm}
            \put(-20,110){$\tilde{e}$}
            \put(100,0){$h$}
            \put(205,225){$T$}
            \put(80,225){$[D]=[F]$}
        \end{picture}
        \vspace{-0.5cm}
        \caption{$q\in \partial \cC_{\rm BPS-B}$}
        \label{fig:Bl1F2effboundary}
    \end{subfigure}
    \caption{The tension of the four types of BPS branes in $\cC_{\rm BPS-B}$ of F-theory on elliptic Calabi--Yau threefold with $B=\mathrm{Bl}_1(\mathbb{F}_2)$. Here, for numerical stability, we have chosen to parameterize the K\"ahler form as $J=hH_1+fF-\tilde{e}E$. The infinite-distance limit in this parameterization is the $\tilde{e}\to 0$ and $h\to 0$ limit. The black dot indicates the location of the minimum of the tension of the chosen BPS brane.}
    \label{fig:Bl1F2tensions}
\end{figure}

At last, we can also compute the tension of BPS strings within $\cC_{\rm BPS-B}$ for a given $\mathrm{Bl}_1(\mathbb{F}_n)$. Here, we will focus simply on $\mathrm{Bl}_1(\mathbb{F}_2)$ as the moduli space has also been illustrated previously in \cref{fig:Bl1F2cones}, namely the co-dimension one green surface within $\cC_{\rm BPS-BB}$. Examining the four special classes of divisors as laid out in \cref{sec:tension properties}, we see that the observation holds in this example. For the divisor that is in the interior of $\cC_{\rm BPS-BB}$, the tension is minimized in the interior of the moduli space. This is illustrated in \cref{fig:Bl1F2ample}. For the divisor on the boundary of $\cC_{\rm BPS-BB}$, we can see that the minimum tension is located on the boundary of the moduli space. This is illustrated in \cref{fig:Bl1F2nef}. However, to see that this is indeed a stable minimum, let us compute the norm of its gradient as we approach the minimum tension associated to the BPS brane presented in \cref{fig:Bl1F2nef}. The tension of this BPS brane is $T=f$. Then, we have
\begin{equation}
    \lim_{(h,\tilde{e})\to (1,0)}|\nabla T|=\lim_{(h,\tilde{e})\to (1,0)}\sqrt{\bigg(1-\frac{1}{h^2}-\frac{\tilde{e}^2}{2h^2}\bigg)^2+\bigg(\frac{\tilde{e}}{h}\bigg)^2}=0\,,
\end{equation}
with $\nabla T\vert_{(h,\tilde{e})=(1,0)}=(0,0)$. From here, we can also compute the Hessian, which is
\begin{equation}
    \lim_{(h,\tilde{e})\to (1,0)}(\partial_i\partial_j T)=\lim_{(h,\tilde{e})\to (1,0)}\begin{pmatrix}
        \frac{2+\tilde{e}^2}{h^3}&-\frac{\tilde{e}}{h^2}\\
        -\frac{\tilde{e}}{h^2}&\frac{1}{h}
    \end{pmatrix}
    =\begin{pmatrix}
            2&0\\
            0&1
        \end{pmatrix}\,.
\end{equation} 
Therefore, indeed this is a stable minimum. Next, let us consider states that are within $\cC_{\rm BPS-B}$ but strictly outside $\cC_{\rm BPS-BB}$. The tension again minimizes along the boundary of the moduli space as can be seen in \cref{fig:Bl1F2eff}. In this case, we again have to examine the stability of this minimum. The tension associated to the BPS brane in \cref{fig:Bl1F2eff} is $T=f-h$. Then, the norm of the gradient along the constant volume slice as we approach its minimum within the K\"ahler moduli space is
\begin{equation}
    \lim_{(h,\tilde{e})\to (1,0)}|\nabla T|=\lim_{(h,\tilde{e})\to (1,0)}\sqrt{\frac{(1+\tilde{e}^2/2)^2}{h^4}+\bigg(\frac{\tilde{e}}{h}\bigg)^2}=1\,,
\end{equation}
where we have $\nabla T\vert_{(h,\tilde{e})=(1,0)}=(-1,0)$. Therefore, this is not a stable minimum and the global minimum of its tension lies outside the moduli space. Lastly, we can consider states on the boundaries of $\cC_{\rm BPS-B}$. In this case, we have chosen $D=F$ which is both the boundary of $\cC_{\rm BPS-B}$ and that of $\cC_{\rm BPS-BB}$. In this case, the minimum of the tension is pushed out to the infinite-distance limit of the moduli space as can be seen in \cref{fig:Bl1F2effboundary}. As this approaches 0, it must be a global stable minimum. Hence, indeed all four classes of BPS branes agree with the general properties laid out in \cref{sec:tension properties}.

\subsubsection{del Pezzo surfaces}
\label{sec:dPr}

Let us next consider the del Pezzo surfaces $dP_r$, obtained by blowing up
$\mathbb{P}^2$ at $r$ generic points, with $0\leq r\leq 8$. We use the standard
basis
\begin{equation}
    H,\qquad E_i,\quad i=1,\dots,r,
\end{equation}
where $H$ is the pullback of the hyperplane class of $\mathbb{P}^2$ and $E_i$
are the exceptional curves. The intersection pairing is
\begin{equation}
    H.H=1,\qquad H.E_i=0,\qquad E_i.E_j=-\delta_{ij}.
\end{equation}
The anti-canonical class is
\begin{equation}
    -K_{dP_r}=3H-\sum_{i=1}^r E_i.
\end{equation}
and it is ample, which is the defining positivity property of
a del Pezzo surface.

For $r\geq 2$, the Mori cone is generated by the $(-1)$-curves. In the above
basis, these are precisely the effective curve classes $C$ satisfying
\begin{equation}
    C^2=-1,\qquad (-K_{dP_r}).C=1.
\end{equation}
Explicitly, the generators are as follows:
\begin{equation}
\begin{array}{c|c|c}
\text{curve class} & \text{range} & \text{number} \\
\hline
E_i & r\geq 1 & r \\[2pt]
H-E_i-E_j & r\geq 2 & \binom{r}{2} \\[2pt]
2H-\sum_{a=1}^5 E_{i_a} & r\geq 5 & \binom{r}{5} \\[2pt]
3H-2E_i-\sum_{a=1}^6E_{j_a} & r\geq 7 & r\binom{r-1}{6} \\[2pt]
4H-2\sum_{a=1}^3E_{i_a}-\sum_{b=1}^5E_{j_b} & r=8 & \binom{8}{3} \\[2pt]
5H-2\sum_{a=1}^6E_{i_a}-\sum_{b=1}^2E_{j_b} & r=8 & \binom{8}{6} \\[2pt]
6H-3E_i-2\sum_{j\neq i}E_j & r=8 & 8
\end{array}
\end{equation}
where all indices appearing in a given class are distinct, and for the last
three rows the remaining indices are taken inside $\{1,\dots,8\}$. For
$r=0$, the Mori cone is generated by $H$. For $r=1$, one has
\begin{equation}
    M(dP_1)=\Cone(E_1,H-E_1),
\end{equation}
where $H-E_1$ is the fiber class of the Hirzebruch surface
$dP_1\simeq \mathbb{F}_1$ and has self-intersection zero.

Thus, for $r=2,\dots,8$, the numbers of extremal Mori generators are
\begin{equation}
    3\,,\ 6\,,\ 10\,,\ 16\,,\ 27\,,\ 56\,,\ 240\,,
\end{equation}
respectively. The BPS 1-brane cone in F-theory on the elliptic Calabi--Yau threefold with a base $dP_r$ is therefore
\begin{equation}
    \cC_{\rm BPS-B}=M(dP_r)\,,
\end{equation}
with generators given by the above curve classes. Equivalently, a D3-brane
wrapped on any one of these effective curves gives a basic BPS string.

The BPS-BB cone is the dual nef cone,
\begin{equation}
    \cC_{\rm BPS-BB}=\Nef(dP_r)=M(dP_r)^\vee\,.
\end{equation}
If we write a general divisor class as
\begin{equation}
    J=tH-\sum_{i=1}^r s_iE_i\,,
\end{equation}
then nef-ness is the condition that $J.C\geq 0$ for every Mori generator $C$ listed above. For example, the first few inequalities are
\begin{equation}
    s_i\geq 0\,,\qquad
    t-s_i-s_j\geq 0\,,\qquad
    2t-\sum_{a=1}^5s_{i_a}\geq 0\,,
\end{equation}
together with the analogous inequalities coming from the higher-degree generators in the table. The K\"ahler cone is obtained by replacing these non-strict inequalities by strict inequalities.

It is clearly related to the exceptional root systems $E_n$. The lattice orthogonal to the canonical class,
\begin{equation}
    \Lambda_{E_r}:=K_{dP_r}^{\perp}\subset H_2(dP_r)\,,
\end{equation}
is the negative-definite root lattice of type $E_r$, with the standard convention
\begin{equation}
    E_5=D_5,\qquad E_4=A_4,\qquad E_3=A_2\oplus A_1\,.
\end{equation}
For $r\geq 3$, a convenient set of simple roots is
\begin{equation}
    \alpha_i=E_i-E_{i+1},\qquad i=1,\dots,r-1\,,
\end{equation}
together with
\begin{equation}
    \alpha_r=H-E_1-E_2-E_3\,.
\end{equation}
These satisfy
\begin{equation}
    \alpha_i.K_{dP_r}=0,\qquad \alpha_i^2=-2,\qquad
   \alpha_i.\alpha_j=-C^{E_r}_{ij}\,,
\end{equation}
where $C^{E_r}_{ij}$ is the Cartan matrix of the corresponding $E_r$ root system. The Weyl group $W(E_r)$ acts on $\mathrm{Pic}(dP_r)$ by reflections in these roots and permutes the $(-1)$-curve generators of the Mori cone.

This is the same exceptional structure that appears in Seiberg's rank-one five-dimensional SCFTs~\cite{Seiberg:1996bd}. In M-theory on the local Calabi-Yau threefold $K_{dP_r}$, collapsing the compact surface $dP_r$ gives a 5d fixed point whose flavor root lattice is precisely $K_{dP_r}^{\perp}\simeq \Lambda_{E_r}$.

\subsubsection{Enriques surfaces}
\label{sec:Enriques}

As a last example, let us consider the Enriques surfaces $S$ as base manifolds of elliptic Calabi--Yau threefolds in F-theory compactifications. The (complex) Enriques surfaces can be engineered as $K3/\iota$ where $\iota$ is a fixed-point-free involution of $K3$. Notably, the Enriques surfaces cannot be obtained from blowups of other surfaces. For a detailed exposition on Enriques surface, see e.g.,~\cite{CossecDolgachevLiedtke2025}. Distinct from the previous two families of base manifolds, the structure sheaf $\cO_S$ of Enriques surfaces have instead the following properties
\begin{equation}
    h^1(S,\cO_S)=0\,,\quad h^2(S,\cO_S)=0\,,\quad K_S^{\otimes 2}\cong \cO_S\,,
\end{equation}
where $K_S$ denotes the canonical class on $S$. Here, $K_S$ is torsional and is then numerically trivial $K_S\equiv 0$, leading to the special property of $(-K_S).C=c_1(S).C=0$ for all curve classes $C\subset S$.
Hence, it is useful to introduce the N\'eron--Severi group, which is defined as
\begin{equation}
    \mathrm{NS}(S)=\mathrm{Pic}(S)/\mathrm{Pic}^0(S)\,,
\end{equation}
where the connected component of the identity of the Picard group, $\mathrm{Pic}(S)$, is trivial. (This is the cases for all such regular manifolds that we have considered.) However, as we have $\mathrm{Tor\,} \mathrm{Pic}(S)=\langle K_S\rangle\cong\mathbb{Z}_2$, the relevant charge lattice for the 6d theory is then the numerical lattice of the Enriques surface
\begin{equation}
    \mathrm{Num}(S)=\mathrm{NS}(S)/\langle K_S\rangle\cong U\oplus E_8\,,
\end{equation}
where $U$ is the hyperbolic plane and Num$(S)$ is then an even unimodular lattice of signature $(1,9)$. The BPS brane cone is then $\cC_{\rm BPS-B}\subseteq \mathrm{Num}(S)_{\mathbb{R}}$. Note, by definition, this lattice contains the full supersymmetric lattice, namely both BPS states and anti-BPS states.

With this, we can choose a basis $\{[D_\alpha]\}$ of Num$(S)$ such that the K\"ahler form is $J=\sum_{\alpha =0}^9t^\alpha[D_\alpha]$. Furthermore, for any divisor class, its norm is $D^2=\eta_{\alpha\beta}D^\alpha D^\beta$ with $\eta_{\alpha\beta}$ having signature $(1,9)$. Therefore, the prepotential of the resulting 6d theory is
\begin{equation}
    \cF=\frac12\eta_{\alpha\beta}t^\alpha t^\beta=1\,,
\end{equation}
where we restrict to the constant-volume slice. Additionally, for a string of charge $q$, the central charge is $Z=\eta_{\alpha\beta}q^\alpha t^\beta$ and the attractor solution is that of \cref{eq:6d attractor solution} evaluated at the attractor point \cref{eq:6d attractor point}. Consistent with the above two families of base manifolds, we again observe that $\cC_{\rm BPS-BB}=K(S)$ whose closure is the nef cone. With the $\cC_{\rm BPS-BB}$ identified, we also know that there are nodal curves within Enriques surfaces with $C^2=-2$. These are topologically $C\cong \mathbb{P}^1$. While these are effective divisors, they are clearly not nef. Therefore, we have the strict inclusion of $\cC_{\rm BPS-BB}\subset \cC_{\rm BPS-B}$.
Similar to the above two families of manifolds, the duality in \cref{conj:cone dual conjecture} is also manifested via the duality among geometric cones.

Here, let us consider how \cref{theorem:6d effectiveness} is at play when determining the effectiveness of divisors in the Enriques surfaces. To start, let us consider an ample divisor $H\subset S$.  and also a divisor $D$ with 
\begin{equation}
\label{eq:partial nef condition}
    D^2\geq 0\,,\qquad D.H>0\,.
\end{equation} 
Equivalently, from the line-bundle perspective, the above condition amounts to considering $\cO_S(D)\in \mathrm{Pic}(S)$ with $c_1(\cO_S(D))^2\geq 0$ and $c_1(\cO_S(D)).H>0$. From this condition, we can deduce that when such a divisor satisfies these properties, we have $c_1(K_S\otimes \cO_S(D)^{-1}).H=-c_1(\cO_S(D)).H<0$ implying $K_S\otimes \cO_S(D)$ is not effective. 
By Serre duality, we also have
\begin{equation}
    h^2(S,\cO_S(D))=h^0(S,K_S\otimes \cO_{S}(D)^{-1})\,,
\end{equation}
such that from the holomorphic Euler characteristic, we obtain
\begin{equation}
    h^0(S,\cO_S(D))=1+\frac12 c_1(\cO_S(D))^2+h^1(S,\cO_S(D))>0\,.
\end{equation}
Therefore, indeed when \cref{eq:partial nef condition} is satisfied, the divisor class is effective. 

Different from the previous cases, the Enriques surfaces exhibits a torsion subtlety in passing from $\mathrm{Pic}(S)$ to the numerical lattice
$\mathrm{Num}(S)$. In particular, $K_S$ is invisible in $\mathrm{Num}(S)$. Hence, the most obvious non-effective divisor is the Picard lifts of the zero numerical class where while the trivial line bundle (the structure sheaf) is effective $h^0(S,\cO_S)=1$, the canonical torsion line bundle is not $h^0(S,K_S)=0$ as the geometric genus of Enriques surfaces is 0. 
With such a non-effective divisor found in the Picard lifts, it is natural to ask whether this torsion subtlety also appears over
non-zero numerical classes. To this end, we can consider the nodal divisor $C$ and consider the torsion-twisted line bundle $\cO_S(C+K_S)$. In particular, the numerical class of $[C+K_S]$ lies along the same ray as the effective non-nef divisor $C$ in $\mathrm{Pic}(S)_{\mathbb{R}}$. Suppose there is an effective divisor $D\sim C+K_S$ of which the nodal divisor $C$ is irreducible and effective, then as $D.C<0$, this implies that $[C]$ must be a component of $[D]$ leading to the decomposition of $D=C+D'$ where $D'$ is effective. This would imply $D'\sim K_S$ such that $h^0(S,K_S)>0$, contradicting the Enriques condition. Hence, $C+K_S$ cannot be an effective divisor. Thus, we observe that the divisor class $C+K_S$ in $\mathrm{Pic}(S)$ is non-effective. However, similar to the zero numerical class, as $\cO_S(C+K_S)$ has the same numerical class as $\cO_S(C)$, namely $[C]=[C+K_S]\in \mathrm{Num}(S)$, then, in $\cC_{\rm BPS-B}$ which is concerned of the numerical lattice, this divisor class is instead occupied by the effective divisor as the physical BPS charge lattice is obtained upon quotienting out torsion.

We can continue to enumerate through all divisor classes. In particular, the divisor classes in $\cC_{\rm BPS-B}$ are generated by nef classes and nodal classes. As we have shown that these are effective in $\cC_{\rm BPS-B}$ for the Enriques surface, any linear combination of these two types of divisors will both remain within $\cC_{\rm BPS-B}$ and also remain effective. Hence, similar to the above two families of theories, we arrive at a stronger conclusion than \cref{conj:BPS completeness}, namely all integral sites in $\cC_{\rm BPS-B}$ are occupied by BPS states.

\subsubsection{Resolved $(T^2\times T^2)/\mathbb{Z}_3$}
\label{sec:6d holes}

Up to this point, we have seen BPS completeness in $\cC_{\rm BPS-B}$ which might lead us to conclude that for 6d $\cN=(1,0)$ quantum gravity theories, \cref{conj:BPS completeness} can be expanded to include all integral states in $\cC_{\rm BPS-B}$. We will see that this is not true in following example. 

We will consider the example in \cite[App. D]{Kim:2024eoa}. The starting point is to consider F-theory on $T^6/\mathbb{Z}_3$ which was originally studied in~\cite{Morrison:1996pp}. The $\mathbb{Z}_3$ acts diagonally as
\begin{equation}
    g\,:\,(z_1,z_2,z_3)\mapsto (\omega z_1,\omega z_2,\omega z_3)\,,
\end{equation}
with $\omega=e^{2\pi i/3}$. This has 27 fixed point, which upon resolving via blowups, and we can obtain an elliptic Calabi--Yau threefold. Identifying one of the two-torus as the elliptic fiber, then the base of this elliptic Calabi--Yau threefold is
\begin{equation}
    B=\widetilde{(T^2\times T^2)/\mathbb{Z}_3}\,,
\end{equation}
where $B$ denotes the resulting surface as the resolution of the 9 fixed points in $(T_2\times T_2)/\mathbb{Z}_3$. Resolving these fixed points in the base via blowups introduces nine exceptional curves with the following intersections
\begin{equation}
    C_i.C_j=-3\delta_{ij}\,,
\end{equation}
where $i,j=1,\dots,9$ and $C_i\cong \mathbb{P}^1$. Then, the canonical class of the base is~\cite{Kim:2024eoa}
\begin{equation}
    -K_B=\frac13 \sum_{i=1}^9C_i\,.
\end{equation}
such that we have $-K_B.C_i=-1$.
From here, we can see that $h^{1,1}(B)=13$. In addition to these nine exceptional curves, there is also an infinite family of rational curve. To see this, let us consider expressing $T^2\times T^2=E_\omega\times E_\omega$ where $E_\omega$ is an elliptic curve with complex multiplication by Eisenstein integers $\mathbb{Z}[\omega]$.
With this notation, we can choose any such pair of Einstein integers with the slope $[a:b]\in \mathbb{P}^1(\mathbb{Z}[\omega])$ which would give an elliptic curve inside $E_\omega\times E_\omega$, namely
\begin{equation}
    E_{[a:b]}=\{(at,bt)\,\vert\, t\in E_\omega\}\subset E_\omega\times E_\omega\,.
\end{equation}
We can see that the diagonal $\mathbb{Z}_3$ acts on $E_\omega\times E_\omega$ preserves $E_{[a:b]}$ as $(\omega at,\omega bt)=(a(\omega t),b(\omega t))$. This induced action on $E_{[a:b]}\cong E_\omega$ is just $t\mapsto \omega t$ where this automorphism has order 3 and its fixed points satisfy $(\omega-1)t=0$. Therefore, the quotient map $E_\omega\to E_{\omega}/\langle\omega\rangle$ has degree 3, with three ramification points, each with ramification index 3. Then, by the Riemann--Hurwitz formula, we can obtain
\begin{equation}
    2g(E_\omega)-2=3(2g(E_\omega/\langle\omega\rangle)-2)+3(3-1)\,,
\end{equation}
where $g(E)$ denotes the genus of $E$.
As $g(E_\omega)=1$, we obtain $g(E_\omega/\langle\omega\rangle)=0$, implying $E_\omega/\langle\omega\rangle\cong\mathbb{P}^1$. Therefore, all $E_{[a:b]}/\langle\omega\rangle$ are rational curves~\cite[Ch. IV]{Hartshorne1977}.
We can further restrict to primitive pairs $\mathrm{gcd}(a,b)=1$ and nevertheless observe an infinite family of such curves which remain rational in $B$. Hence, there are infinitely many BPS generators of the effective cone. This can be explained via the large duality group $PGL(2,\mathbb{Z}[\omega])$ acting on $E_\omega\times E_\omega$ observed in~\cite{Kim:2024eoa}. 

The BPS brane cone, or equivalently, the effective cone of $B$, is generated as
\begin{equation}
    \cC_{\rm BPS-B}=M(B)=\overline{\mathrm{Cone}([C_1],\dots,[C_9],[E_{[a:b]}/\langle\omega\rangle])}\,.
\end{equation}
The BPS black brane cone is 
\begin{equation}
    \cC_{\rm BPS-BB}=\Nef(B)=\{J\in N^1(B)_{\mathbb{R}}\,\vert\, J.C_i\geq 0\,,J.(E_{[a:b]}/\langle\omega\rangle)\geq 0\,,\text{ for all }[a:b]\in \mathbb{P}^1(\mathbb{Z}[\omega])\}\,.
\end{equation}
Focusing on $-K_B$, we can see that it does not belong to $\cC_{\rm BPS-BB}$ as $-K_B.C_i<0$ and $(-K_B).(-K_B)<0$. Recall that $-K_B\in \mathrm{Pic}(B)$ and its first Chern class $c_1(B)\in H^2(B,\mathbb{Z})$. Hence, the divisor class $-K_B$ is an integral divisor class within the full charge lattice, but is not integral in the sub-lattice generated by the nine exceptional divisors. However, by the argument in~\cite{Kim:2024eoa} where we suppose to the contrary that $-K_B$ is effective, but we then immediately see that the supposedly constant pulled-back anti-canonical section along $T^2\times T^2$ is no longer invariant under actions of $\mathbb{Z}_3$. Hence, by contradiction, we arrive at $h^0(B,K_B)=0$ where there is no global section associated to $K_B$. Therefore, $-K_B$ is not effective despite being in the effective cone, but outside the nef cone.

As a last remark, we note that within $\cC_{\rm BPS-B}$, \cref{conj:BPS completeness} still remains true in this resolved model.

\section{M-theory on Calabi--Yau threefolds}
\label{sec:M-theory on Calabi--Yau threefold}

In this section, we consider M-theory compactified on Calabi-Yau threefolds leading to ${\cal N}=1$ supersymmetric theories in $d=5$.  We prove the duality between all BPS-BB and BPS-B cones (for both 1-branes and 0-branes).  We also review the proofs of completeness within the BPS-BB cone of BPS 1-branes in all smooth Calabi--Yau threefolds and within the BPS-BB cone of BPS 0-branes when the Calabi--Yau threefolds are complete--intersection Calabi--Yau (CICY) threefolds constructed as ample hypersurfaces in projective varieties. Lastly, we present diverse examples each illuminating distinct patterns among $\cC_{\rm BPS-BB}$ and $\cC_{\rm BPS-B}$. Nevertheless, all provide strong evidence for \cref{conj:BPS completeness}, \cref{conj:cone dual conjecture}, and the classification of BPS-Bs in \cref{sec:tension properties}.

Let us consider M-theory compactified on a Calabi--Yau threefold, which we denote as $X$. We expand the three-form gauge potential appearing in 11d M-theory as $C_3=A^I\wedge \omega_I$ where $\omega_I$ denotes a basis of $H^{1,1}(X)$.
Upon a dimensional reduction, the bosonic action of the resulting 5d $\cN=1$ quantum gravity theory takes on the following form
\begin{equation}
\label{eq:5d action}
    S_5=\frac{2\pi}{l_5^3}\int \dd^5 x\sqrt{-g}\bigg(R- G_{IJ}\partial_\mu t^I\partial^\mu t^J\bigg)-\frac{1}{2\pi l_5}\int G_{IJ}F^I\wedge \star F^J-\frac{1}{24\pi^2}\int C_{IJK}A^I\wedge F^J\wedge F^K\,.
\end{equation}
Here, $l_5$ denotes the five-dimensional Planck length, $I,J=1,\dots, h^{1,1}(X)$, and $C_{IJK}=\int_X[D_I]\wedge [D_J]\wedge [D_K]$ denotes the triple-intersection number of the divisor classes in $X$.
Choosing a basis of divisor classes $[D_I]\in H^2(X,\mathbb{Z})$, a K\"ahler two-form can then be expanded as $J=\sum_{I=1}^{h^{1,1}(X)}t^I[D_I]$ where $t^I$ are the K\"ahler parameters parameterizing the vector multiplet moduli space.
The prepotential, or equivalently the volume of $X$, is computed as
\begin{equation}
\label{eq:const Calabi--Yau threefold}
    \cF=\frac16 C_{IJK}t^It^Jt^K\,.
\end{equation}
The metric on the K\"ahler cone is
\begin{equation}
    G_{IJ}=-\frac12\partial_I\partial_J\log \cF\,,
\end{equation}
from which we obtain the pullback metric on the constant-volume slice ($\cF=1$),
\begin{equation}
    g_{ij}=G_{IJ}\partial_it^I\partial_jt^J\,,  
\end{equation}
where $\partial_i:=\partial/\partial\phi^i$ and $\phi^i$ with $i=1,\dots, h^{1,1}-1$ denote the scalars parameterizing the co-dimension one slice $\cF=1$ in the K\"ahler cone. Furthermore, in the physical regimes of the moduli space, the metric must remain positive definite~\cite{Alim:2021vhs}.

The BPS spectrum of the 5d $\cN=1$ quantum gravity theory contains BPS particles, obtained from M2-branes wrapping effective 2-cycles in $X$, and BPS strings, obtained from M5-branes wrapping effective 4-cycles in $X$. The exact structure of the charge lattices can be determined by geometry from a top-down perspective as we show in \cref{sec:Mtheory on Calabi--Yau threefold cones}. The mass of an electric particle of charge $q$ obtained from an M2-brane wrapping a 2-cycle $C_q$ satisfies
\begin{equation}
    m(q)=\volume(C_q)\geq \int_{[C_q]}J= q_It^I=:|Z_e(q)|\,.
\end{equation}
The tension of a magnetic string of charge $p$ obtained from wrapping a 4-cycle $D_p$ satisfies
\begin{equation}
    T(p)=\volume(D_p)\geq \frac12\int_{[D_p]}J\wedge J= \frac12C_{IJK}p^It^Jt^K=:|Z_m(p)|\,.
\end{equation}
In both cases, the inequalities are the known relation between tensions and central charges for $p$-branes and it is saturated only when the state is BPS. From the geometric perspective, the inequality is saturated when the corresponding curve or divisor is effective.

\subsection{The attractor mechanism}

Black holes and black strings in this theory, along with their associated geometric cones, have been studied in~\cite{Maldacena:1997de,Katz:2020ewz,Long:2021lon,Alim:2021vhs,Gendler:2022ztv}. The attractor mechanism for BPS black branes has been reviewed in e.g.,~\cite{Long:2021lon}. Here, for completeness, we will review the essential ingredients for our analysis.

\subsubsection*{BPS black 0-branes}

For black 0-branes, we consider the following spherically symmetric ans\"atz
\begin{equation}
    \dd s_5^2=-e^{-4U(r)}\dd t^2+e^{2U(r)}\big(\dd r^2+r^2\dd \Omega_3^2\big)\,.
\end{equation}
The one-dimensional effective Lagrangian is
\begin{equation}
    \cL=(\partial_rU)^2+\frac13 G_{IJ}\partial_rt^I\partial_rt^J+\frac{e^{-2U}}{r^6}V_{\rm BH}(t,q)+\dots \,.
\end{equation}
When the particle is supersymmetric, the equations of motion reduce to the first-order attractor flow equations
\begin{equation}
    \partial_r U=-\frac{e^{2U}}{3r^3}Z_e\,,\qquad \partial_r \phi^i=-\frac{e^{-2U}}{r^3}g^{ij}\partial_jZ_e\,.
\end{equation}
Imposing regularity at the extremal horizon requires $\phi^i\to \phi^i_*$ as $r\to 0$. Consequently, $\partial_r\phi^i\to 0$ near the horizon. Then, from the equations of motion of the scalar fields, we obtain the BPS attractor equation and its solution
\begin{equation}
\label{eq:att Ze}
    \partial_iZ_e=0\,,\qquad \Rightarrow \qquad q_I=\frac{1}{3\cF}\tau_IZ_e\,,
\end{equation}
where $\tau_I=\frac12C_{IJK}t^Jt^K$.  Evaluated at the attractor point $t_*$, the entropy of the BPS black 0-brane is $S_{\rm BPS}=\frac{16\pi^4}{l_5^3}\big(\frac13 Z_e\big\vert_{t=t_*}\big)^{3/2}$.
More generally, without imposing supersymmetry, the effective black 0-brane potential is
\begin{equation}
    V_{\rm BH}:=G^{IJ}q_Iq_J=\frac{2}{3}Z_e^2+g^{ij}\partial_iZ_e\partial_jZ_e\,,
\end{equation}
where the extremal attractor equation becomes
\begin{equation}
\label{eq:att Vbh 5d}
    D_IV_{\rm BH}=\bigg(\partial_I-\frac{2}{3\cF}\tau_I\bigg)V_{\rm BH}=0\,.
\end{equation}
Thus, in the supersymmetric settings, any electric charge state satisfying \cref{eq:att Ze} necessarily satisfies \cref{eq:att Vbh 5d}. The converse, however, need not hold. 

If such an attractor solution exists, then the tension of the BPS black 0-brane with charge $q$ can be determined via the attractor mechanism as
\begin{equation}
    T^{\rm Ext}(q)=\sqrt{\frac{3}{2}V_{\rm BH}}\bigg\vert_{t=t_*}\,,
\end{equation}
where $t_*$ is the attractor point.
In particular, non-supersymmetric extremal solutions that do not satisfy the BPS condition can still occur, and have been discussed in~\cite{Long:2021lon}. In these cases, we have the strict inequality 
\begin{equation}
\label{eq:BPS predict}
    \big\vert Z_{\rm min}(q)\big\vert< T(q)< T^{\rm Ext}(q)\,,
\end{equation}
where $Z_{\rm min}$ denotes the minimum of the central charge across all moduli space.

\subsubsection*{BPS black 1-branes}

For black 1-branes, the relevant static, translationally invariant ans\"atz is
\begin{equation}
    \dd s_5^2=e^{-U(r)}\big(-\dd t^2+\dd y^2\big)+e^{2U(r)}\big(\dd r^2+r^2\dd\Omega_2\big)\,,
\end{equation}
where the magnetic field strengths are parameterized as $F^I=p^I\sin\theta \,\dd\theta\wedge \dd\phi$. 
The first-order attractor flow equations are
\begin{equation}
    \partial_r U=-\frac{e^{-U}}{r^2}Z_m\,,\qquad \partial_r \phi^i=\frac{e^{-U}}{r^2}g^{ij}\partial_jZ_m\,.
\end{equation}
A supersymmetric solution with a regular horizon therefore satisfies 
\begin{equation}
\label{eq:att Zm 5d}
    \partial_iZ_m=0\,.
\end{equation}
On the constant-volume slice $\cF=1$, this gives $p^I=\frac{Z_m}{3}t^I$.
The magnetic black-string potential is
\begin{equation}
    V_{\rm BS}=4G_{IJ}p^Ip^J=\frac23 Z_m^2+ g^{ij}\partial_iZ_m\partial_jZ_m\,,
\end{equation}
such that the attractor equation is
\begin{equation}
\label{eq:att str 5d}
    D_IV_{\rm BS}=\bigg(\partial_I-\frac{4}{3\cF}\tau_i\bigg)V_{\rm BS}=0\,.
\end{equation}
Again, in the supersymmetric case, \cref{eq:att Zm 5d} necessarily implies \cref{eq:att str 5d}. 

The extremal tension of the black string is then
\begin{equation}
    T^{\rm Ext}=\sqrt{\frac32 V_{\rm BS}}\bigg\vert_{t=t_*}\,,
\end{equation}
where $t_*$ is the attractor point. Similar to the black 0-branes, non-supersymmetric extremal 1-brane solutions which do not satisfy the BPS condition can occur and the strict inequality \cref{eq:BPS predict} again holds for these non-BPS black 1-branes.

\subsection{Charge lattices and cones}
\label{sec:Mtheory on Calabi--Yau threefold cones}

\begin{figure}
    \centering

\tikzset{every picture/.style={line width=0.75pt}} 

\begin{tikzpicture}[x=0.75pt,y=0.75pt,yscale=-1,xscale=1]

\draw [color={rgb, 255:red, 208; green, 2; blue, 27 }  ,draw opacity=0.34 ][fill={rgb, 255:red, 208; green, 2; blue, 27 }  ,fill opacity=0.34 ]   (37.19,88.02) .. controls (63.92,76.21) and (112.92,60.47) .. (164.6,58.5) .. controls (216.28,56.53) and (257.26,71.29) .. (307.16,86.05) ;
\draw [color={rgb, 255:red, 74; green, 144; blue, 226 }  ,draw opacity=0.34 ][fill={rgb, 255:red, 74; green, 144; blue, 226 }  ,fill opacity=0.34 ]   (90.65,70.31) .. controls (116.49,63.42) and (128.07,59.49) .. (164.6,58.5) .. controls (201.13,57.52) and (222.51,62.44) .. (251.03,70.31) ;
\draw [color={rgb, 255:red, 208; green, 2; blue, 27 }  ,draw opacity=1 ]   (177.96,273) -- (7.21,48.29) ;
\draw [shift={(6,46.69)}, rotate = 52.77] [color={rgb, 255:red, 208; green, 2; blue, 27 }  ,draw opacity=1 ][line width=0.75]    (10.93,-3.29) .. controls (6.95,-1.4) and (3.31,-0.3) .. (0,0) .. controls (3.31,0.3) and (6.95,1.4) .. (10.93,3.29)   ;
\draw [color={rgb, 255:red, 208; green, 2; blue, 27 }  ,draw opacity=1 ]   (177.96,273) -- (331.86,50.31) ;
\draw [shift={(333,48.66)}, rotate = 124.65] [color={rgb, 255:red, 208; green, 2; blue, 27 }  ,draw opacity=1 ][line width=0.75]    (10.93,-3.29) .. controls (6.95,-1.4) and (3.31,-0.3) .. (0,0) .. controls (3.31,0.3) and (6.95,1.4) .. (10.93,3.29)   ;
\draw  [color={rgb, 255:red, 208; green, 2; blue, 27 }  ,draw opacity=0 ][fill={rgb, 255:red, 208; green, 2; blue, 27 }  ,fill opacity=0.34 ] (307.16,86.05) -- (177.96,273) -- (37.19,88.02) -- cycle ;
\draw [color={rgb, 255:red, 74; green, 144; blue, 226 }  ,draw opacity=1 ]   (177.96,273) -- (72.73,31.8) ;
\draw [shift={(71.93,29.97)}, rotate = 66.43] [color={rgb, 255:red, 74; green, 144; blue, 226 }  ,draw opacity=1 ][line width=0.75]    (10.93,-3.29) .. controls (6.95,-1.4) and (3.31,-0.3) .. (0,0) .. controls (3.31,0.3) and (6.95,1.4) .. (10.93,3.29)   ;
\draw [color={rgb, 255:red, 74; green, 144; blue, 226 }  ,draw opacity=1 ][fill={rgb, 255:red, 74; green, 144; blue, 226 }  ,fill opacity=1 ]   (177.96,273) -- (265.49,33.81) ;
\draw [shift={(266.17,31.94)}, rotate = 110.1] [color={rgb, 255:red, 74; green, 144; blue, 226 }  ,draw opacity=1 ][line width=0.75]    (10.93,-3.29) .. controls (6.95,-1.4) and (3.31,-0.3) .. (0,0) .. controls (3.31,0.3) and (6.95,1.4) .. (10.93,3.29)   ;
\draw  [color={rgb, 255:red, 74; green, 144; blue, 226 }  ,draw opacity=0 ][fill={rgb, 255:red, 74; green, 144; blue, 226 }  ,fill opacity=0.34 ] (177.96,273) -- (90.65,70.31) -- (251.03,70.31) -- cycle ;
\draw [color={rgb, 255:red, 208; green, 2; blue, 27 }  ,draw opacity=0.34 ][fill={rgb, 255:red, 208; green, 2; blue, 27 }  ,fill opacity=0.34 ]   (369.66,90.37) .. controls (395.08,78.33) and (441.69,62.27) .. (490.84,60.27) .. controls (539.99,58.26) and (578.97,73.31) .. (626.43,88.36) ;
\draw [color={rgb, 255:red, 74; green, 144; blue, 226 }  ,draw opacity=0.34 ][fill={rgb, 255:red, 74; green, 144; blue, 226 }  ,fill opacity=0.34 ]   (420.5,72.31) .. controls (445.08,65.28) and (456.1,61.27) .. (490.84,60.27) .. controls (525.58,59.26) and (545.92,64.28) .. (573.04,72.31) ;
\draw [color={rgb, 255:red, 208; green, 2; blue, 27 }  ,draw opacity=1 ]   (503.55,279) -- (341.16,49.86) ;
\draw [shift={(340,48.23)}, rotate = 54.67] [color={rgb, 255:red, 208; green, 2; blue, 27 }  ,draw opacity=1 ][line width=0.75]    (10.93,-3.29) .. controls (6.95,-1.4) and (3.31,-0.3) .. (0,0) .. controls (3.31,0.3) and (6.95,1.4) .. (10.93,3.29)   ;
\draw [color={rgb, 255:red, 208; green, 2; blue, 27 }  ,draw opacity=1 ]   (503.55,279) -- (649.92,51.92) ;
\draw [shift={(651,50.23)}, rotate = 122.8] [color={rgb, 255:red, 208; green, 2; blue, 27 }  ,draw opacity=1 ][line width=0.75]    (10.93,-3.29) .. controls (6.95,-1.4) and (3.31,-0.3) .. (0,0) .. controls (3.31,0.3) and (6.95,1.4) .. (10.93,3.29)   ;
\draw  [color={rgb, 255:red, 208; green, 2; blue, 27 }  ,draw opacity=0 ][fill={rgb, 255:red, 208; green, 2; blue, 27 }  ,fill opacity=0.34 ] (626.43,88.36) -- (503.55,279) -- (369.66,90.37) -- cycle ;
\draw [color={rgb, 255:red, 74; green, 144; blue, 226 }  ,draw opacity=1 ]   (503.55,279) -- (403.46,33.02) ;
\draw [shift={(402.71,31.17)}, rotate = 67.86] [color={rgb, 255:red, 74; green, 144; blue, 226 }  ,draw opacity=1 ][line width=0.75]    (10.93,-3.29) .. controls (6.95,-1.4) and (3.31,-0.3) .. (0,0) .. controls (3.31,0.3) and (6.95,1.4) .. (10.93,3.29)   ;
\draw [color={rgb, 255:red, 74; green, 144; blue, 226 }  ,draw opacity=1 ][fill={rgb, 255:red, 74; green, 144; blue, 226 }  ,fill opacity=1 ]   (503.55,279) -- (586.8,35.07) ;
\draw [shift={(587.44,33.18)}, rotate = 108.84] [color={rgb, 255:red, 74; green, 144; blue, 226 }  ,draw opacity=1 ][line width=0.75]    (10.93,-3.29) .. controls (6.95,-1.4) and (3.31,-0.3) .. (0,0) .. controls (3.31,0.3) and (6.95,1.4) .. (10.93,3.29)   ;
\draw  [color={rgb, 255:red, 74; green, 144; blue, 226 }  ,draw opacity=0 ][fill={rgb, 255:red, 74; green, 144; blue, 226 }  ,fill opacity=0.34 ] (503.55,279) -- (420.5,72.31) -- (573.04,72.31) -- cycle ;

\draw (120.07,30.24) node [anchor=north west][inner sep=0.75pt]  [xscale=0.75,yscale=0.75]  {$\textcolor[rgb]{0.29,0.56,0.89}{\cC_{\rm BPS-BB}^{(1)}=\mathrm{Nef}(X)}$};
\draw (10,10.9) node [anchor=north west][inner sep=0.75pt]  [color={rgb, 255:red, 208; green, 2; blue, 27 }  ,opacity=1 ,xscale=0.75,yscale=0.75]  {$\cC_{\rm BPS-B}^{(1)}=\cE(X)$};
\draw (454.7,31.6) node [anchor=north west][inner sep=0.75pt]  [xscale=0.75,yscale=0.75]  {$\textcolor[rgb]{0.29,0.56,0.89}{\cC_{\rm BPS-BB}^{(0)}=\mathrm{Mov}(X)}$};
\draw (591.87,10.67) node [anchor=north west][inner sep=0.75pt]  [color={rgb, 255:red, 208; green, 2; blue, 27 }  ,opacity=1 ,xscale=0.75,yscale=0.75]  {$\cC_{\rm BPS-BB}^{(0)}=M(X)$};

\draw (122.07,285.67) node [anchor=north west][inner sep=0.75pt]  [xscale=0.75,yscale=0.75]  {Cones of BPS 1-branes};
\draw (445.7,285.67) node [anchor=north west][inner sep=0.75pt]  [xscale=0.75,yscale=0.75]  {Cones of BPS 0-branes};

\end{tikzpicture}
\vspace*{0.cm}
    \caption{A schematic illustration of the BPS-B and BPS-BB cones associated to BPS 1-branes and BPS 0-branes in M-theory compactified on a Calabi--Yau threefold $X$.}
    \label{fig:Cones in M-theory on Calabi--Yau threefolds}
\end{figure}

As reviewed above, the BPS objects in these theories consist of electric BPS 0-branes and magnetic BPS 1-branes. From a top-down perspective, the physics of particles and strings is encoded in the geometric properties of the associated 2-cycles and 4-cycles that are wrapped by the M2-branes and the M5-branes, respectively. See, for example,~\cite{Witten:1996qb,Katz:2020ewz,Long:2021lon,Alim:2021vhs,Gendler:2022ztv}.

From a geometric perspective, for an M5-brane wrapping a divisor $D\subset X$ to give a BPS 1-brane, the divisor class must be \textit{effective}. When the divisor class admits a holomorphic effective representative, the volume of the divisor is calibrated by the K\"ahler form and the BPS relation between the tension and the central charge of the magnetic string is saturated. Equivalently, a divisor class $[D]$ is effective in $X$ if and only if
\begin{equation}
    h^0(X,\cO_X(D))>0\,.
\end{equation}
Hence, the relevant cone for the BPS 1-brane charges is the \textit{effective cone} of divisors, $\cE(X)$. In general, determining all generators of the effective cone of a Calabi--Yau threefold is difficult.
For example, when $X$ is a toric Calabi--Yau threefold constructed as a hypersurface inside a toric fourfold, the toric effective cone $\cE_{\rm toric}(X)\subseteq \cE(X)$ is generated by the classes of prime toric divisors in $X$ which are prime divisors in the toric fourfold restricted onto $X$. However, there can exist \textit{autochthonous divisors} that lie outside of the toric effective cone, yet are effective in $X$~\cite{Demirtas:2018akl}.

Similar to the 6d scenario above, the cone of BPS black 1-brane charges is the \textit{ample cone}, or equivalently, the \textit{K\"ahler cone}, $K(X)\subset H^{1,1}(X)\cap H^2(X,\mathbb{R})$.
For an integral divisor class divisor $[D]\in K(X)$, $[D]$ is ample if and only if it satisfies
\begin{equation}
    D^3>0\,,\qquad D^2.D_I>0\,,\qquad D.C>0\,.
\end{equation}
for all effective divisors $D_I$ and effective curves $C$ in $X$.

At last, the closure of the K\"ahler cone is the \textit{nef cone}. Namely, for a divisor $[D]\in H_4(X,\mathbb{R})$ to be numerically eventually free (nef), it must satisfy
\begin{equation}
    D^3\geq 0\,,\qquad D.D_I\geq 0\,,\qquad D.C\geq 0\,.
\end{equation}
for all effective divisors $D_I$ and curves $C$ in $X$. 
When the extremal rays of the nef cone are non-big nef divisors, namely $D^3=0$, the boundary ray generated by this non-big nef divisor class in $\partial\cC_{\rm BPS-BB}$ coincides with a boundary of $\cC_{\rm BPS-B}$. Then, in addition to signaling the appearance of massless/tensionless states leading to the emergent string conjecture~\cite{Lee:2019wij}, wrapping such divisors with M5-branes correspond to supergravity strings in the 5d theory~\cite{Katz:2020ewz}.

The relevant cone for BPS 0-brane charges is the \textit{Mori cone} $M(X)$, which is generated by effective curve classes. In other words, the Mori cone of $X$ is
\begin{equation}
    M(X)=\bigg\{\sum_i a_i[C_i]\bigg\vert a_i\in\mathbb{R}_{\geq 0}\,,[C_i]\in H_2(X,\mathbb{Z})\text{ is an effective curve class}\bigg\}\,.
\end{equation}
In particular, by the definition of nef, the nef cone is dual to the Mori cone, $M(X)=\Nef(X)^\vee$.
For BPS black 0-branes, the relevant cone in this geometric setting is the \textit{cone of movable curves}. Namely, a curve class is \textit{movable} if it is the pushforward from a projective birational model $X'\to X$ of the class of a complete intersection of ample divisors in $X'$~\cite{lazarsfeld2017positivity}. Indeed, the attractor solution for a BPS black 0-brane, given in \cref{eq:att Ze}, relates the BPS black 0-brane charge to $\tau_I=\frac12 C_{IJK}t^Jt^K$. At a regular attractor point $t_*\in K(X)$, this corresponds to the curve class obtained as the intersection of ample divisor classes, $J_*^2$. Therefore, any electric charge admitting a regular BPS black 0-brane attractor lies in the cone generated by intersections of ample divisor classes. The closure of this cone is the cone of movable curves.

The dictionary between $\cC_{\rm BPS-B}$ and $\cC_{\rm BPS-BB}$ and their associated geometric definitions in M-theory compactified on a Calabi--Yau threefold is summarized in \cref{fig:Cones in M-theory on Calabi--Yau threefolds}.

With the geometric definitions of the relevant curve classes and divisor classes associated to $\cC_{\rm BPS-B}$ and $\cC_{\rm BPS-BB}$, we obtain the duality among cones, $\cC_{\rm BPS-B}=\cC_{\rm BPS-BB}^\vee$. However, from the 5d $\cN=1$ supergravity perspective, this duality can also be derived from the attractor mechanism, as discussed in \cref{sec:duality between cones}. For a BPS black 1-brane of magnetic charge $p^I$, the attractor solution is $t_*^I=p^I/Z_{m,*}$ where $Z_{m,*}$ is also evaluated at the attractor point. Evaluating the electric central charge of a BPS 0-brane at the attractor point give $Z_{e}(q)\vert_{t=t_*}=q_It^I_*=\frac{3q_Ip^I}{Z_{m,*}}$. Rearranging this, we have
\begin{equation}
    \frac13Z_{m,*}Z_{e}(q)\vert_{t=t_*}=q_Ip^I\geq 0\,.
\end{equation}
The same reasoning, with the roles of 0-branes and 1-branes exchanged, gives the corresponding duality between BPS black 0-branes and BPS 1-branes.

\subsection{BPS completeness in black cones}
\label{sec:proof in Calabi--Yau threefold}

For 5d BPS black 0-branes, we don't have a general proof for M-theory compactified on a Calabi--Yau threefold $X$. However, if $X$ is constructed as a complete intersection in an ambient toric variety $V$, and if a BPS black hole charge comes from a movable curve class in $V$, we can construct BPS representatives using intersections of toric line bundles. Skauli~\cite{skauli} showed that this is the case if all the line bundles used to define the CICY threefolds are ample. Hence, the following theorem is true for this case.
\begin{theorem}
    For all smooth CICY threefolds $X$ constructed as an ample hypersurface inside a smooth projective toric variety, then all movable curves in $X$ are irreducible curves.
\end{theorem}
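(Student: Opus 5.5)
The plan is to reduce the statement about $X$ to toric geometry on the ambient variety $V$, and then build irreducible representatives by Bertini-type arguments. Throughout, write $\iota:X\hookrightarrow V$ for the inclusion.

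\emph{Step 1: identify the cones.} Since $X$ is an ample hypersurface and $\dim X=3$, the Lefschetz hyperplane theorem gives $H^2(V,\mathbb{Z})\cong H^2(X,\mathbb{Z})$. Dually, $\iota_*:H_2(X,\mathbb{Z})\to H_2(V,\mathbb{Z})$ is an isomorphism modulo torsion, so $N_1(X)\cong N_1(V)$.

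I would then use Skauli's description, which the paper invokes, together with the BDPP duality $\mathrm{Mov}=\cE^\vee$. The aim is to show that under this isomorphism every class in $\mathrm{Mov}(X)$ is the restriction of a class of the form
\[
X\cdot A\cdot B\,,
\]
where $A,B$ are nef divisors on some toric small $\mathbb{Q}$-factorial modification $V'$ of $V$. The chamber decomposition of the secondary fan supplies these models $V'$. Since $X$ is ample, its strict transform $X'\subset V'$ is isomorphic to $X$ away from codimension two, so such classes are indeed curve classes on $X$.

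\emph{Step 2: irreducible representatives for integral products.} Let $A,B$ be integral nef divisors on a smooth toric $V'$. They are basepoint free, so Bertini gives that for general members the intersection $C=X'\cap A\cap B$ is a smooth curve. Irreducibility amounts to connectedness, which I would prove separately in two regimes.

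When $A$ and $B$ are big, connectedness follows from Kodaira/Kawamata--Viehweg vanishing applied to the Koszul resolution of $\cO_C$, which gives $h^0(\cO_C)=1$.

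When $A$ is non-big, it defines a toric fibration $V'\to W$, and $X\cdot A\cdot B$ lies on a boundary ray of $\mathrm{Mov}(X)$. I would instead use that the restriction of the ample $X$ to a general fibre is ample there, hence connected. This gives irreducibility of $C$ also on these boundary rays, as the statement requires.

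\emph{Step 3: integral classes that are not products.} The main obstacle is an integral class $\gamma\in\mathrm{Mov}(X)$ that is a positive combination of several products $X\cdot A_i\cdot B_i$ but is not itself such a product. My first attempt is to exploit the toric structure: in a smooth toric chamber the nef cone is generated by integral classes, and multilinearity gives
\[
X\cdot(A+A')\cdot B=X\cdot A\cdot B+X\cdot A'\cdot B\,.
\]
I would try to show that the image of the integral nef lattice points under $(A,B)\mapsto X\cdot A\cdot B$ is saturated in $\mathrm{Mov}(X)\cap H_2(X,\mathbb{Z})$. If this fails, the fallback is deformation-theoretic. Take reducible representatives $C_1\cup C_2$, with $C_1,C_2$ meeting transversally, each moving in a basepoint-free family covering $X$. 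Then smooth the union to an irreducible curve, using $H^1(N_{C_1\cup C_2/X})=0$. This vanishing should be provable because both components are complete intersections of nef divisors, so their normal bundles are globally generated.
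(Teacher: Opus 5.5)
Your Step 1 fails, and the plan depends on it. Lefschetz and Skauli identify the lattices and the Mori cones of $X$ and $V$, but not the movable cones. By BDPP, $\mathrm{Mov}(X)=\cE(X)^\vee$, and $\cE(X)$ is in general strictly larger than the restriction of $\cE(V)$, because $X$ has flops that are not strict transforms of $X$ in any toric small modification of $V$. The paper's own $X_{2,86}$ (two ample divisors in $\mathbb{P}^4\times\mathbb{P}^1$) is a counterexample to your claim. $\mathbb{P}^4\times\mathbb{P}^1$ has a single chamber, and the products of nef generators are $D_1^2=(5,4)$, $D_1D_2=(4,0)$ and $D_2^2=0$. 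So your cone is $\{0\le q_2\le \frac45 q_1\}$, whereas $\mathrm{Mov}(X_{2,86})=\{0\le q_2\le q_1\}$. The missing region, which contains e.g.\ $(1,1)$, comes from the non-toric flop of the 16 curves; correspondingly $D_1-D_2$ is effective on $X$ but not on $V$. The same happens for a genuine ample hypersurface, e.g.\ bidegree $(2,4)$ in $\mathbb{P}^1\times\mathbb{P}^3$. There the covering involution of the double cover $X\to\mathbb{P}^3$ flops 64 curves and makes $4H_2-H_1$ effective. So no choice of $V'$ reaches all of $\mathrm{Mov}(X)$.

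Step 3 also fails. Saturation is false even inside the toric part: in the same example every non-negative integral combination of products $X\cdot A\cdot B$ has $q_2\in 4\mathbb{Z}$. Interior movable classes such as $(2,1)$, whose GV invariant is $231888$, are therefore never reached. The smoothing fallback cannot work on a Calabi--Yau threefold. For any local complete intersection curve, $\chi(N_{C/X})=-K_X\cdot C=0$, so $H^1(N)=0$ would force $H^0(N)=0$, contradicting that your components move in covering families. Concretely, for $C=X\cap A\cap B$ adjunction gives $\omega_C=\cO_C(A+B)$, hence $H^1(\cO_C(A))\cong H^0(\cO_C(B))^\vee\neq 0$, even though $N_{C/X}$ is globally generated.

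The idea you are missing is to bypass $\mathrm{Mov}(X)$ and complete intersections entirely and work with the Mori semigroup, as the paper does. The effective curve semigroup of $V$ is generated by the contractible classes $\beta$ of the toric MMP. For each $\beta$, the equations of $X$ restricted to lines of class $\beta$ in the fibres of the contraction give a section of a rank-$r$ bundle $M_{X,\beta}$ on a relative Grassmannian. Skauli shows $c_r(M_{X,\beta})\neq 0$, so the section vanishes somewhere and some such line lies in $X$. Via Lefschetz, every integral class of the Mori cone of $X$, in particular every integral movable class, is then a non-negative integral combination of classes of rational curves in $X$. What this delivers is effectiveness, which is what BPS occupancy needs. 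It does not give irreducible representatives of every integral movable class, so your plan also targets more than the paper's argument establishes.
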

\noindent In this case, all the relevant cones are simplicial and we give a brief sketch of the proof. The proof starts from the Lefschetz hyperplane theorem~\cite[Th. 3.1.17]{Lazarsfeld2004}. For a smooth ample CICY threefold $X \subset V$, the inclusion $X \hookrightarrow V$ identifies the relevant curve lattices, so one can first understand curve classes in the ambient smooth projective toric variety $V$. On $V$, we know that the effective curve semigroup is generated by special primitive classes called contractible classes. These are not arbitrary curve classes: each one comes from an equivariant toric contraction, hence from the toric Minimal Model Program (MMP). For physicists, this is closely analogous to the phase structure of a two-dimensional $\mathcal N=(2,2)$ NLSM/GLSM: varying the Kähler/FI parameters moves toward phases where certain curve classes shrink, producing contractions or birational transitions. The “standard generators” of the Mori cone are precisely the curve classes selected by the ambient toric MMP, or equivalently by the ambient toric phase structure.

The main issue is to show that these ambient generators are still represented by curves lying inside $X$. Fix a contractible class $[\beta]\in H_2(V, \mathbb{Z})$. The corresponding toric contraction has fibers containing lines of class $\beta$, and these lines are parametrized by a relative Grassmannian, let us denote by $Gr$. The defining equations of $X$ restricted to each such line, produce a section
$
s_X \in H^0(Gr,M_{X,\beta})
$
of a natural vector bundle $M_{X,\beta}$ over $Gr$. A zero of this section is a point $[\ell]\in Gr$, hence a particular line $\ell$, such that all defining equations of $X$ vanish identically on $\ell$; equivalently, $\ell \subset X$. If $M_{X,\beta}$ has rank $r$ and $s_X$ were nowhere zero, then $s_X$ would define an injection
$
\mathcal O_{Gr} \hookrightarrow M_{X,\beta}.
$
Thus the quotient $M_{X,\beta}/\mathcal O_{Gr}$ would have rank $r-1$, and by multiplicativity of Chern classes, this would force
$
c_r(M_{X,\beta})=0.
$
Skauli calculated that
$
c_r(M_{X,\beta})\neq 0,
$
Therefore, $s_X$ cannot be nowhere zero. That zero corresponds to one line $\ell$ on which the complete-intersection equations vanish everywhere, so $X$ contains a rational curve in the ambient contractible class $\beta$. Since Lefschetz identifies the curve lattices of $X$ and $V$, and the ambient effective semigroup is generated by these contractible classes, the Mori semigroup of $X$ is generated by rational curves coming from the standard toric MMP generators. This finishes the proof of \cref{conj:BPS completeness} for M-theory on CICY threefolds defined by ample hypersurfaces.

For BPS black 1-branes in the strict interior of $\cC_{\rm BPS-BB}^{(1)}$, a relevant version of \cref{conj:BPS completeness} is realized by the following geometric theorem.
\begin{theorem}
    For every smooth Calabi--Yau threefold $X$, every ample divisor class on $X$ is effective.
\end{theorem}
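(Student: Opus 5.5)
The plan is to mirror the argument behind \cref{theorem:6d effectiveness}: compute $\chi(X,\cO_X(D))$ with Hirzebruch--Riemann--Roch, kill the higher cohomology with Kodaira vanishing, and conclude $h^0>0$. Let $D$ be an integral ample divisor class on a smooth Calabi--Yau threefold $X$. Since $c_1(X)=0$, Hirzebruch--Riemann--Roch gives
\begin{equation}
    \chi(X,\cO_X(D))=\frac16 D^3+\frac1{12}\,c_2(X).D\,,
\end{equation}
because $\chi(X,\cO_X)=0$ for a Calabi--Yau threefold (with $h^{1,0}=h^{2,0}=0$ and $h^{3,0}=1$, or more generally from $\mathrm{td}_3=\frac1{24}c_1c_2=0$).

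Next we need the higher cohomology to vanish. Since $K_X\cong\cO_X$, we can write $\cO_X(D)=K_X\otimes\cO_X(D)$ with $D$ ample. Kodaira vanishing then gives $h^i(X,\cO_X(D))=0$ for $i>0$, and so
\begin{equation}
    h^0(X,\cO_X(D))=\chi(X,\cO_X(D))=\frac16 D^3+\frac1{12}\,c_2(X).D\,.
\end{equation}
This is a cleaner mechanism than the Serre duality argument used for surfaces.

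It remains to show this number is positive. Ampleness gives $D^3>0$. For the second term, Miyaoka's pseudo-effectivity of $c_2$ for minimal varieties (equivalently, Bogomolov--Gieseker for the tangent bundle of a Ricci-flat Kähler manifold, which is slope-stable) gives $c_2(X).H\geq0$ for every nef $H$. Hence $h^0(X,\cO_X(D))\geq \frac16D^3>0$, and $D$ is effective.

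The main obstacle is this last positivity step. HRR and Kodaira are routine, but $c_2(X).D\geq 0$ is a genuine input. I would justify it first via Yau's theorem: a Kähler--Einstein metric with $c_1=0$ gives the pointwise inequality $c_2\wedge\omega\geq0$ for a Kähler class $\omega$, taking $\omega$ in the class of $D$. The citation to Miyaoka is the fallback. Note that integrality of $D^3/6+c_2.D/12$ is automatic from Riemann--Roch, so no further arithmetic is needed.
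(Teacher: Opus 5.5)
Your proposal is correct and follows essentially the same argument as the paper. Hirzebruch--Riemann--Roch gives $\chi(X,\cO_X(D))=\frac16 D^3+\frac1{12}c_2(X).D$, Miyaoka's inequality gives $c_2(X).D\geq 0$, and the vanishing of higher cohomology then yields $h^0(X,\cO_X(D))=\chi>0$; the paper cites Kawamata--Viehweg for that last step, while your plain Kodaira vanishing suffices since $K_X$ is trivial, and your Ricci-flat-metric justification of $c_2(X).D\geq 0$ is a valid alternative to citing Miyaoka.
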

\noindent This corresponds to the BPS occupancy within the strict interior of the BPS black 1-brane cone in M-theory compactified on Calabi--Yau threefolds. We now review the proof of effectiveness for ample divisor classes, following~\cite{Katz:2020ewz,Gendler:2026uux}. Consider the 2d $(0,4)$ string worldsheet theory from an M5-brane wrapping a divisor $D$ in a Calabi--Yau threefold $X$. In this case, the transverse rotation group of the string is $SO(3)_N$, whose double cover $SU(2)_R$ is identified with the right-moving $R$-symmetry of the (0,4) worldsheet theory. We denote the second Chern class of the $SU(2)_R$ bundle by $c_2(R)$. In addition to the Chern--Simons term in \cref{eq:5d action}, there is also a mixed gauge-gravitational Chern--Simons term in the action. For convenience of the anomaly-inflow computation, we pass from the supergravity normalization in \cref{eq:5d action} to the standard topological normalization in which the gauge fields have integral periods and the relevant Chern--Simons couplings appear as
\begin{equation}
    S_5\supset -\frac{1}{6}\int C_{IJK}A^I\wedge F^J\wedge F^K-\frac{1}{48}\int c_{2,I}A^I\wedge p_1(TM_5)\,,
\end{equation}
where $c_{2,I}=\int_{D_I}c_2(X)$ and $TM_5$ denotes the tangent bundle to $M_5$. In the presence of a string with magnetic charge $p^I$, the Bianchi identity becomes $\dd F^I=2\pi p^I\delta_3(\Sigma)$ where $\Sigma$ denotes the 2d string worldsheet. Then, the anomaly inflow induced by the bulk Chern--Simons terms for an M5-brane wrapping a divisor $D=p^ID_I$ in $X$ gives the following four-form anomaly polynomial on the string worldsheet
\begin{equation}
    I_4=\frac12 C_{IJK}p^KF^IF^J+\bigg(\frac16 D^3+\frac{1}{12}c_2(X).D\bigg)c_2(R)-\bigg(\frac{1}{48}c_2(X).D\bigg)p_1(T\Sigma)+\dots\,,
\end{equation}
where $\dots$ denotes possible flavor and mixed anomaly terms. Comparing to the general form of the anomaly polynomial given in \cref{eq:2d anomaly polynomial}, we can read off the right-moving and left-moving central charges of the worldsheet theory. In particular, we obtain~\cite{Maldacena:1997de}
\begin{equation}
    c_R=D^3+\frac12c_2(X).D\,,\qquad c_L=D^3+c_2(X).D\,.
\end{equation}
The corresponding current algebra levels are~\cite{Katz:2020ewz}
\begin{equation}
    k_{IJ}=C_{IJK}p^K\,,\qquad k_R=\frac16 D^3+\frac{1}{12}c_2(X).D\,.
\end{equation}
Having identified these worldsheet quantities, we compactifying the theory on $S^1$ and consider the string wrapping the spatial circle with quantized momenta $n$. In the Cardy regime $n\gg c_L$, the entropy of the resulting 4d BPS black hole is
\begin{equation}
\label{eq:5d MSW entropy}
    S=2\pi \sqrt{\frac{c_Ln}{6}}=2\pi\sqrt{n\bigg(\frac16 D^3+\frac16 c_2(X).D\bigg)}\,.
\end{equation}
We can express the above entropy in terms of the holomorphic Euler characteristic of the line bundle $\cO_X(D)$. By the Hirzebruch--Riemann--Roch theorem applied to Calabi--Yau threefolds, we have
\begin{equation}
    \chi_h(X,\cO_X(D))=\frac16 D^3+\frac{1}{12}c_2(X).D\,.
\end{equation}
Hence, \cref{eq:5d MSW entropy} can be rewritten as
\begin{equation}
    S=2\pi \sqrt{n\bigg[\chi_h(X,\cO_X(D))+\frac{1}{12}c_2(X).D\bigg]}\,.
\end{equation}
Therefore, the entropy of the 4d BPS black hole is naturally sensitive to the holomorphic Euler characteristic of the line bundle $\cO_X(D)$. The positivity of the entropy of BPS black holes then translates to $\chi_h(X,\cO_X(D))+\frac{1}{12}c_2(X).D>0$. This inequality is not sufficient in showing that the divisor class $[D]$ is effective. However, in what follows, we will argue that with additional geometric inputs, the divisor classes associated to BPS black 1-branes with charges in the strict interior of $\cC_{\rm BPS-BB}^{(1)}$ in 5d are indeed effective.

From the geometric perspective, on $X$, each string charge class $D\in H^2(X,\mathbb{Z})$ determines a line bundle $\cO_X(D)$, and the existence of a BPS string within the charge class is equivalent to the existence of a nonzero section of the line bundle. When $D$ is a BPS black string charge in the strict interior of $\cC_{\rm BPS-BB}$, we have $D^3>0$ and $c_2(X).D\geq0$~\cite{miyaoka1987chern}. Hence, $\chi_h(X,\cO_X(D))>0$. By the Kodaira--Viehweg vanishing theorem~\cite{kawamata1982generalization,viehweg1982vanishing}, $h^i(X,\cO_X(D))=0$ for $i>0$, hence we have $h^0(X,\cO_X(D))>0$ and there exists a BPS string within the charge class $[D]$.

It is worth noting that the corresponding statement for nef divisor classes is stronger and remains conjectural in general for smooth Calabi--Yau threefolds. Such a statement would extend the above theorem for ample divisor to include its boundary, and is also closely related to the non-vanishing and abundance conjectures~\cite{lazic2020generalised,lazarsfeld2017positivity,Katz:2020ewz}.
\begin{conj}
    For every smooth Calabi--Yau threefold $X$, every non-zero nef divisor class $D$ is effective.
\end{conj}

\subsection{Examples}
\label{sec:5d examples}

The most natural relation between the BPS-B and BPS-BB cones is $\cC_{\rm BPS-BB}\subseteq \cC_{\rm BPS-B}$, which we have argued for in \cref{sec:duality between cones}. There are two scenarios that can occur and we highlight their physics through the example geometries studied below.
\begin{enumerate}
    \item The first scenario is when $\cC_{\rm BPS-BB}=\cC_{\rm BPS-B}$. In this case, \cref{conj:BPS completeness} directly implies that BPS completeness holds in the BPS-B cone. This is illustrated via the example in \cref{sec:bicubic}.
    \item The second scenario is when the inclusion is strict, $\cC_{\rm BPS-BB}\subset \cC_{\rm BPS-B}$. In this case, we encounter two further possibilities:
    \begin{itemize}
        \item the theory exhibits BPS completeness in the BPS-B cone. This is illustrated in \cref{sec:CICY};
        \item the theory exhibits BPS completeness in the BPS-BB cone, while there is a non-BPS semigroup contained within the BPS-B cone. The non-BPS holes in the BPS-B cone can either appear strictly on the boundary, as illustrated in \cref{sec:Calabi--Yau threefold with boundary holes}, or appear in the interior at higher $h^{1,1}$'s (from toric Calabi--Yau threefolds in the Kreuzer--Skarke database), as illustrated in \cref{sec:Calabi--Yau threefold with interior holes}. It is worth noting that the existence of these non-BPS holes as well as the mechanism by which they are found, is similar to those in \cref{sec:6d holes}. In particular, they are all distinct from the apparent holes found in \cref{sec:Enriques}, where the seemingly non-effective classes arise from twisting the relevant line bundle by the torsion canonical class and are projected out upon identifying the correct charge lattice.
    \end{itemize}
\end{enumerate}
Regardless of whether BPS completeness holds in the full BPS-B cone, the BPS-B and BPS-BB cones always exhibit the expected dualities in all examples. Furthermore, the tension of BPS 1-branes are remarkably consistent with the general properties discussed in \cref{sec:tension properties}. We also observe that when there is a non-big nef divisor present in $H_4(X)$, the associated BPS 1-brane obtained from an M5-brane wrapping this divisor is along the wall shared by both $\cC_{\rm BPS-BB}^{(1)}$ and $\cC_{\rm BPS-B}^{(1)}$ and hence its tension vanishes in the infinite-distance limit of moduli space with a decay rate of $e^{-\alpha \Delta}$ expected from the distance conjecture~\cite{Ooguri:2006in} where $\Delta$ denotes the geodesic distance traversed in moduli space and $\alpha\sim \cO(1)$. Lastly, the minimum tension of the primitive generators of $\cC_{\rm BPS-B}$ and $\cC_{\rm BPS-BB}$ satisfy the elementary constituent conjecture~\cite{Nevoa:2025xiq}.

To verify \cref{conj:BPS completeness} in all these example geometries we study, we compute the genus-0 Gopakumar--Vafa (GV) invariants of curve classes~\cite{Gopakumar:1998jq} and $h^0(X,\cO_X(D))$ for divisor classes to determine their effectiveness.\footnote{As emphasized in~\cite{Gendler:2022ztv}, a non-zero genus-0 GV invariant for a curve class is a sufficient but not necessary condition for the resulting 0-brane from M2-branes wrapping this curve to be supersymmetric.} Among the studied geometries in this section, we observe another refinement of \cref{conj:BPS completeness} in the BPS black 0-brane cone. Namely, not only are all integral sites within the BPS black 0-brane cone occupied by BPS states, the corresponding genus-zero GV invariants are positive. In the geometric language, this observation leads to the following conjecture.
\begin{conj}
\label{conj:GV BPS-BB}
    Every integral curve classes in the cone of movable curves for a smooth Calabi--Yau threefold has positive genus-zero GV invariant.
\end{conj}

\subsubsection{Bicubic in $\mathbb{P}^2\times \mathbb{P}^2$}
\label{sec:bicubic}

Let us first consider a Calabi--Yau threefold $X$ constructed as an anti-canonical hypersurface in the ambient space $V=\mathbb{P}^2\times \mathbb{P}^2$ as in e.g.,~\cite{Long:2021lon}. This bicubic takes the configuration
\begin{equation}
    X_{\rm bicubic}=
    \left(
    \begin{array}{c|c}
    \mathbb{P}^2 & 3 \\
    \mathbb{P}^2 & 3
    \end{array}
    \right)\,.
\end{equation}
This geometry and its associated cones have been studied in~\cite{ottem2015birational}. A convenient basis of divisors is given by the pullbacks of the hyperplane classes onto the Calabi--Yau threefold, $D_i=\pi_i^* H$ where $\pi_i:X\to \mathbb{P}^2$ denotes the projection to the $i$-th factor and $i=1,2$. Then, the intersection data is
\begin{equation}
    D_1.D_1.D_1=D_2.D_2.D_2=0\,,\qquad D_1.D_1.D_2=D_1.D_2.D_2=3\,.
\end{equation}
We can express the K\"ahler 2-form in this basis as $J=t^1[D_1]+t^2[D_2]$.
Then, the prepotential in this basis is
\begin{equation}
    \cF=\frac12 t^1t^2(t^1+t^2)\,.
\end{equation}
and the metric is simply
\begin{equation}
    G_{IJ}=\frac12\begin{pmatrix}
        \frac{1}{(t^1)^2}+\frac{1}{(t^1+t^2)^2}&\frac{1}{(t^1+t^2)^2}\\
        \frac{1}{(t^1+t^2)^2}&\frac{1}{(t^2)^2}+\frac{1}{(t^1+t^2)^2}
    \end{pmatrix}\,.
\end{equation}

From geometry, we can deduce the relevant cones of curves and divisors~\cite{ottem2015birational}. Starting with curves, a convenient basis of effective curves is $C_1=D_2.D_2$ and $C_2=D_1.D_1$. Thus, the Mori cone is
\begin{equation}
    \cC_{\rm BPS-B}^{(0)}=M(X_{\rm bicubic})=\big\{a_1[C_1]+a_2[C_2]\big\vert a_1,a_2\in\mathbb{R}_{\geq 0}\big\}\,.
\end{equation}
In this example, as the dual divisors $D_1$ and $D_2$ are both nef divisors, the cone of movable curves is identical to the Mori cone, $\mathrm{Mov}[X_{\rm bicubic}]=\cM[X_{\rm bicubic}]$.
For divisors, knowing the (Hilbert) basis of effective divisors, we have
\begin{equation}
    \cC_{\rm BPS-B}^{(1)}=\cE(X_{\rm bicubic})=\big\{a_1[D_1]+a_2[D_2]\big\vert a_1,a_2\in\mathbb{R}_{\geq 0}\big\}\,.
\end{equation}
Similar to the case of curves, the nef cone is equivalent to the effective cone, $\mathrm{Nef}[X_{\rm bicubic}]=\cE[X_{\rm bicubic}]$. Therefore, the K\"ahler cone is
\begin{equation}
    K(X_{\rm bicubic})=\big\{(t^1,t^2)\in \mathbb{R}_+^2\big\}\,,
\end{equation}
with the closure given by $\overline{K(X_{\rm bicubic})}=\Nef(X_{\rm bicubic})$.
To study the boundary of the closure of the K\"ahler cone, let us recall that for a curve realized as a complete intersection of divisors in $X$, namely $C=D\cap D'\subset X$, the adjunction formula is
\begin{equation}
    K_C=(K_X+D+D')\vert_C=(D+D')\vert_C\,,\qquad \Rightarrow \qquad 2g(C)-2=(D+D').D.D'\,,
\end{equation}
where $g(C)$ denotes the genus of the curve.
For the bicubic, the basis of curves are $C_1,C_2$ which generate the Mori cone. Using the adjunction formula, we have
\begin{equation}
    g(C_1)=g(C_2)=1\,.
\end{equation}
Therefore, the curves that shrink along the boundaries of the K\"ahler cone are not rational $\cO_{\mathbb{P}^1}(-1)\oplus \cO_{\mathbb{P}^1}(-1)$ curves (flop curves), instead they are genus-one fibers. In particular, the volume of $C_1$ is $\volume(C_1)=3t^2$. The infinite-distance limit along the constant-volume slice at which this curve shrinks to zero size is $t^1\sim (t^2)^{-1/2}\to\infty$. The divisor that shrinks to zero size along this direction is $D_1$, whose calibrated volume behaves as $\volume(D_1)=3t^1t^2+3(t^2)^2/2\sim (t^1)^{-1}$. In this limit, the shrinking curve and divisor are related by $\volume(C_1)\sim \volume(D_1)^{1/2}$. Therefore, the leading light tower is the Kaluza--Klein (KK) tower indicating a decompactification limit along this boundary of the K\"ahler moduli space. By symmetry, the analogous analysis for $C_2$ gives an infinite-distance limit $t^2\sim (t^1)^{-1/2}\to \infty$ where $C_2$ shrinks to zero size. In this case, we also have $\volume(C_2)\sim \volume(D_2)^{1/2}$ indicating a decompactification limit in the moduli space. Therefore, the infinite-distance physics associated to the two walls of the K\"ahler moduli space both corresponds to a decompactification limit.

With this understanding, we can turn to analyzing the black brane solutions anticipated from the 5d $\cN=1$ supergravity action. The black 0-brane potential for a particle of charge $q=(q_1,q_2)$ is
\begin{equation}
    V_{\rm BH}=\frac{-2q_1q_2(t^1)^2(t^2)^2+q_2^2(t^2)^2\big(2(t^1)^2+2t^1t^2+(t^2)^2\big)+q_1^2(t^1)^2\big((t^1)^2+2t^1t^2+2(t^2)^2\big)}{(t^1)^2+t^1t^2+(t^2)^2}\,.
\end{equation}
One can solve for an extremum of the above effective potential after restricting to the constant-volume slice, but a simpler approach is to solve for the attractor equation using the central charge and a Lagrange multiplier $\lambda$ to enforces the constant-volume condition
\begin{equation}
    \partial_I\big(q_It^I-\lambda(\cF-1)\big)=0\,,\qquad \Rightarrow \qquad q_I=\frac13 Z \partial_I\cF\,.
\end{equation}
This gives the unique attractor point associated to the charge $q$
\begin{equation}
    r_*:=\frac{t_*^1}{t_*^2}=\frac{q_2-q_1+\sqrt{q_1^2-q_1q_2+q_2^2}}{q_1}\,.
\end{equation}
Then, together with the constant-volume constraint, we arrive at the unique attractor point associated to a given charge $q$
\begin{equation}
    t_*^2=\bigg(\frac{2}{r_*(r_*+1)}\bigg)^{1/3}\,,\qquad t_*^1=r_*t_*^2\,,
\end{equation}
which must lie in the K\"ahler moduli space.
As $(t_*^1,t_*^2)$ remains positive when $q_1,q_2>0$, the 0-brane with positive charge $q=(q_1,q_2)$ admits a BPS black 0-brane solution to the attractor mechanism (when $q$ is sufficiently large). Hence, the BPS black 0-brane cone is
\begin{equation}
    \cC_{\rm BPS-BB}^{(0)}=\mathrm{Mov}(X_{\rm cubic})=\big\{a_1[C_1]+a_2[C_2]\big\vert (a_1,a_2)\in\mathbb{R}_{\geq 0}^2\big\}\,,
\end{equation}
which is consistent with the expectation that $\cC_{\rm BPS-BB}^{(0)}\subset \cC_{\rm BPS-B}^{(0)}$. Furthermore, we can see that this is indeed dual to the effective cone $\cE(X_{\rm bicubic})$, namely $(\cC_{\rm BPS-BB}^{(0)})^\vee=\cC_{\rm BPS-B}^{(1)}$.

\begin{table}[!tp]
\begin{adjustwidth}{-1in}{-1in}
    \centering
    \begin{tabular}{c|c|c|c|c|c|c}
        \diagbox[width=1.5cm,height=0.7cm]{$q_2$}{$q_1$} & 0 & 1 & 2 & 3 & 4 & 5  \\
        \hline
        0 & \Pnk \textasteriskcentered & \Pnk 189 & \Pnk 189 & \Pnk 162 & \Pnk 189 & \Pnk 189 \\
        \hline
        1 & \Pnk 189 & \Pnk 8262 & \Pnk 142884 & \Pnk 1492290 & \Pnk 11375073 & \Pnk 69962130 \\
        \hline
        2 & \Pnk 189 & \Pnk 142884 & \Pnk 13108392 & \Pnk 516953097 & \Pnk 12289326723 & \Pnk 206210244204\\
        \hline
        3 & \Pnk 162 & \Pnk 1492290 & \Pnk 516953097 & \Pnk 55962304650 & \Pnk 3154647509010 & \Pnk 114200061474474 \\
        \hline
        4 & \Pnk 189 & \Pnk 11375073 & \Pnk 12289326723 & \Pnk 3154647509010 & \Pnk 366981860765484 & \Pnk 25255131122299086 \\
        \hline
        5 & \Pnk 189 & \Pnk 69962130 & \Pnk 206210244204 & \Pnk 114200061474474 & \Pnk 25255131122299086 & \Pnk 3057363233014221000\\
    \end{tabular}
\tikz[baseline,remember picture,overlay]{
  \draw[red,thick,->] (-16.50,1) -- (-0.1,1cm);
  \draw[red, thick,->] (-16.50,1) -- (-16.50,-1.75cm);
}
\tikz[baseline,remember picture,overlay]{
  \draw[blue,thick,dashed,->] (-16.62,1) -- (-0.1,1cm);
  \draw[blue,thick,dashed,->] (-16.62,1) -- (-16.62,-1.9cm);
  \fill[black,opacity=0.3] (-16.62,0.5) circle[radius=5pt];
  \fill[black,opacity=0.3] (-15.1,1) circle[radius=5pt];
  \fill[black,opacity=0.3] (-15.1,0.5) circle[radius=5pt];
}
\end{adjustwidth}
    \caption{The genus-0 GV invariants for curves classes $(q_1,q_2)\in H_2(X_{\rm bicubic},\mathbb{Z})$. The shaded blue region indicates the cone of movable curves which in this geometry is identical to the Mori cone. Hence, the (solid and dashed) lines indicate the extremal rays of both the Mori cone and the cone of movable curves.}
    \label{tab:GV bicubic}
\end{table}

\begin{figure}[!tp]
    \centering
    \begin{subfigure}{.475\textwidth}
        \includegraphics[width=\linewidth]{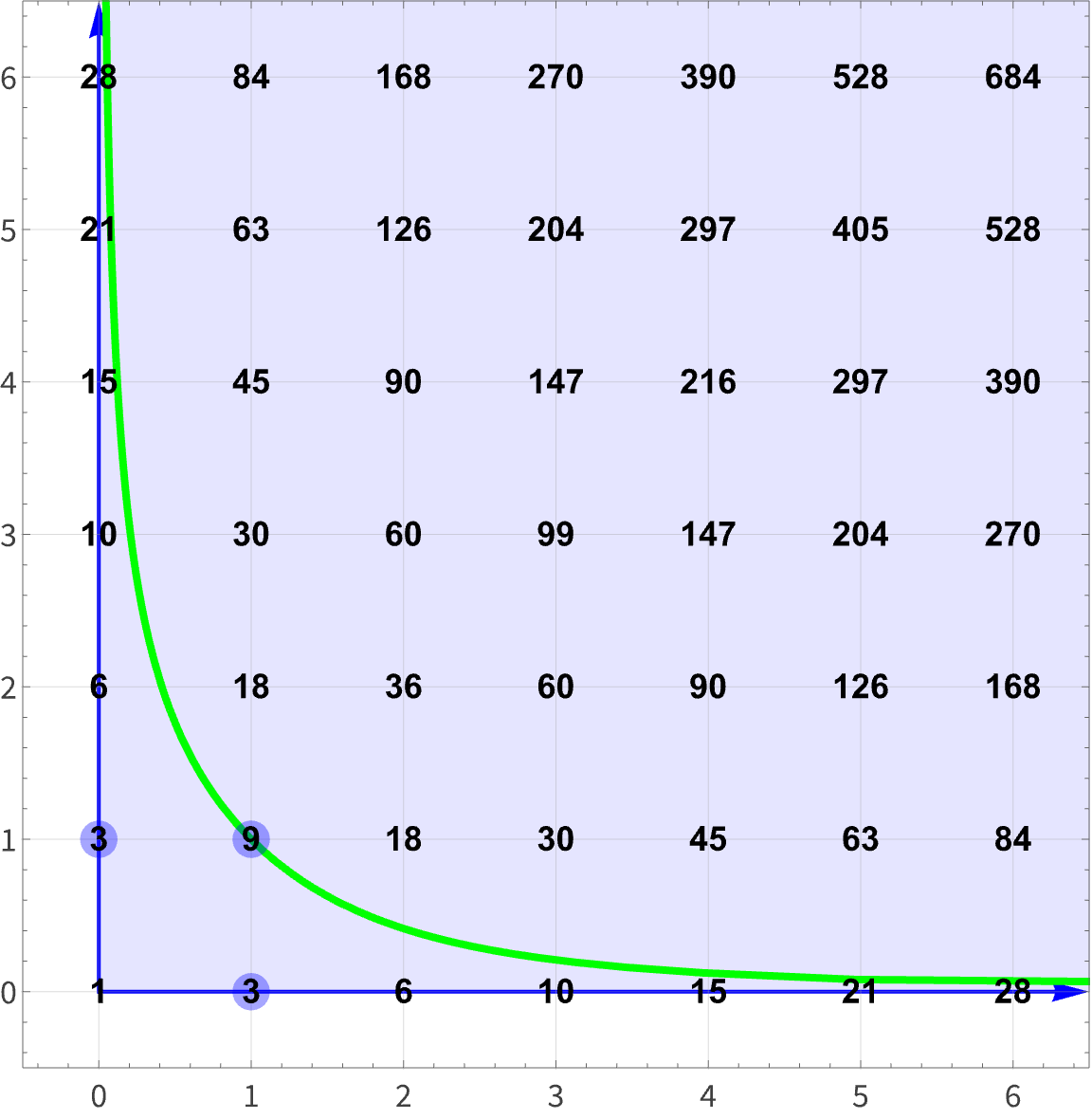}
        \begin{picture}(0,0)\vspace*{-1.2cm}
            \put(-20,140){$t^2$}
            \put(120,0){$t^1$}
        \end{picture}
        \vspace*{-0.25cm}
        \caption{$\cC_{\rm BPS-BB}^{(1)}=\cC_{\rm BPS-B}^{(1)}$}
        \label{fig:Xbicubiccones}
    \end{subfigure}
    \hfill
    \begin{subfigure}{.475\textwidth}
        \includegraphics[width=\linewidth]{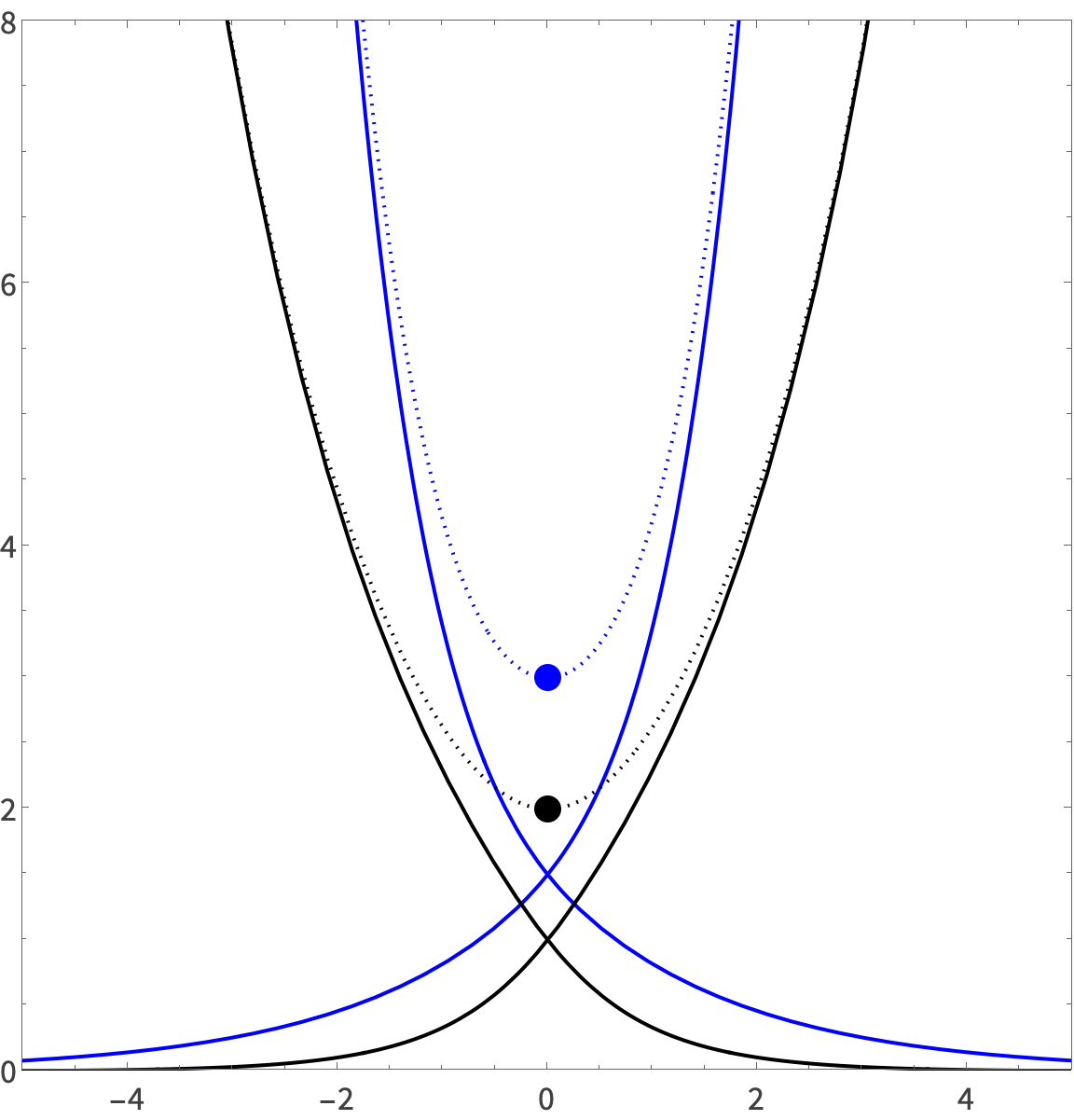}
        \begin{picture}(0,0)\vspace*{-1.2cm}
            \put(-10,140){$T$}
            \put(120,0){$\Delta$}
            \put(107,130){\footnotesize \textcolor{blue}{$(1,1)$}}
            \put(20,40){\footnotesize \textcolor{blue}{$(1,0)$}}
            \put(193,40){\footnotesize \textcolor{blue}{$(0,1)$}}
            \put(107,90){\footnotesize {$(1,1)$}}
            \put(107,90){\footnotesize {$(1,1)$}}
            \put(25,160){\footnotesize {$(1,0)$}}
            \put(185,160){\footnotesize {$(0,1)$}}
        \end{picture}
        \vspace*{-0.25cm}
        \caption{Tension of BPS 1-branes}
        \label{fig:Xbicubictension}
    \end{subfigure}
    \caption{The properties of BPS 1-branes in M-theory compactified on $X_{\rm bicubic}$. In \cref{fig:Xbicubiccones}, we have shown the simplicial cones $\cC_{\rm BPS-BB}^{(1)}=\cC_{\rm BPS-B}^{(1)}$ in red. The constant-volume slice in the K\"ahler moduli space is illustrated in green. We also highlight the divisor classes whose volumes are studied in the right figure. Additionally, we overlay the values of $h^0(X_{\rm bicubic},\cO_{X_{\rm bicubic}}([D]))$ for divisor classes in $\cC_{\rm BPS-B}^{(1)}$. In \cref{fig:Xbicubictension}, we illustrate the masses of BPS 0-branes (in black) and the tensions of BPS 1-branes (in blue) as a function of the canonically normalized scalar along the constant-volume slice. The curve classes are specified in the basis $(q_1,q_2)\in \cC_{\rm BPS-B}^{(0)}$ and the divisor classes are specified in the basis $(p^1,p^2)\in \cC_{\rm BPS-B}^{(1)}$. The minima of their masses or tensions are indicated by the solid dot. Here, the dotted lines indicate curve classes or divisor classes strictly in the interior of $\cC_{\rm BPS-BB}$ while the solid lines indicate those along the boundaries of $\cC_{\rm BPS-BB}$.}
    \label{fig:Xbicubic}
\end{figure}

The effective potential for a black 1-brane of charge $p=(p^1,p^2)$ is
\begin{equation}
    V_{\rm BS}=2\bigg[\frac{(p^1)^2}{(t^1)^2}+\frac{(p^2)^2}{(t^2)^2}+\frac{(p^1+p^2)^2}{(t^1+t^2)^2}\bigg]\,.
\end{equation}
Similarly, we can examine whether the magnetic central charge admits a minimum inside the K\"ahler cone. To this end, the solutions to the black 1-brane attractor mechanism, restricted to the constant-volume slice, are
\begin{equation}
    t^I_*=\frac{p^I}{\big(\frac12 p^1p^2(p^1+p^2)\big)^{1/3}}\,.
\end{equation}
As the K\"ahler cone constraints are $t^1,t^2>0$, the magnetic string with charge $p=(p^1,p^2)$ admits a BPS black 1-brane solution only when both $p^1,p^2>0$. Hence, the BPS black 1-brane cone is the closure of the ample cone
\begin{equation}
    \cC_{\rm BPS-BB}^{(1)}=\Nef(X)\,.
\end{equation}
With this identification, the duality $\cC_{\rm BPS-BB}^{(1)}=(\cC_{\rm BPS-B}^{(0)})^\vee$ is satisfied.

There also exist non-BPS solutions to the attractor mechanism in this theory as was noted in~\cite{Long:2021lon}. Nevertheless, in all such cases, while the extremal non-BPS solution would flow to an attractor point where $D_IV=0$, it no longer satisfies $\partial_i Z=0$ in the interior of the K\"ahler cone. This would necessarily imply that for all extremal black $p$-branes in this theory, the tensions of these $p$-branes satisfy the following strict inequalities
\begin{equation}
    |Z_{\rm min}|<T(t^I)<T^{\rm Ext}\,,
\end{equation}
for all $t^I\in K(X_{\rm bicubic})$.

In addition to our diagnosis \cref{eq:BPS predict}, there are independent methods to determine if the particle or string is supersymmetric. In this model, these explicitly. 

To determine whether a curve class is effective (and, hence, whether the resulting particle is supersymmetric), we compute the genus-0 GV invariants associated to the curve class~\cite{Gopakumar:1998jq}. For this particular geometry, the GV invariants have been previously computed in~\cite{Hosono:1994ax}. 
Let us briefly review the computations here, as we will also use this algorithm to compute the genus-0 GV invariants for curve classes in later geometries.\footnote{One can also find a review for calculating genus-0 GV invariants for CICYs in~\cite[App. A]{Alim:2021vhs}.} To start, the relevant Picard--Fuchs equations for the bicubic are
\begin{equation}
    L_i=\theta_i^3-z_i\prod_{I=1}^3(3(\theta_1+\theta_2)+I)\,,
\end{equation}
where $i=1,2$, $z_1,z_2$ denote the complex-structure parameters in the B-model, and $\theta_i:=z_i\partial_{z_i}$.
The fundamental period is then the following hypergeometric series
\begin{equation}
    \varpi_0(z_1,z_2)
    =\sum_{m,n\geq 0}\frac{(3m+3n)!}{(m!)^3(n!)^3}z_1^mz_2^n
    =\sum_{m,n\geq 0}\frac{\Gamma(1+3m+3n+3\rho_1+3\rho_2)}{\Gamma(1+m+\rho_1)^3\Gamma(1+n+\rho_2)^3}z_1^{m+\rho_1}z_2^{n+\rho_2}\,,
\end{equation}
where we have introduced the Frobenius deformation parameters $\rho_1,\rho_2$.
From here, we can introduce
\begin{equation}
    \varpi_i=\frac{1}{2\pi i}\partial_{\rho_i}\varpi_0\big\vert_{\rho=0}\,.
\end{equation}
This gives the mirror map $t^i(z)=\varpi_i/\varpi_0$, where $t^i$ denote the K\"ahler parameters in the A-model.

With the definition of the K\"ahler parameters, the prepotential of the theory is
\begin{equation}
    \cF=\frac16C_{ijk}^{(0)}t^it^jt^k+\frac{1}{24}c_{2,i}t^i+\frac{\zeta(3)}{2\pi i}\chi(X_{\rm bicubic})+\frac{1}{2\pi i}\sum_{(d_1,d_2)\in \cM}n_{d_1,d_2}^{(0)}\mathrm{Li}_3(q_1^{d_1}q_2^{d_2})\,,
\end{equation}
where $\chi(X_{\rm bicubic})=-162$, $n_{d_1,d_2}^{(0)}$ denotes the genus-0 GV invariants associated to the curve class $(d_1,d_2)$, and $q_i=\exp(2\pi i\,t^i)$. From here, we can extract the quantum-corrected triple intersection number via
\begin{equation}
\label{eq:quantum Cijk 1}
    C_{ijk}=\partial_{t^i}\partial_{t^j}\bigg[\frac12C_{abk}^{(0)}\partial_{\rho_a}\partial_{\rho_b}\log\varpi_0\big\vert_{\rho=0}\bigg]\,.
\end{equation}
Another equivalent derivation of the triple intersection number involves the genus-0 GV invariants as
\begin{equation}
\label{eq:quantum Cijk 2}
    C_{ijk}=C_{ijk}^{(0)}+\sum_{(d_1,d_2)\in \cM}n_{d_1,d_2}^{(0)}d_id_jd_k\frac{q_1^{d_1}q_2^{d_2}}{1-q_1^{d_1}q_2^{d_2}}\,.
\end{equation}
Expanding both \cref{eq:quantum Cijk 1} and \cref{eq:quantum Cijk 2} as $q$-series and matching the appropriate coefficients, one can extract the values of the genus-0 GV invariants.

Using this algorithmic description, we can perform this computation for a collection of CICY threefolds straightforwardly. A general implementation of this algorithm can be found in e.g., \verb+cygv+~\cite{cygv}. We have included a simplified version of this general code for favorable CICYs; the corresponding python implementation can be found in the ancillary files accompanying this submission.
Using this algorithm, the genus-0 GV invariants for a choice of curve class in the basis $(q_1,q_2)$ are shown in \cref{tab:GV bicubic}. We observe the symmetry of genus-0 GV invariants under the exchange $q_1\leftrightarrow q_2$ as noted in~\cite{Hosono:1994ax}. More importantly, we observe that integral lattice sites within the cone of movable curves indeed continue to have non-zero GV invariants. Since in this geometry the BPS 0-branes satisfy $\cC_{\rm BPS-BB}^{(0)}=\cC_{\rm BPS-B}^{(0)}$, we consequently observe BPS completeness in $\cC_{\rm BPS-B}^{(0)}$.

For a divisor class $[D]=a_1[D_1]+a_2[D_2]\in \cE(X_{\rm bicubic})$, we have the following short exact sequence
\begin{equation}
    0\to \cO_V((a_1-3)[\hat{D}_1]+(a_2-3)[\hat{D}_2])\to \cO_V(a_1[\hat{D}_1]+a_2[\hat{D}_2])\to \cO_{X}(a_1[D_1]+a_2[D_2])\to 0\,,
\end{equation}
where $D\subset X$ is the restriction of $\hat{D}\subset V$ to $X$ and $V$ denotes the ambient projective space. From this data, we can deduce the values of $h^0(X,\cO_X(D))$.
For the bicubic geometry, we have the following analytic expression
\begin{equation}
    h^0(X,\cO_X(a_1D_1+a_2D_2))=\binom{a_1+2}{2}\binom{a_2+2}{2}-H(a_1-3)H(a_2-3)\binom{a_1-1}{2}\binom{a_2-1}{2}\,.
\end{equation}
where $H$ is the Heaviside function.
Therefore, every integral divisor class in the effective cone is effective and admits a minimal-volume representative in its class.

Thus, with the explicit values of the genus-0 GV invariants for curve classes in the Mori cone along with $h^0(X_{\rm bicubic},\cO_{X_{\rm bicubic}}(D))$ for divisor classes in the effective cone shown in \cref{tab:GV bicubic} and \cref{fig:Xbicubiccones}, respectively, \cref{conj:BPS completeness} for $X_{\rm bicubic}$ is evidently satisfied.

Let us turn to the masses and tensions of BPS branes in the BPS-B cones. We begin with the masses of BPS 0-branes. The relevant three 0-branes are shown as black dots in \cref{tab:GV bicubic} and their masses are plotted as black curves in \cref{fig:Xbicubictension}. We can see that a BPS 0-brane whose charge lies strictly in the interior of $\cC_{\rm BPS-BB}^{(0)}$ acquires its minimum in the interior of the moduli space. By contrast, for charge sites on the boundaries of $\cC_{\rm BPS-BB}^{(0)}$ which coincides with the boundaries of $\cC_{\rm BPS-B}^{(0)}$, the corresponding masses vanish along the infinite-distance walls of moduli space. As the relevant cones are all simplicial in this geometry, the exact same behavior prevails among the three choices of BPS 1-branes shown as blue dots in \cref{fig:Xbicubiccones} and their tensions are illustrated as blue curves in \cref{fig:Xbicubictension}. Notably, as analyzed at the beginning of this example, the masses of the BPS 0-branes vanish faster than the tensions of the BPS 1-branes that become tensionless in the same limit. Hence, both infinite-distance limits correspond to decompactification limits in moduli space. Lastly, the BPS 0-branes and BPS 1-branes whose masses and tensions vanish along the infinite-distance limits shown in \cref{fig:Xbicubictension} are the primitive generators of their respective BPS-B cones (and equivalently their respective BPS-BB cones in this example). Hence, the elementary constituent conjecture~\cite{Nevoa:2025xiq} is satisfied among these boundary primitive BPS branes that generate the cones.

\subsubsection{CICY in $\mathbb{P}^4\times \mathbb{P}^1$}
\label{sec:CICY}

In the bicubic example, all relevant cones were simply the positive orthant in the chosen charge basis. Additionally, the extended K\"ahler cone associated with the birational family of $X_{\rm bicubic}$ has only one chamber. These features are not generic for Calabi--Yau threefolds as we will demonstrate via the following example. 

Let us consider the geometry originally studied in~\cite{Greene:1995hu,Greene:1996dh} and more recently studied in the context of the WGC in~\cite{Alim:2021vhs,reece2025co}. The Calabi--Yau threefold has $(h^{1,1},h^{2,1})=(2,86)$. We will refer to it as $X_{2,86}$. The Calabi--Yau threefold is defined by the following configuration matrix 
\begin{equation}
\label{eq:X286 config}
    X_{2,86}=\left(\begin{array}{c|cc}
        \mathbb{P}^4 & 4 & 1\\
        \mathbb{P}^1 & 1 & 1
    \end{array}\right)\,.
\end{equation}
The two prime toric divisor classes inherited from the ambient projective toric variety are
\begin{equation}
    D_1=\cO_{\mathbb{P}^4\times \mathbb{P}^1}(1,0)\vert_{X_{2,86}}\,,\qquad D_2=\cO_{\mathbb{P}^4\times \mathbb{P}^1}(0,1)\vert_{X_{2,86}}\,.
\end{equation}
These two divisor classes are effective on $X_{2,86}$ by construction and generate the effective cone, $\cE(X_{2,86})$. This CICY is indeed realized as a complete intersection of ample hypersurfaces in the ambient projective toric variety. The relevant cones associated to this geometry are all simplicial cones in $\mathbb{R}^2$. In the basis of prime toric divisors, where the K\"ahler form can be expressed as $J=t^1[D_1]+t^2[D_2]$, the prepotential and the K\"ahler cone are
\begin{equation}
\label{eq:CICY prepot I}
    \cF^{(I)}=\frac16\left(5(t^1)^3+12(t^1)^2t^2\right)\,,\qquad K_I(X_{2,86})=\{(t^1,t^2)\in\mathbb{R}_+^2\}\,.
\end{equation} 
Let us analyze the physics along the walls of the closure of the K\"ahler cone. From the configuration matrix given in \cref{eq:X286 config}, the defining equations of this geometry can be written as $y_0f_4(x)+y_1g_4(x)=0$ and $y_0l_1(x)+y_1m_1(x)=0$ where $[y_0:y_1]$ are homogeneous coordinates on the $\mathbb{P}^1$, $f_4,g_4$ are quartics, while $l_1,m_1$ are linear on $\mathbb{P}^4$. Projection onto the first factor maps $X_{2,86}$ to the quintic $f_4(x)m_1(x)-g_4(x)l_1(x)=0\subset \mathbb{P}^4$ whose singular points occur when $f_4(x)=g_4(x)=l_1(x)=m_1(x)=0$. For generic choices of sections, the number of such points is $16$. This corresponds to the 16 exceptional curves $C_e$ in $X_{2,86}$, which are contracted under the map to the singular quintic. These intersect with the prime toric divisors as $D_1.C_e=0$ and $D_2.C_e=1$, since $C_e$ lies entirely in the $\mathbb{P}^1$ fiber over a point in $\mathbb{P}^4$. Consequently, $\volume(C_e)=t^2$. 
The restriction of the normal bundle of $X_{2,86}$ in the ambient space to $C_e$ is $N_{X_{2,86}/(\mathbb{P}^4\times \mathbb{P}^1)}\vert_{C_e}\cong \cO_{\mathbb{P}^1}(1)\oplus \cO_{\mathbb{P}^1}(1)$.
As the curves $C_e$ are isolated, we also have $H^0(C_e,N_{C_e/X_{2,86}})=0$ which together with the Birkhoff--Grothendieck theorem~\cite[Cor. 5.2.8]{Huybrechts2005} forces 
\begin{equation}
    N_{C_e/X_{2,86}}\cong \cO_{\mathbb{P}^1}(-1)\oplus \cO_{\mathbb{P}^1}(-1)\,.
\end{equation}
Hence, the curves $C_e$ are flop curves in this geometry and the flop wall is located at $t^2=0$ in the K\"ahler moduli space. 
The other curve class relevant to the $t^1=0$ wall of the K\"ahler cone is $C=D_1.D_2$, whose volume is $\volume(C)=4t^1$. From the adjunction formula, we have $g(C)=3$; thus it is not a flop curve. This is also evident as the $t^1\to 0$ limit is an infinite-distance boundary along the constant-volume K\"ahler moduli space. This infinite-distance limit corresponds to the emergent weakly-coupled heterotic string limit, as has been discussed in~\cite{vandeHeisteeg:2023dlw}.

Thus, the extended K\"ahler cone has at least two phases, corresponding to two birationally equivalent Calabi--Yau threefolds connected across the flop wall discussed above. The difference in the prepotentials across these two phases is $\cF^{(II)}-\cF^{(1)}=-\frac{16}{6}(t^2)^3$, capturing the effect of the flopping 16 rational curves.
Using the same coordinates, the prepotential and K\"ahler cone in the second phase are~\cite{Alim:2021vhs} 
\begin{equation}
\label{eq:CICY prepot II}
    \cF^{(II)}=\frac{1}{6}\left(5(t^1)^3+12(t^1)^2t^2-16(t^2)^3\right)\,,
    \qquad K_{II}(X_{2,86})=\{t^1>-4t^2>0\}\,.
\end{equation}
The second wall associated to the second phase of the K\"ahler cone is a finite-distance boundary of the K\"ahler moduli space. The divisor that shrinks to zero size along this wall is $D=D_1-D_2$ where $\volume(D)=(t^1+4t^2)^2/2$. 
The collapsing divisor is identified as $D\cong\mathbb{P}^2$, with normal bundle $\cO_{\mathbb{P}^2}(-3)$ and $D^3=9$.
Consider a line $L\subset D$. Since $D^2=-3L$, we have $D_1.L=1$ and $D_2.L=4$ leading to $\volume(L)=t^1+4t^2$, which also shrinks along the finite-distance wall $t^1+4t^2=0$. By adjunction, we have $N_{L/D}\cong \cO_{\mathbb{P}^1}(1)$ which gives $N_{L/X_{2,86}}\cong \cO_{\mathbb{P}^1}(1)\oplus \cO_{\mathbb{P}^1}(-3)$, the characteristic normal bundle of a line in a collapsing $\mathbb{P}^2$. Hence, this is a finite-distance CFT wall, where the divisor $D$ shrinks to a point.

The extended K\"ahler cone of this geometry is therefore
\begin{equation}
\label{eq:CICY extended Kahler cone}
    K_{\infty}(X_{2,86})=K_I(X_{2,86})\cup K_{II}(X_{2,86})\,.
\end{equation}
In particular, the boundary $t^1=0$ corresponds to an emergent string limit, while the boundary $t^1=-4t^2$ corresponds to a finite-distance CFT wall where $\mathbb{P}^2$ collapses to a point.

The genus-0 GV invariants for this geometry were computed in~\cite{Alim:2021vhs}. From these invariants, one can deduce the Mori cone. In particular, the Mori cones for each phase of the geometry are
\begin{equation}
    M_{I}(X_{2,86})=\big\{(q_1,q_2)\in\mathbb{R}_+^2\big\}\,,\qquad M_{II}(X_{2,86})=\big\{(q_1,q_2)\in\mathbb{R}^2\,\vert\, q_1\geq 0\,,4q_1\geq q_2\big\}\,.
\end{equation}
The restricted Mori cone is then the intersection of these two Mori cones. Namely, we have 
\begin{equation}
    M_{\infty}(X_{2,86})=M_I(X_{2,86})\cap M_{II}(X_{2,86})=\big\{(q_1,q_2)\in\mathbb{R}^2\,\vert\, q_2\geq 0\,,4q_1\geq q_2\big\}\,.
\end{equation}

\begin{table}[!tp]
\begin{adjustwidth}{-1in}{-1in}
    \centering
    \begin{tabular}{c|c|c|c|c|c|c}
        \diagbox[width=1.5cm,height=0.7cm]{$q_2$}{$q_1$} & 0 & 1 & 2 & 3 & 4 & 5  \\
        \hline
        0 & \Pnk \textasteriskcentered & \Pk 640 & \Pk 10032 & \Pk 288384 & \Pk 10979984 & \Pk 495269504\\
        \hline
        1 & 16 & \Pnk 2144 & \Pk 231888 & \Pk 23953120 & \Pk 2388434784 & \Pk 232460466048\\
        \hline
        2 & 0 & \Y 120 & \Pnk 356368 & \Pk 144785584 & \Pk 36512550816 & \Pk 7251261673320\\
        \hline
        3 & 0 & \Y $-32$ & \Y 14608 & \Pnk 144051072 & \Pk 115675981232 & \Pk 50833652046112 \\
        \hline
        4 & 0 & \Y 3 & \Y $-4920$ & \Y 5273880 & \Pnk 85456640608 & \Pnk 106397389165188 \\
        \hline
        5 & 0 & 0 & \Y 1680 & \Y $-1505472$ & \Y 3018009984 & \Pnk 62800738246496 \\
        \hline
        6 & 0 & 0 & \Y $-480$ & \Y 512136 & \Y $-748922304$ & \Y 2196615443648\\
        \hline
        7 & 0 & 0 & \Y 80 & \Y $-209856$ & \Y 218062416 & \Y $-493158341440$ \\
        \hline
        8 & 0 & 0 & \Y $-6$ & \Y 75300 & \Y $-90910176$ & \Y 120803175480\\
        \hline
        9 & 0 & 0 & 0 & \Y $-21600$ & \Y 37721680 & \Y $-48214555136$\\
        \hline
        10 & 0 & 0 & 0 & \Y 4312 & \Y $-15086208$ & \Y 19843557320\\
    \end{tabular}
\tikz[baseline,remember picture,overlay]{
  \draw[blue, thick,->] (-11.50,2.2) -- (-0.1,2.2cm);
  \draw[blue, thick,->] (-11.50,2.2) -- (-3.0,-0.5cm);
}
\tikz[baseline,remember picture,overlay]{
  \draw[red,thick,dashed,->] (-11.50,2.2) -- (-0.1,2.2cm);
  \draw[red,thick,->] (-11.75,2.2) -- (-9.50,-2.cm);
  \fill[red,opacity=0.3] (-10.67,0.25) circle[radius=5pt];
  \fill[red,opacity=0.3] (-10.67,0.75) circle[radius=5pt];
  \fill[blue,opacity=0.3] (-10.67,1.73) circle[radius=5pt];
  \fill[blue,opacity=0.2] (-10.67,2.25) circle[radius=5pt];
  \fill[blue,opacity=0.3] (-1.9,0.25) circle[radius=5pt];
  \fill[black,opacity=0.3] (-11.65,1.73) circle[radius=5pt];
  \fill[blue,opacity=0.2] (-7.4,1.73) circle[radius=5pt];
}
\end{adjustwidth}
    \caption{The genus-0 GV invariants for curve classes $(q_1,q_2)\in H_2(X_{2,86},\mathbb{Z})$. Here, the red shaded region indicates $\cC_{\rm BPS-B}^{(0)}$ (the restricted Mori cone), while the blue shaded regions indicate $\cC_{\rm BPS-BB}^{(0)}$ (the cone of movable curves). Here, the fainter blue region indicates $\cC_{\rm BPS-BB}^{(0,I)}$ while the darker blue region indicates $\cC_{\rm BPS-BB}^{(0,II)}$. The blue and red lines indicate the extremal rays of $\cC_{\rm BPS-BB}^{(0)}$ and $\cC_{\rm BPS-B}^{(0)}$, respectively. The dotted integral sites in this table are further analyzed in \cref{fig:X286 mass}.}
    \label{tab:GV X286}
\end{table}

\begin{figure}
    \centering
    \includegraphics[width=0.5\linewidth]{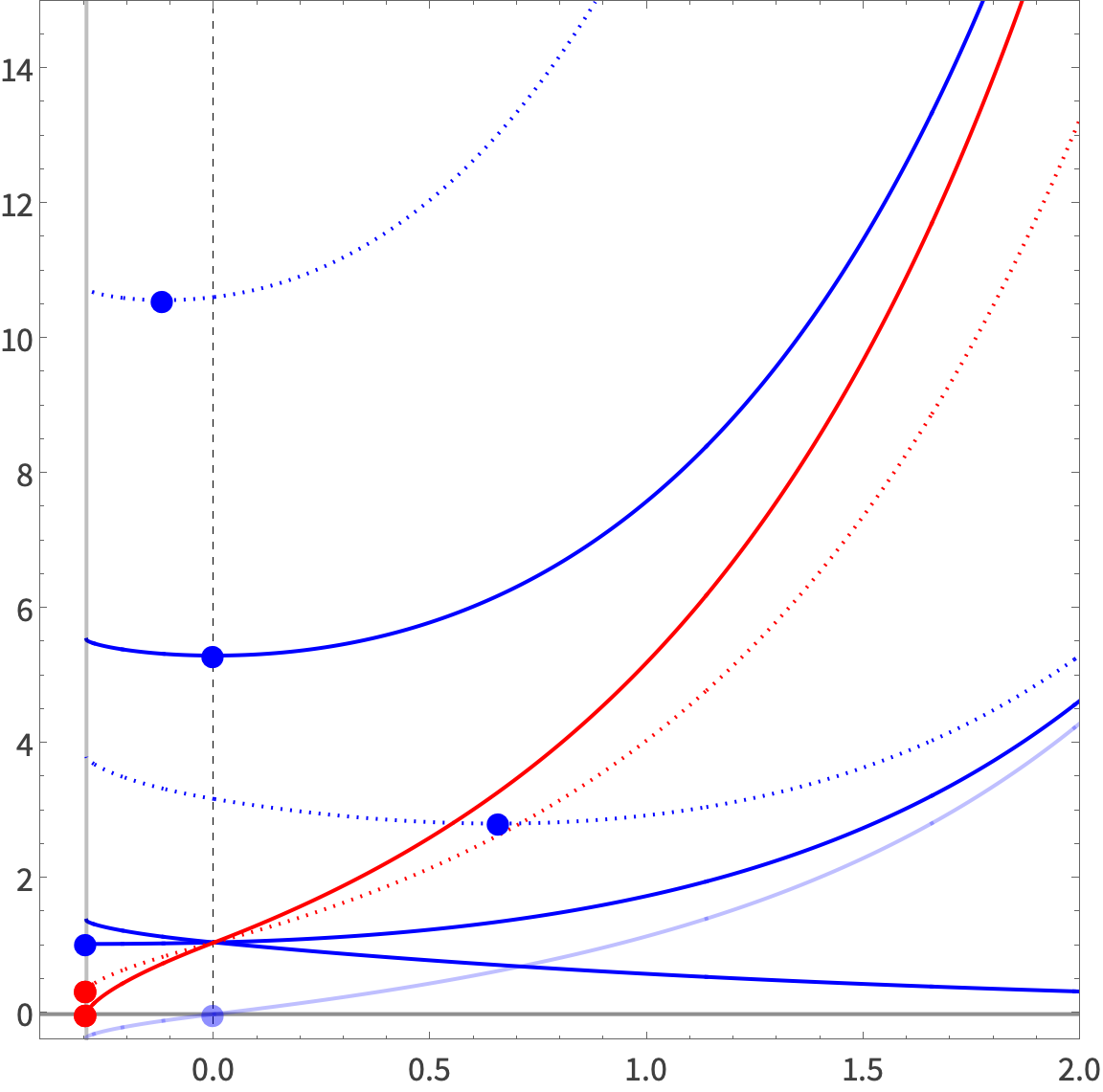}
    \begin{picture}(0,0)\vspace*{-1.2cm}
        \put(-265,120){$m$}
        \put(-120,-10){$\Delta$}
        \put(-230,245){\footnotesize $\Delta_{\rm CFT}$}
        \put(-205,245){\footnotesize $\Delta_{\rm flop}$}
        \put(-125,245){\footnotesize \textcolor{blue}{$(10,9)$}}
        \put(-180,110){\footnotesize \textcolor{blue}{$(5,4)$}}
        \put(-30,245){\footnotesize \textcolor{red}{$(1,4)$}}
        \put(-5,215){\footnotesize \textcolor{red}{$(1,3)$}}
        \put(-180,70){\footnotesize \textcolor{blue}{$(3,1)$}}
        \put(-5,85){\footnotesize \textcolor{blue}{$(1,1)$}}
        \put(-35,50){\footnotesize \textcolor{blue!50}{$(0,1)$}}
        \put(-5,20){\footnotesize \textcolor{blue}{$(1,0)$}}
    \end{picture}
    \vspace*{0.25cm}
    \caption{The masses of BPS 0-branes in $\cC_{\rm BPS-B}^{(0)}$ for M-theory compactified on $X_{2,86}$, shown as functions of the canonically normalized scalar parameterizing the constant-volume slice in $K(X_{2,86})$. The finite-distance CFT wall and flop wall are shown as vertical gray lines. The blue and red lines indicate states that are either in $\cC_{\rm BPS-BB}$ or $\cC_{\rm BPS-B}\backslash \cC_{\rm BPS-BB}^{(0)}$, respectively. (These states are also shown in \cref{tab:GV X286}.) The dotted lines indicate states that lie strictly in the interior of either $\cC_{\rm BPS-BB}$ or $\cC_{\rm BPS-B}\backslash \cC_{\rm BPS-BB}$. The solid dots indicate the location of the minimum of the mass. The charges shown are in the basis $(q_1,q_2)\in \cC_{\rm BPS-B}^{(0)}$. Although the flop curve lies within $\cC_{\rm BPS-B}^{(0,I)}$, it is not an element of the restricted cone $\cC_{\rm BPS-B}^{(0)}$.}
    \label{fig:X286 mass}
\end{figure}

Let us study the attractor mechanism in each phase. To this end, the black hole attractor potential in the first phase $K_I$ is
\begin{equation}
    V_{\rm BH}^{(I)}=\frac{1}{48}\bigg[\left(48q_1^2-40q_1q_2+25q_2^2\right)(t^1)^2+80q_2^2t^1t^2+96q_2^2(t^2)^2\bigg]\,.
\end{equation}
For BPS attractor solutions, it is convenient to introduce the parameter $\phi=t^2/t^1$ where the K\"ahler constraint in the first phase is $\phi\geq 0$. Then, the attractor solution becomes
\begin{equation}
    \frac{q_2}{q_1}=\frac{4}{5+8\phi}\,.
\end{equation}
Therefore, the BPS black 0-brane cone associated with the first phase of the K\"ahler cone is
\begin{equation}
    \cC_{\rm BPS-BB}^{(0,I)}=\bigg\{(q_1,q_2)\in\mathbb{R}^2\,\bigg\vert\, \frac45 q_1\geq q_2\geq 0\bigg\}\,.
\end{equation}
We can also solve for the attractor equations in the second phase using the prepotential in \cref{eq:CICY prepot II}.
The BPS black 0-brane attractor equation then gives
\begin{equation}
    \frac{q_2}{q_1}=\frac{4-16\phi^2}{5+8\phi}\,,
\end{equation}
with the K\"ahler cone constraints in the second phase as $-\frac14\leq \phi\leq 0$. This leads to the second phase of the BPS black 0-brane cone as
\begin{equation}
    \cC_{\rm BPS-BB}^{(0,II)}=\bigg\{(q_1,q_2)\in\mathbb{R}^2\,\bigg\vert\, q_1\geq 0\,,q_1\geq q_2\geq \frac45 q_1\bigg\}\,.
\end{equation}
Therefore, the total BPS black 0-brane cone associated to M-theory compactified on the birational family $[X_{2,86}]$ is
\begin{equation}
    \cC_{\rm BPS-BB}^{(0)}=\{(q_1,q_2)\in\mathbb{R}^2\,\vert\, q_1\geq q_2\geq 0\}\,.
\end{equation}
This matches the result obtained in~\cite{Alim:2021vhs}. Using~\cite[Table 4]{Alim:2021vhs}, which we have slightly expanded in \cref{tab:GV X286} using our numerical implementation of the genus-0 GV invariants calculation for favorable CICYs, we observe that the genus-0 GV invariants are non-vanishing for integral sites within $\cC_{\rm BPS-BB}^{(0)}$. Furthermore, the genus-0 GV invariants are also non-vanishing at integral sites within $\cC_{\rm BPS-B}^{(0)}$.

\begin{figure}[!tp]
    \centering
    \begin{subfigure}{.475\textwidth}
        \includegraphics[width=\linewidth]{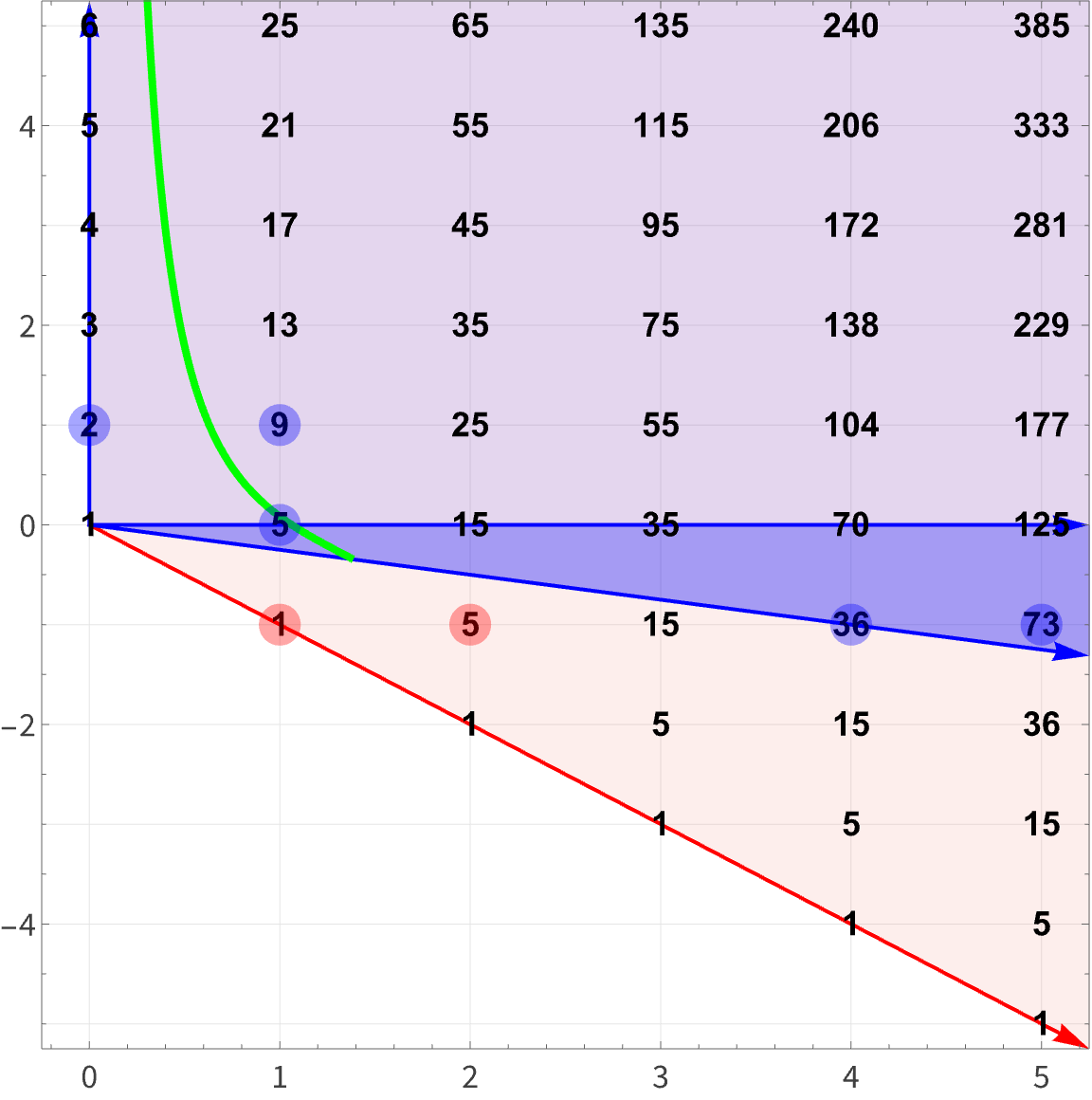}
        \begin{picture}(0,0)\vspace*{-1.2cm}
            \put(-20,140){$t^2$}
            \put(120,0){$t^1$}
        \end{picture}
        \vspace*{-0.25cm}
        \caption{$\cC_{\rm BPS-B}^{(1)}$ and $\cC_{\rm BPS-BB}^{(1)}$}
        \label{fig:X286cones}
    \end{subfigure}
    \hfill
    \begin{subfigure}{.475\textwidth}
        \includegraphics[width=\linewidth]{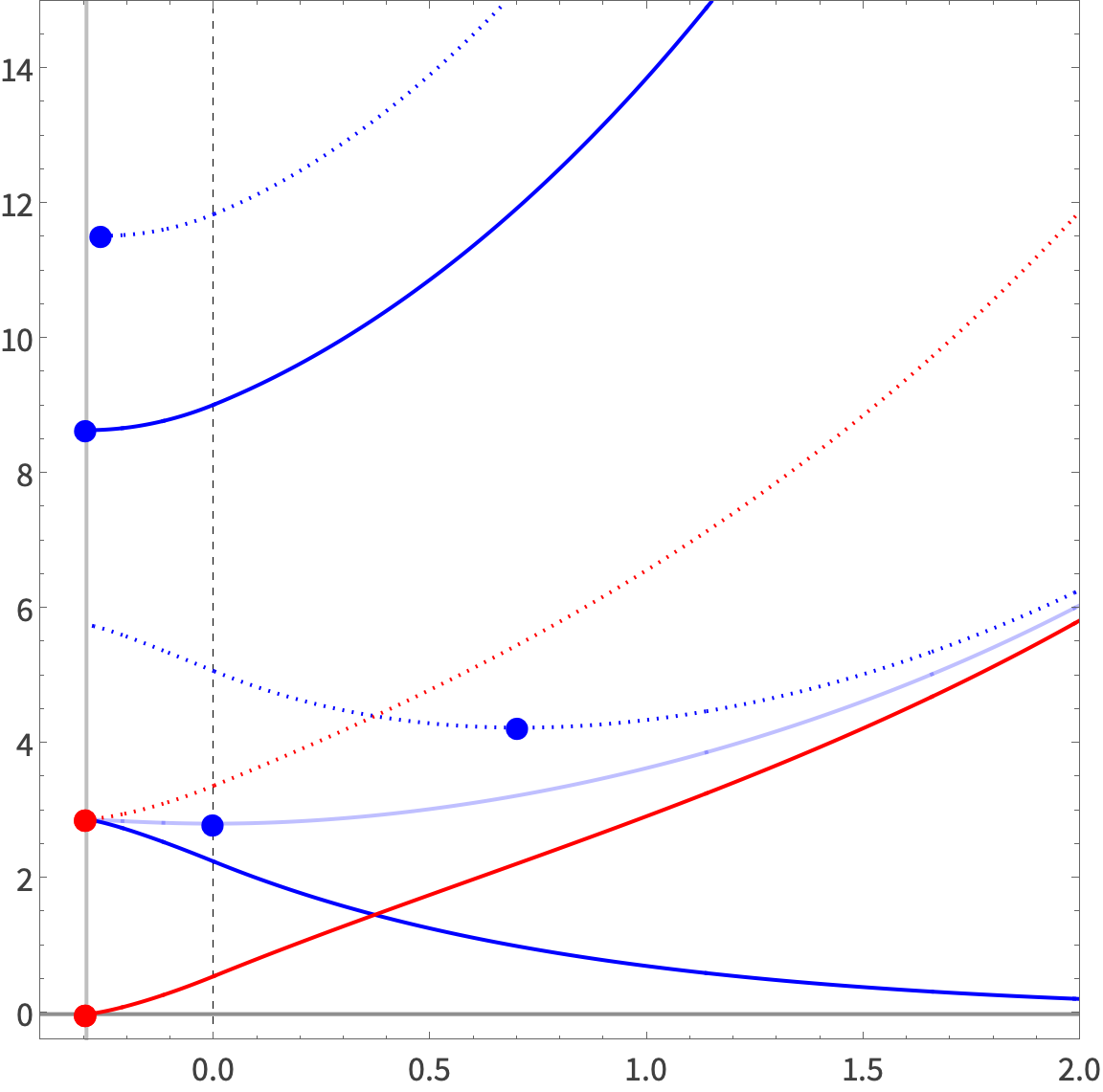}
        \begin{picture}(0,0)\vspace*{-1.2cm}
            \put(-10,140){$T$}
            \put(120,0){$\Delta$}
            \put(10,245){\footnotesize $\Delta_{\rm CFT}$}
            \put(35,245){\footnotesize $\Delta_{\rm flop}$}
            \put(90,245){\footnotesize \textcolor{blue}{$(5,-1)$}}
            \put(135,245){\footnotesize \textcolor{blue}{$(4,-1)$}}
            \put(190,200){\footnotesize \textcolor{red}{$(2,-1)$}}
            \put(50,105){\footnotesize \textcolor{blue}{$(1,1)$}}
            \put(60,75){\footnotesize \textcolor{blue}{$(1,0)$}}
            \put(190,83){\footnotesize \textcolor{red}{$(1,-1)$}}
            \put(193,40){\footnotesize \textcolor{blue}{$(0,1)$}}
        \end{picture}
        \vspace*{-0.25cm}
        \caption{Tension of BPS 1-branes}
        \label{fig:X286tension}
    \end{subfigure}
    \caption{The properties of BPS 1-branes in M-theory compactified on the birationl family $[X_{2,86}]$. In \cref{fig:X286cones}, we show $\cC_{\rm BPS-B}^{(1)}$ in red and $\cC_{\rm BPS-BB}^{(1)}$ in different shades of blue, associated to different phases of the 5d $\cN=1$ quantum gravitational theory. In particular, the lighter blue region indicates $\cC_{\rm BPS-BB}^{(1,I)}$ and the darker blue region indicates $\cC_{\rm BPS-BB}^{(1,II)}$. Additionally, we overlay the values of $h^0(X_{2,86},\cO_{X_{2,86}}(D))$ for divisor classes in $\cC_{\rm BPS-B}^{(1)}$. The constant-volume slice is illustrated in green across the extended K\"ahler cone. We also highlight the divisor classes whose tensions are studied in the right figure. In \cref{fig:X286tension}, we illustrate the tension of BPS 1-branes as a function of the geodesic distance along the constant-volume slice in each phase of the moduli space. The constant-volume slice in the second phase of the extended K\"ahler cone is a finite interval $[\Delta_{\rm CFT},\Delta_{\rm flop}]$. The flop wall is chosen to be at $\Delta=0$ and is indicated as the gray dashed vertical line, while the CFT wall is illustrated as a solid gray line at $\Delta\approx -0.295$. The divisor classes are specified in the basis $(p^1,p^2)\in \cC_{\rm BPS-B}^{(1)}$ and the global minima of their tensions are indicated by the solid dots. The dotted lines indicate divisor classes lying strictly in the interior of either $\cC_{\rm BPS-BB}$ or $\cC_{\rm BPS-B}^{(1)}\backslash \cC_{\rm BPS-BB}^{(1)}$ while the solid lines indicate divisor classes along the boundaries of $\cC_{\rm BPS-BB}^{(1,I)}$, $\cC_{\rm BPS-BB}^{(1,II)}$, and $\cC_{\rm BPS-B}^{(1)}$.}
    \label{fig:X286}
\end{figure}

Let us consider the BPS 1-branes in the theory. In addition to the prime toric divisors inherited from the ambient toric variety, there is an additional finite-distance CFT boundary generated by the exceptional $\mathbb{P}^2$ divisor $E=H-L$. This divisor shrinks to zero size at the CFT boundary as can be seen by the solid red curve in \cref{fig:X286tension}, hence giving rise to a CFT sector in the quantum gravity theory. Therefore, the BPS 1-brane cone, or equivalently the effective cone, is
\begin{equation}
    \cC_{\rm BPS-B}^{(1)}=\cE(X_{2,86})=\big\{(p^1,p^2)\in \mathbb{R}^2\,\vert\, p^1\geq 0\,,p^1\geq -p^2\big\}\,.
\end{equation}
The attractor potential for the BPS 1-branes then takes on the following form in the first phase of the extended K\"ahler cone
\begin{equation}
    V_{\rm BS}^{(I)}=\frac{3}{2(t^1)^2(5t^1+12t^2)^2}\bigg(\left(25(p^1)^2+40p^1p^2+48(p^2)^2\right)(t^1)^2+80(p^1)^2t^1t^2+96(p^1)^2(t^2)^2\bigg)\,.
\end{equation}
Similarly, we can also express $V_{\rm BS}^{(II)}$ using $\cF^{(II)}$ in \cref{eq:CICY prepot II}. In both cases, the BPS black 1-brane solution becomes simply 
\begin{equation}
    p^I\propto t^I\,.
\end{equation}
Therefore, the total BPS black 1-brane cone is the extended K\"ahler cone given in \cref{eq:CICY extended Kahler cone}. We can compute $h^0(X_{2,86},\cO_{X_{2,86}}(D))$ for divisor classes $[D]\in H^2(X_{2,86},\mathbb{Z})$ using e.g.,~\verb+pyCICY+~\cite{Larfors:2019sie}. The results are shown in \cref{fig:X286cones} where we can see that all integral sites within $\cC_{\rm BPS-B}^{(1)}$ are occupied by BPS states. 

Having identified and illustrated all relevant cones in \cref{tab:GV X286} and \cref{fig:X286cones}, we observe the following duality between cones
\begin{equation}
\label{eq:X286 duality}
    \cC_{\rm BPS-B}^{(0)}=(\cC_{\rm BPS-BB}^{(1)})^\vee\,,\qquad \cC_{\rm BPS-BB}^{(0)}=(\cC_{\rm BPS-B}^{(1)})^\vee\,.
\end{equation}

Let us analyze the masses of BPS 0-branes in $\cC_{\rm BPS-B}^{(0)}$. Here, we emphasize that $\cC_{\rm BPS-B}^{(0)}$ is the restricted Mori cone, which is the appropriate cone dual to the extended K\"ahler cone, as shown in \cref{eq:X286 duality}. The BPS 0-branes we analyze are shown as dots in \cref{tab:GV X286}. We observe that for the charges within $\cC_{\rm BPS-BB}^{(0)}$ whose masses are represented as blue dotted lines in \cref{fig:X286 mass}, the minima of the masses are located inside the extended K\"ahler moduli space. For the BPS 0-brane with charges located along $\partial \cC_{\rm BPS-BB}^{(0,I)}\cap \partial\cC_{\rm BPS-BB}^{(0,II)}$, the masses are minimized along the flop wall. This is exemplified by the charge $(5,4)$ in \cref{fig:X286 mass}. As the physical moduli space is $K_{\infty}(X_{2,86})$, this finite-distance flop wall is lies in the strict interior of the extended moduli space. Thus, all particles with charges in the strict interior of $\cC_{\rm BPS-BB}^{(0)}$ have masses that are minimized in the strict interior of the extended moduli space. For the BPS 0-branes located on the boundary of $\cC_{\rm BPS-BB}^{(0)}$, their masses are minimized on the boundaries of $K_{\infty}(X_{2,86})$ as shown by the solid blue curves associated with the charges $(1,0)$ and $(1,1)$. The mass of the particle with charge $(1,1)$ is minimized along the CFT wall with a positive minimum mass, while the mass of the particle with charge $(1,0)$ vanishes along the infinite-distance wall of the moduli space. For particles with charges in the interior of $\cC_{\rm BPS-B}^{(0)}\backslash \cC_{\rm BPS-BB}^{(0)}$, the masses are minimized along the CFT wall and have positive minimum masses inherited from their decomposition into charges in $\cC_{\rm BPS-BB}^{(0)}$ evaluated along the CFT wall. Lastly, for charges lying on the boundary of $\cC_{\rm BPS-B}^{(0)}$ but outside $\cC_{\rm BPS-BB}^{(0)}$, their masses vanish along the CFT wall.

Let us next analyze the tensions of various BPS 1-branes in the spectrum of this theory. The tensions of the relevant distinct classes of BPS 1-branes are shown in \cref{fig:X286tension} and each reflect the behaviors discussed in \cref{sec:tension properties}. First, despite the presence of two distinct phases of the extended moduli space, BPS 1-brane states in $\cC_{\rm BPS-BB}^{1,I}$ or $\cC_{\rm BPS-BB}^{(II)}$ have global minima strictly in the interior of the corresponding phases of the K\"ahler moduli space, $K_I$ or $K_{II}$, respectively. This is exemplified by the divisor classes $(5,-1)$ and $(1,1)$. 
Next, in this theory, there are three types of boundaries associated with $\cC_{\rm BPS-BB}^{(1)}$: 1) the infinite-distance boundary along $(0,1)$ where we encounter a weakly-coupled heterotic string, 2) the finite-distance flop wall connecting $\cC_{\rm BPS-BB}^{(1,I)}$ and $\cC_{\rm BPS-BB}^{(1,II)}$ along $(1,0)$; and 3) the finite-distance CFT boundary along $(4,-1)$. In each case, the corresponding BPS 1-brane tension vanishes along the associated boundary/wall of the extended moduli space. 
Different from the bicubic example studied in \cref{sec:bicubic}, in $X_{2,86}$, we encounter a strict inclusion of cones $\cC_{\rm BPS-BB}^{(1)}\subset \cC_{\rm BPS-B}^{(1)}$. Thus, another class of BPS 1-branes consists of those with charges lying in $\cC_{\rm BPS-B}^{(1)}\backslash \cC_{\rm BPS-BB}^{(1)}$. In particular, those with charges in the strict interior of this region acquire their minimum tension along the boundaries of the K\"ahler moduli space. For this particular geometry, all such interior BPS 1-branes have minimum tension along the CFT wall.
Lastly, this region also contains BPS 1-branes on the boundary of $\cC_{\rm BPS-B}^{(1)}$. These are BPS 1-brane arising from M5-branes wrapping the shrinking $\mathbb{P}^2$. In this case, all such BPS 1-branes become tensionless as we approach the CFT wall. In particular, at the CFT wall, the tension of the BPS 1-brane with magnetic charge $(1,-1)$ has a stable vanishing minimum. Hence, the divisor classes in the strict interior of $\cC_{\rm BPS-B}^{(1)}\backslash \cC_{\rm BPS-BB}^{(1)}$ also have stable tension minima along this CFT wall.

As a last remark, we observe that the primitive BPS generators of each cone displayed in \cref{fig:X286 mass} and \cref{fig:X286tension} attain minimum tensions $T_{\rm min}\lesssim 1$, consistent with the elementary constituent conjecture~\cite{Nevoa:2025xiq}.

\subsubsection{Toric Calabi--Yau threefold with boundary holes}
\label{sec:Calabi--Yau threefold with boundary holes}

Let us consider another Calabi--Yau threefold with $h^{1,1}=2$, obtained from a triangulation of a toric four-fold in the Kreuzer--Skarke database~\cite{Kreuzer:2000xy}. This is studied in~\cite[Example 1]{Gendler:2026uux} where non-BPS holes appear in the toric effective cone of the Calabi--Yau threefold $X_{2,106}$. We continue to work in the default charge basis for the divisor classes determined by \verb+CYTools+ in which the GLSM charge matrix for the toric fourfold is
\begin{equation}
\label{eq:GLSM h112}
    \begin{pmatrix}
        1&1&1&3&0&2\\
        0&0&0&1&1&-2
    \end{pmatrix}\,.
\end{equation}
In this basis, the K\"ahler form can be expressed as $J=t^1[D_1]+t^2[D_2]$. The prepotential of the Calabi--Yau threefold is
\begin{equation}
    \cF=\frac16\big((t^1)^3+3(t^1)^2t^2-9(t^1)(t^2)^2+9(t^2)^3\big)\,,
\end{equation}
where the K\"ahler cone is
\begin{equation}
    K(X_{2,106})=\big\{(t^1,t^2)\in \mathbb{R}^2\,\big\vert\, t^2>0\,, t^1-3t^2>0\big\}\,.
\end{equation}
Using \verb+CYTools+~\cite{Demirtas:2022hqf}, the K\"ahler cone is obtained as the dual of the Mori cone, which can be determined from genus-0 GV invariants~\cite{Gendler:2022ztv}. 
In this case, the genus-0 GV invariants displayed in \cref{tab:GV X2106}, determine the Mori cone. Consequently, the BPS-B cone for BPS 0-branes in this 5d $\cN=1$ theory becomes
\begin{equation}
    \cC_{\rm BPS-B}^{(0)}=M(X_{2,106})=\big\{(q_1,q_2)\in\mathbb{R}^2\,\big\vert\, q_1\geq 0\,, 3q_1+q_2\geq 0\big\}\,.
\end{equation}
In this example, there is a finite-distance CFT wall located at $t^1-3t^2=0$, while a symmetric flop wall lies along $t^2=0$. As the symmetric flop wall connects two isomorphic Calabi--Yau threefolds, we can identify the K\"ahler cones across the two phases with each other. Hence, the relevant extended K\"ahler cone consists of a single chamber as was identified in~\cite{Gendler:2026uux}.

From the prepotential, we compute the effective potential for black 0-branes of electric charge $q=(q_1,q_2)\in M(X_{2,106})$
\begin{multline}
    V_{\rm BH}=\frac{1}{12t^1(t^1-3t^2)}\bigg[(9q_1^2-2q_1q_2+q_2^2)(t^1)^4+4(-9q_1^2+6q_1q_2+q_2^2)(t^1)^3t^2\\
    +6(9q_1^2-6q_1q_2+q_2^2)(t^1)^2(t^2)^2-36(3q_1^2+2q_1q_2+q_2^2)t^1(t^2)^3+9(3q_1+q_2)^2(t^2)^4\bigg]\,.
\end{multline}
Introducing the positive coordinate in the K\"ahler cone $z=(t^1-3t^2)/t^2>0$, the BPS attractor equation $\partial_iZ_e=0$ (restricted to the constant-volume slice) reduces to the following quadratic equation
\begin{equation}
    \bigg(\frac{q_1}{q_2}-1\bigg)z^2-8z-12=0\,.
\end{equation}
Requiring $z\geq 0$ (imposed by the closure of the K\"ahler cone) selects the charges in the Mori cone satisfying $q_1\geq q_2\geq 0$. Therefore, the BPS black 0-brane cone is
\begin{equation}
    \cC^{(0)}_{\rm BPS-BB}=\big\{(q_1,q_2)\in \mathbb{R}^2\,\big\vert\,q_1\geq q_2\geq 0\big\}\,.
\end{equation}
Examining the red region of \cref{tab:GV X2106}, we see indeed all such curve classes within $\cC_{\rm BPS-BB}^{(0)}$ are populated by BPS states, as indicated by the non-vanishing genus-0 GV invariants. 

\begin{table}[!tp]
\begin{adjustwidth}{-1in}{-1in}
    \centering
    \begin{tabular}{c|c|c|c|c|c|c}
        \diagbox[width=1.5cm,height=0.7cm]{$q_2$}{$q_1$} & 0 & 1 & 2 & 3 & 4 & 5 \\
        \hline
        5 & \Y $0$ & \Y $3$ & \Y $5297640$ & \Y $130884557151368$ & \Y $80340977343000471528$ & \Pnk $24515410461322092913996250$\\
        \hline
        4 & \Y $0$ & \Y $-80$ & \Y $558340176$ & \Y $409802802458720$ & \Pnk $106512530106827231360$ & \Pnk $19465866403699147683379920$\\
        \hline
        3 & \Y $0$ & \Y $780$ & \Y $3032521320$ & \Pnk $588796074267170$ & \Pnk $80340977343000471528$ & \Pnk $9622944909425877495302600$\\
        \hline
        2 & \Y $0$ & \Y $54192$ & \Pnk $5024265100$ & \Pnk $409802802458720$ & \Pnk $33668587006610357040$ & \Pnk $2837663700463208952866320$\\
        \hline
        1 & \Y $40$ & \Pnk $121410$ & \Pnk $3032521320$ & \Pnk $130884557151368$ & \Pnk $7190639867765075200$ & \Pnk $453780874656629812634630$\\
        \hline
        0 & \Pnk \textasteriskcentered & \Pnk $54192$ & \Pnk $558340176$ & \Pnk $1521408917328$ & \Pnk $617023212218998848$ & \Pnk $30974618236891196250320$ \\
        \hline
        $-1$ & 0 & \Y $780$ & \Y $5297640$ & \Y $118586805820$ & \Y $4140467079805800$ & \Y $175116010682266925040$\\
        \hline
        $-2$ & $0$ & \Y $-80$ & \Y $-425600$ & \Y $-8543274080$ & \Y $-310549789827952$ & \Y $-15325178180082450880$\\
        \hline
        $-3$ & $0$ & \Y $3$ & \Y $29460$ & \Y $371614680$ & \Y $7824681548160$ & \Y $207112452710480780$\\
        \hline
        $-4$ & $0$ & $0$ & \Y $-3120$ & \Y $-41185408$ & \Y $-870431709600$ & \Y $-26144330541329920$\\
        \hline
        $-5$ & $0$ & $0$ & \Y $200$ & \Y $3953900$ & \Y $77907925400$ & \Y $2126880638515300$\\
        \hline
        $-6$ & $0$ & $0$ & \Y $-6$ & \Y $-365040$ & \Y $-7496506320$ & \Y $-180759346759680$\\
        \hline
        $-7$ & $0$ & $0$ & $0$ & \Y $27580$ & \Y $767501168$ & \Y $18963009987180$
    \end{tabular}
\tikz[baseline,remember picture,overlay]{
  \draw[red,thick, ->] (-17.70,0.25) -- (-17.70,3.10cm);
  \draw[red, thick, ->] (-17.7,0.25) -- (-14.7,-2.9
  cm);
}
\tikz[baseline,remember picture,overlay]{
  \draw[blue,thick, ->] (-17.70,0.25) -- (-9,2.40cm);
  \draw[blue, thick, ->] (-17.7,0.25) -- (-0.1,0.25cm);
  \fill[red,opacity=0.3] (-17.85,0.75) circle[radius=5pt];
  \fill[red,opacity=0.3] (-16.65,-0.25) circle[radius=5pt];
  \fill[red,opacity=0.3] (-16.65,-1.25) circle[radius=5pt];
  \fill[red,opacity=0.3] (-16.65,1.72) circle[radius=5pt];
  \fill[blue,opacity=0.3] (-16.65,0.25) circle[radius=5pt];
  \fill[blue,opacity=0.3] (-16.65,0.75) circle[radius=5pt];
  \fill[blue,opacity=0.3] (-14.65,0.75) circle[radius=5pt];
}
\end{adjustwidth}
    \caption{The genus-0 GV invariants for curve classes $(q_1,q_2)\in H_2(X_{2,106},\mathbb{Z})$. Here, the red shaded region indicate the Mori cone, while the blue shaded region indicates the cone of movable curves. The red lines indicate the extremal rays of the Mori cone, while the blue lines indicate the extremal rays of the cone of movable curves. The dotted electric states are further studied in \cref{fig:X2106mass}.}
    \label{tab:GV X2106}
\end{table}

\begin{figure}
    \centering
    \includegraphics[width=0.5\linewidth]{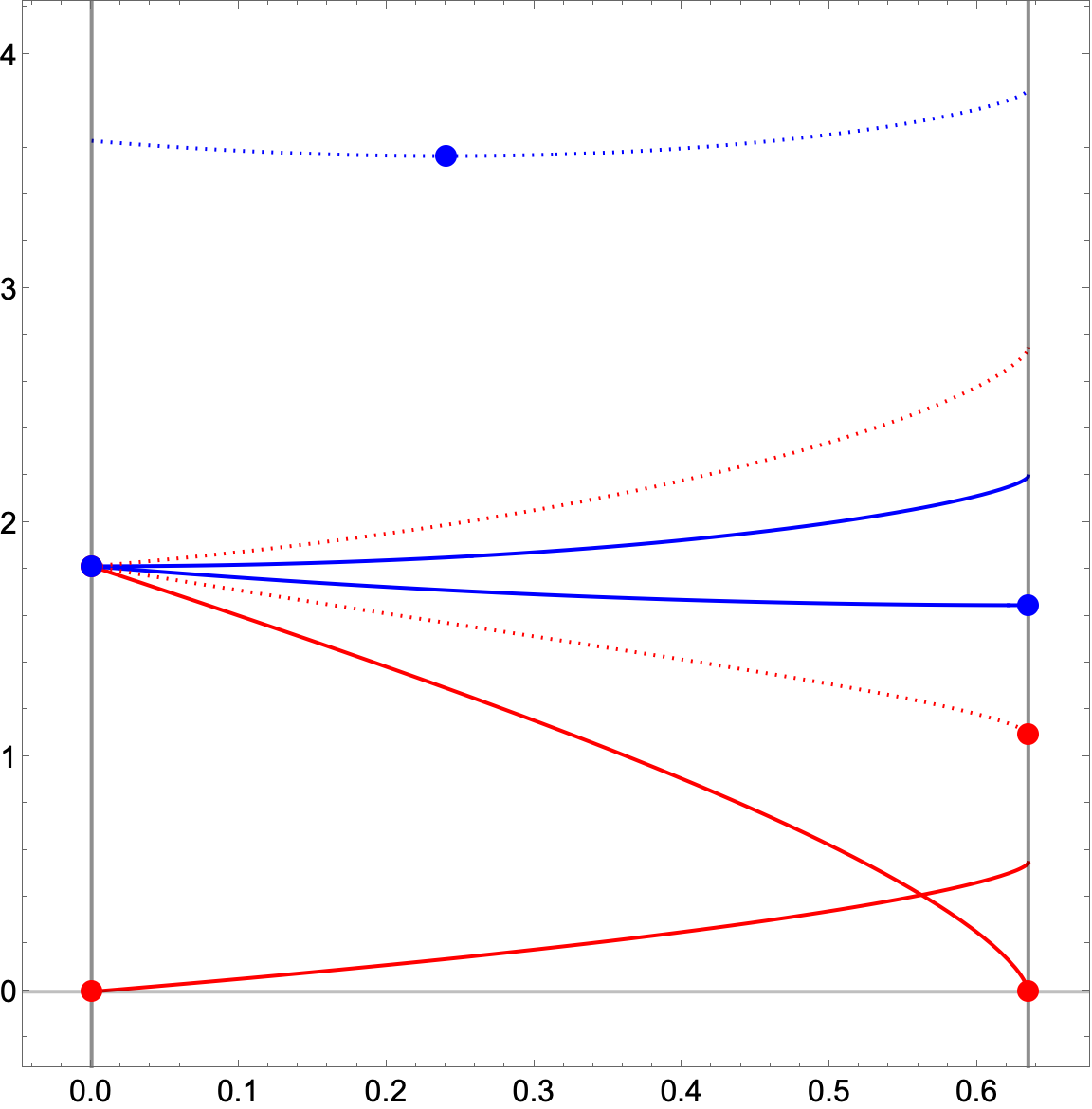}
    \begin{picture}(0,0)\vspace*{-1.2cm}
        \put(-265,120){$m$}
        \put(-120,-10){$\Delta$}
        \put(-27,250){\footnotesize $\Delta_{\rm CFT}$}
        \put(-230,250){\footnotesize $\Delta_{\rm flop}$}
        \put(-2,225){\footnotesize \textcolor{blue}{$(2,1)$}}
        \put(-2,110){\footnotesize \textcolor{blue}{$(1,0)$}}
        \put(-2,170){\footnotesize \textcolor{red}{$(1,3)$}}
        \put(-2,140){\footnotesize \textcolor{blue}{$(1,1)$}}
        \put(-2,80){\footnotesize \textcolor{red}{$(1,-1)$}}
        \put(-2,55){\footnotesize \textcolor{red}{$(0,1)$}}
        \put(-2,20){\footnotesize \textcolor{red}{$(1,-3)$}}
    \end{picture}
    \vspace*{0.25cm}
    \caption{The masses of BPS 0-branes in $\cC_{\rm BPS-B}^{(0)}$ for M-theory compactified on $X_{2,106}$, shown as functions of the canonically normalized scalar $\Delta$ parameterizing the constant-volume slice. The blue and red lines indicate particles in $\cC_{\rm BPS-BB}^{(0)}$ and $\cC_{\rm BPS-B}^{(0)}\backslash \cC_{\rm BPS-BB}^{(0)}$, respectively. The solid and dotted lines represent particles that are on the boundary of or in the interior of their respective cones, respectively. The solid dot indicates the location of their mass minima.}
    \label{fig:X2106mass}
\end{figure}

\begin{figure}[!tp]
    \centering
    \begin{subfigure}{.475\textwidth}
        \includegraphics[width=\linewidth]{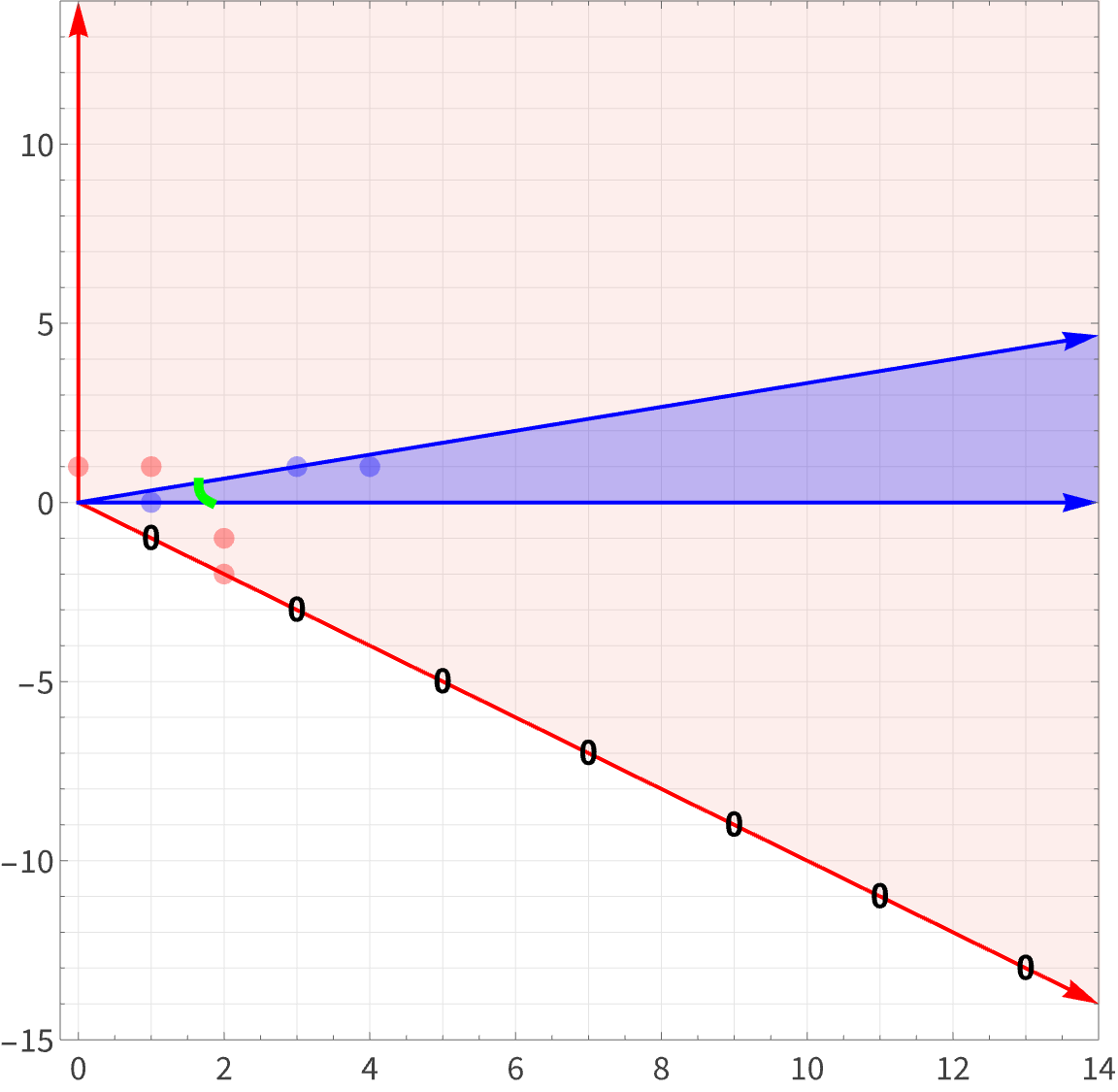}
        \begin{picture}(0,0)\vspace*{-1.2cm}
            \put(-15,125){$p^2$}
            \put(120,0){$p^1$}
        \end{picture}
        \vspace*{-0.25cm}
        \caption{$\cC_{\rm BPS-B}^{(1)}$ and $\cC_{\rm BPS-BB}^{(1)}$}
        \label{fig:X2106cones}
    \end{subfigure}
    \hfill
    \begin{subfigure}{.475\textwidth}
        \includegraphics[width=\linewidth]{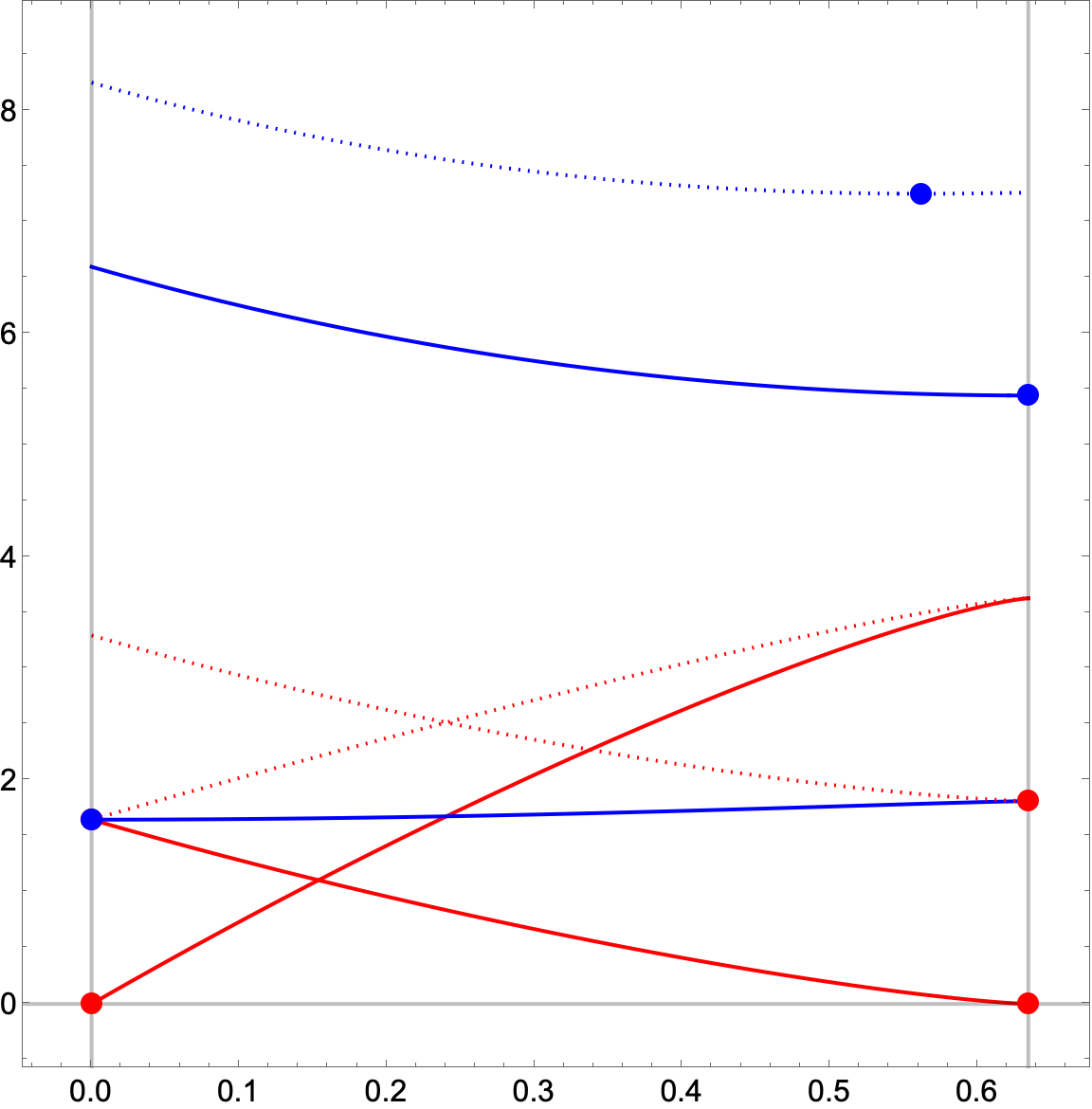}
        \begin{picture}(0,0)\vspace*{-1.2cm}
            \put(-15,125){$T$}
            \put(120,0){$\Delta$}
            \put(7,252){\footnotesize $\Delta_{\rm flop}$}
            \put(200,252){\footnotesize $\Delta_{\rm CFT}$}
            \put(190,45){\footnotesize \textcolor{red}{$(0,1)$}}
            \put(185,105){\footnotesize \textcolor{red}{$(2,-2)$}}
            \put(185,127){\footnotesize \textcolor{red}{$(2,-1)$}}
            \put(23,120){\footnotesize \textcolor{red}{$(1,1)$}}
            \put(23,85){\footnotesize \textcolor{blue}{$(1,0)$}}
            \put(190,175){\footnotesize \textcolor{blue}{$(3,1)$}}
            \put(190,220){\footnotesize \textcolor{blue}{$(4,1)$}}
        \end{picture}
        \vspace*{-0.25cm}
        \caption{Tensions of BPS 1-branes}
        \label{fig:X2106tensions}
    \end{subfigure}
    \caption{The relevant physics associated with 1-branes in M-theory compactified on $X_{2,106}$. In \cref{fig:X2106cones}, $\cC_{\rm BPS-B}^{(1)}$ is the red region and $\cC_{\rm BPS-BB}^{(1)}$ is the blue region. The sites with $h^0(X_{2,106},\cO_{X_{2,106}}(D))=0$ are labeled by 0. The remaining unlabeled integral sites in $\cC_{\rm BPS-B}^{(1)}$ are all effective. The constant-volume slice ($\cF=1$) in the K\"ahler cone is shown in green. 
    Additionally, the divisor classes chosen further analysis in \cref{fig:X2106tensions} are indicated by dots. In \cref{fig:X2106tensions}, the tensions of the divisor classes highlighted in dots in the four distinct regions across $\cC_{\rm BPS-B}^{(1)}$ are shown as a function of the canonically normalized scalar $\Delta$.
    The minima of the tensions for these four classes of BPS 1-branes are shown as solid dots. The dotted lines indicate divisor classes strictly in the interior of either $\cC_{\rm BPS-B}^{(1)}\backslash \cC_{\rm BPS-BB}^{(1)}$ or $\cC_{\rm BPS-BB}^{(1)}$, while solid lines indicate BPS 1-branes along the boundary of either $\cC_{\rm BPS-B}^{(1)}$ or $\cC_{\rm BPS-BB}^{(1)}$.}
    \label{fig:X2106}
\end{figure}

In general, determining the exact effective cone for a Calabi--Yau threefold is difficult. For toric Calabi--Yau threefolds, however, the toric effective cone is straightforward to identify. This cone is contained in the full effective cone. The extremal rays of the toric effective cone are the columns of an appropriate GLSM charge matrix. In this setting, the choice of GLSM shown in \cref{eq:GLSM h112} determines the basis of divisor classes in the toric effective cone. In this basis, the toric effective cone is
\begin{equation}
    \cE_{\rm toric}(X_{2,106})=\big\{(p_1,p_2)\in \mathbb{R}^2\,\big\vert\, p_1+p_2\geq 0\,, p_2\geq 0\big\}\,.
\end{equation}
The toric effective cone can be a proper subcone of the full effective cone of the toric Calabi--Yau threefold, due to the presence of autochthonous divisors. In the present example, however, there are no autochthonous divisors in $X_{2,106}$~\cite{Gendler:2026uux}. Therefore, $\cE_{\rm toric}(X_{2,106})=\cE(X_{2,106})$. Consequently, the BPS 1-brane cone is
\begin{equation}
    \cC^{(1)}_{\rm BPS-B}=\cE(X_{2,106})\,.
\end{equation}
The effective potential for black 1-branes with magnetic charge $p=(p^1,p^2)\in \cE(X_{2,106})$ is
\begin{multline}
\label{eq:X2106 BS potential}
    V_{\rm BS}=\frac{3}{2\big[(t^1)^3+3(t^1)^2t^2-9t^1(t^2)^2+9(t^2)^3\big]^2}\bigg[\big((p^1)^2+2p^1p^2+9(p^2)^2\big)(t^1)^4\\
    +4\big((p^1)^2-6p^1p^2-9(p^2)^2\big)(t^1)^3t^2+6(p^1+3p^2)^2(t^1)^2(t^2)^2\\
    -36\big((p^1)^2-2p^1p^2+3(p^2)^2\big)t^1(t^2)^3+9(p^1-3p^2)^2(t^2)^4\bigg]\,.
\end{multline}
The BPS black 1-brane attractor equation, $\partial_iZ_m=0$, restricted to the constant-volume slice is
\begin{equation}
    z=\frac{p^1}{p^2}-3\,.
\end{equation}
Therefore, imposing $z>0$, the BPS black 1-brane cone is
\begin{equation}
    \cC^{(1)}_{\rm BPS-BB}=\big\{(p^1,p^2)\in \mathbb{R}^2\,\big\vert\, p^1\geq 3p^2\geq 0\big\}\cong \Nef(X_{2,106})\,.
\end{equation}

Having identified the BPS-B cones and the BPS-BB cones for both the 0-branes and the 1-branes, we see that the following cone duality is realized geometrically
\begin{equation}
    \cC_{\rm BPS-B}^{(0)}=(\cC_{\rm BPS-BB}^{(1)})^\vee\,,\qquad \cC_{\rm BPS-BB}^{(0)}=(\cC_{\rm BPS-B}^{(1)})^\vee\,.
\end{equation}
Additionally, the non-BPS holes discussed in~\cite{Gendler:2026uux} and also illustrated in \cref{fig:X2106cones} are
\begin{equation}
    [D_{\rm non-BPS}]=(2n+1,-(2n+1))\,,
\end{equation}
where $n\in\mathbb{Z}_{\geq 0}$. These all lie outside of $\cC_{\rm BPS-BB}^{(1)}$ and also strictly on the boundary of $\cC_{\rm BPS-B}^{(1)}$.

Let us analyze the masses of BPS 0-branes in this theory. The four classes of BPS 0-branes are represented by the dots indicated in \cref{tab:GV X2106}. Their masses, as a function of the canonically normalized scalar parameterizing the constant-volume slice in the K\"ahler cone, are shown in \cref{fig:X2106mass}. The BPS 0-branes chosen to represent the general behavior of BPS 0-branes in $\cC_{\rm BPS-BB}^{(0)}$ has charge $(2,1)$ and is shown as the blue dotted curve in \cref{fig:X2106mass}. Its mass is indeed minimized at a positive value in the interior of the K\"ahler cone. For BPS 0-branes on the boundaries of $\cC_{\rm BPS-BB}^{(0)}$, the masses are represented by solid blue curves. Their respective minima are located on the boundaries of the K\"ahler moduli space and remain positive. A similar behavior occurs for BPS 0-branes in the strict interior of $\cC_{\rm BPS-B}^{(0)}\backslash \cC_{\rm BPS-BB}^{(0)}$, as shown by the red dotted curves. Their non-zero minima are also located on the boundary of the K\"ahler moduli space. For BPS 0-branes along the boundaries of $\cC_{\rm BPS-B}^{(0)}$, their masses vanish at one of the two boundaries of the K\"ahler moduli space.

Finally, let us analyze the tensions of the four distinct classes of divisors as outlined in \cref{sec:tension properties}. We begin with the BPS 1-branes whose charges lie strictly in the interior of $\cC_{\rm BPS-BB}^{(1)}$. An example is the divisor class $[D]=(4,1)$, shown as the blue dotted curve in \cref{fig:X2106tensions}. Its tension is minimized in the interior of the K\"ahler moduli space. 
Next, let us consider divisor classes on $\partial \cC_{\rm BPS-BB}^{(1)}$. The primitive divisors along the boundaries of $\cC_{\rm BPS-BB}^{(1)}$ are $[D]=(1,0)$ and $[D]=(3,1)$. The derivatives of their tensions also vanish at the corresponding K\"ahler cone boundaries, indicating that these boundary values are stable minima in moduli space. 
For divisors in $\cC_{\rm BPS-B}^{(1)}\backslash \cC_{\rm BPS-BB}^{(1)}$, we observe two distinct behaviors illustrated via the two red dotted lines in \cref{fig:X2106tensions}. While the tensions of both divisors are minimized along the boundaries of the K\"ahler moduli space, the derivative of the tension for the divisor $(1,1)$ vanishes at its minimum, while for the divisor $(2,-1)$ it is non-zero. Therefore, the divisor class $(1,1)$ has a stable minimum along the boundary of the K\"ahler cone, while the minimum tension of the divisor class $(2,-1)$ is not stable. As can be seen, their behaviors are directly correlated with those of the divisors along the boundaries of $\cC_{\rm BPS-B}^{(1)}$.
At last, the tension of divisors along the boundaries of $\cC_{\rm BPS-B}^{(1)}$ acquire their minimum value, which is zero, at the boundaries of the K\"ahler moduli space. 

In this example, the primitive generators of both $\cC_{\rm BPS-B}$ and $\cC_{\rm BPS-BB}$ for BPS 0-branes and BPS 1-branes all have minimum tensions $\lesssim 1$.

As a last remark, consider the non-BPS 1-brane with charge $p=(1,-1)$. The central charge associated to this 1-brane is
\begin{equation}
    Z_p=2t^2(2t^1-3t^2)=2(2z+3)\bigg(\frac{6}{z^3+12z^2+36z+36}\bigg)^{2/3}\,,
\end{equation}
and it vanishes along the flop wall. Hence, $Z_{\rm min}=0$. The effective black string potential given in \cref{eq:X2106 BS potential} associated to this charge is
\begin{equation}
    V_{\rm BS}(p)=2\cdot 6^{1/3}\frac{z^4+11z^3+48z^2+72z+18}{(z^3+12z^2+36z+36)^{4/3}}\,,
\end{equation}
where the minimum is located along the CFT wall. Hence, the tension of the would-be extremal black brane with charge $p$ in this theory is $T^{\rm Ext}=\sqrt{\mathrm{min}(V_{\rm BS})}=6^{-1/6}$. For this family of non-BPS 1-branes with charges $(2n+1,-(2n+1))$, where $n\in\mathbb{Z}_{\geq 0}$, the minimum central charges and tensions of would-be extremal black branes are $Z_{\rm min}=0$ and $T^{\rm Ext}=(2n+1)6^{-1/6}$. Hence, the non-BPS 1-branes in this theory satisfy the inequality
\begin{equation}
    Z_{\rm min}<T^{\rm Ext}\,.
\end{equation}
This is consistent with \cref{def:BPS occupied 2}.

\subsubsection{Toric Calabi--Yau threefold with interior holes}
\label{sec:Calabi--Yau threefold with interior holes}

As a last example, let us consider the toric Calabi--Yau threefold with $h^{1,1}=3$ from the Kreuzer--Skarke database~\cite{Kreuzer:2000xy} and also studied in \cite[Example 3]{Gendler:2026uux}.\footnote{We are grateful to Elijah Sheridan for relevant discussions on this particular Calabi--Yau threefold.} We denote this Calabi--Yau threefold as $X_{3,165}$. In this example, non-BPS holes appear both along the boundary and in the strict interior of $\cC_{\rm BPS-B}^{(1)}$. Again, working in the default charge basis for the divisor classes determined by \verb+CYTools+, the GLSM charge matrix associated with the toric fourfold is 
\begin{equation}
    \begin{pmatrix}
        1&1&1&3&0&6&0\\
        0&0&0&1&1&-2&0\\
        0&0&0&1&0&2&1
    \end{pmatrix}\,.
\end{equation}
Additionally, the toric effective cone is strictly contained in the effective cone of the Calabi--Yau threefold due to the presence of autochthonous divisors~\cite{Gendler:2026uux}. The extended K\"ahler cone of the birational family associated to this Calabi--Yau threefold has a single chamber. In the basis of the GLSM, the effective cone of $X_{3,165}$ is generated by the following extremal rays, represented by columns of the following matrix
\begin{equation}
    \cE(X_{3,165})=\mathrm{Cone}
    \begin{pmatrix}
        1&0&0&3&6\\
        0&1&0&-1&2\\
        0&0&1&1&-2
    \end{pmatrix}\,,
\end{equation}
where the last column is the primitive effective charge class of the autochthonous divisor. In this basis, the K\"ahler two-form is $J=t^1[D_1]+t^2[D_2]+t^3[D_3]$ and the prepotential is
\begin{equation}
    \cF=\frac12\bigg[(t^1)^2(t^2+t^3)-3t^1\big((t^2)^2+(t^3)^2\big)+3(t^2)^3+3(t^3)^3\bigg]\,,
\end{equation}
where the K\"ahler cone is
\begin{equation}
\label{eq:X3165 KahlerCone}
    K(X_{3,165})=\big\{(t^1,t^2,t^3)\in\mathbb{R}^3\,\vert\, t^2,t^3> 0\,,t^1-3t^2> 0\,, t^1-3t^3> 0\big\}\,.
\end{equation}
The K\"ahler cone has 8 boundaries associated to its four walls and four extremal rays shown in \cref{fig:X3165 1branes}. The dual primitive generators in the Mori cone are
\begin{equation}
    [C_2]=(0,1,0)\,,\qquad [C_3]=(0,0,1)\,,\qquad [C_4]=(1,-3,0)\,,\qquad [C_5]=(1,0,-3)\,.
\end{equation}
Their volumes correspond to the K\"ahler cone constraints in \cref{eq:X3165 KahlerCone} and vanish along the walls of the K\"ahler cone. Starting with $C_2$ and $C_3$, their volumes vanish along the walls $t^2=0$ and $t^3=0$, respectively. Furthermore, the genus-0 GV invariants are $n_{C_2}^0=108$ and $n_{C_3}^0=108$ while $n_{mC_2}^0=0$ and $n_{mC_3}^0=0$ for all $m>1$. Thus, the shrinking curves in these primitive classes are isolated rational curves. By adjunction, their normal bundles must have degree $-2$. As the curves are isolated, the normal bundles are
\begin{equation}
    N_{C_2/X_{3,165}}\cong \cO_{\mathbb{P}^1}(-1)\oplus \cO_{\mathbb{P}^1}(-1)\,,\qquad N_{C_3/X_{3,165}}\cong \cO_{\mathbb{P}^1}(-1)\oplus \cO_{\mathbb{P}^1}(-1)\,.
\end{equation}
Hence, these are flop curves.
Upon crossing through the K\"ahler wall $t^2=0$, the prepotential in the new phase of the extended K\"ahler cone is
\begin{equation}
    \cF^{(2)}-\cF=-\frac{108}{6}(t^2)^3\,.
\end{equation}
However, we can define the integral involution
\begin{equation}
    A_2=
    \begin{pmatrix}
        1&6&0\\
        0&-1&0\\
        0&2&1
    \end{pmatrix}\,,
\end{equation}
where $A_2^2=1$ and $\det A_2=-1$. Then, the above involution $(t^1,t^2,t^3)\mapsto A_2(t^1,t^2,t^3)$ realizes $\cF^{(2)}$. Therefore, the new chamber is isomorphic to the original K\"ahler cone and the wall $t^2=0$ is a symmetric flop wall. The same analysis applies when crossing through the $t^3=0$ wall, which is again a symmetric flop wall. Namely, the new birational equivalent Calabi--Yau threefold is isomorphic to $X_{3,165}$. The other two codimension-one walls in the K\"ahler cone involve shrinking divisors. The relevant divisor classes are $[D_1]=(1,0,0),[D_2]=(0,1,0),[D_3]=(0,0,1)$ and their volumes are
\begin{align}
\begin{split}
    \volume(D_1)=t^1(t^2+t^3)-\frac32\big((t^2)^2+(t^3)^2\big)\,,\quad
    \volume(D_2)=\frac12 (t^1-3t^2)^2\,,\quad
    \volume(D_3)=\frac12(t^1-3t^3)^2\,.
\end{split}
\end{align}
We see that $D_2$ collapses along the K\"ahler wall $t^1-3t^2=0$ where $D_2^3=9$. We can identify $D_2\cong \mathbb{P}^2$, with the normal bundle determined by adjunction as
\begin{equation}
    N_{D_2/X_{3,165}}\cong \cO_{\mathbb{P}^2}(-3)\,.
\end{equation}
Considering $L\subset D_2$ as a class of a line, then we also have $D_1.L=1,D_2.L=-3,D_3.L=0$ and it also vanishes along the same K\"ahler wall at which $D_2$ shrinks to zero size. Thus, $L$ is the curve class $C_4$. As $N_{L/D_2}\cong \cO_{\mathbb{P}^1}(1)$, we obtain
\begin{equation}
    N_{L/X_{3,165}}\cong \cO_{\mathbb{P}^1}(1)\oplus \cO_{\mathbb{P}^1}(-3)\,.
\end{equation}
Hence, this is a finite-distance CFT wall. The same analysis applies to $D_3$, the corresponding line class $C_5$, and the K\"ahler wall $t^1-3t^3=0$. There also exist higher-codimension walls of the K\"ahler cone. These are constructed as intersections between the codimension-one walls. In particular, when the symmetric flop walls intersect with the CFT walls, the codimensions-two ray in the K\"ahler cone signals a mixed degeneration of a symmetric flop along with a $\mathbb{P}^2$ contraction. Similarly, when the two CFT walls intersect, the codimension-two ray signals two $\mathbb{P}^2$ divisors simultaneously collapsing to a point. When the two symmetric flop walls intersect, we obtain a qualitatively distinct boundary of the K\"ahler cone. The intersection is $t^2=t^3=0$ with $t^1\geq 0$. The prepotential indicates that this boundary lies at infinite distance in moduli space. In particular, along the constant-volume slice, it is reached in the limit $t^1\sim (t^2)^{-1/2}\sim (t^3)^{-1/2}\to\infty$. This can also be seen in the green shaded region of \cref{fig:X3165 1branes} indicating the constant-volume slice in the K\"ahler cone where the infinite-distance limit lies along $t^2=t^3=0$. In this case, the relevant non-big nef divisor is $D_1$. We observe $D_1^3=0$ while $D_1^2\neq 0$. From the emergent string conjecture~\cite{Lee:2019wij}, this is the characteristic intersection data of an elliptic fibration limit. To see this from the volume perspective, the relevant curve to consider is $C=D_1^2$. The volume is $\volume(C)=t^2+t^3$. Along this boundary, $\volume(C)\to 0$. However, the volume of $D_1$ behaves as $\volume(D_1)\sim t_1^{-1}\to 0$. Therefore, although $\volume(D_1)$ vanishes, it does so parametrically slower than $\volume(C)$. Thus, the leading light tower of states are interpreted as the KK tower, signaling a decompactification limit.

\begin{figure}[!tp]
    \centering
    \begin{subfigure}{\textwidth}
        \includegraphics[width=.475\linewidth]{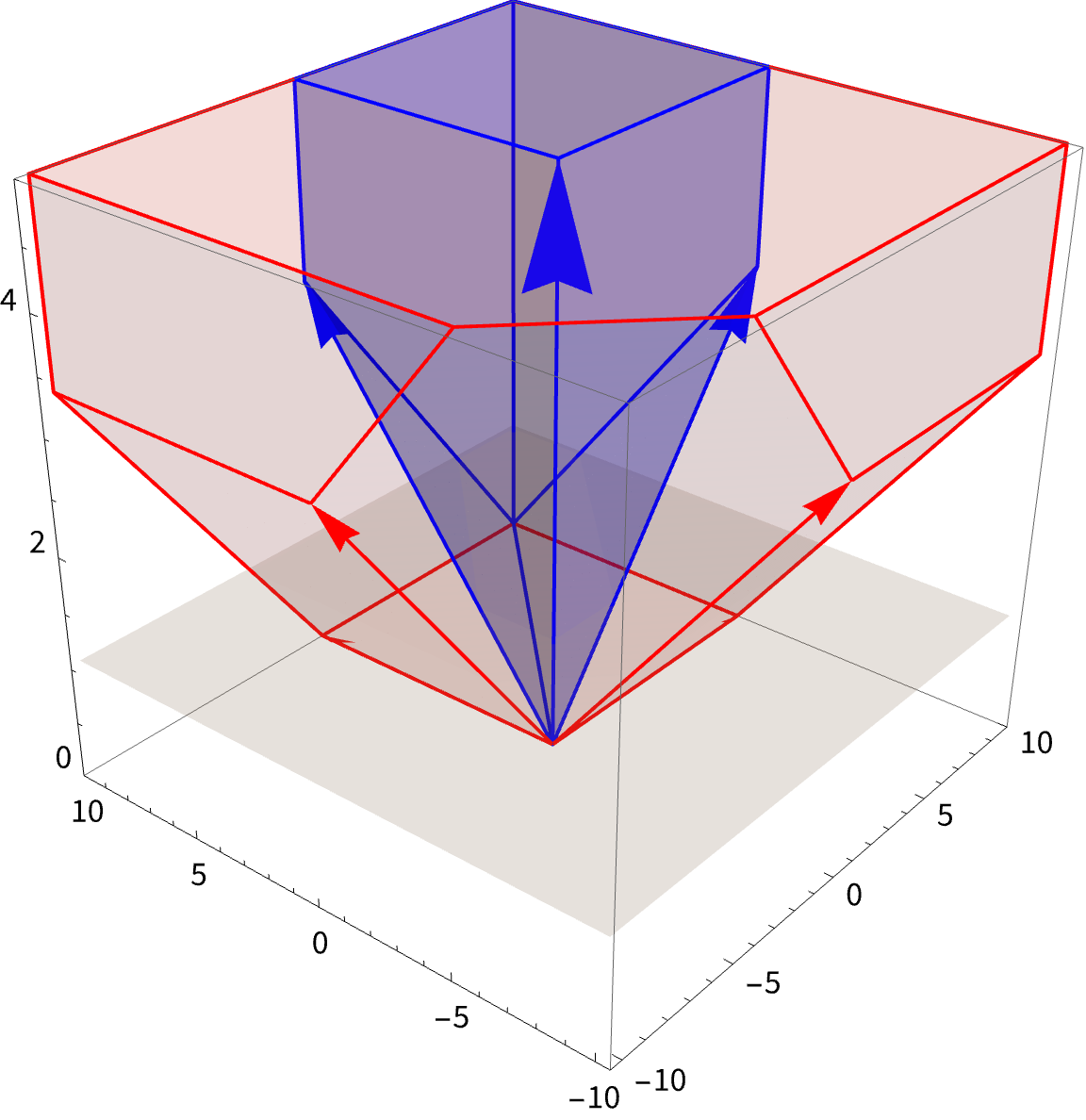}
        \hfill
        \includegraphics[width=.475\linewidth]{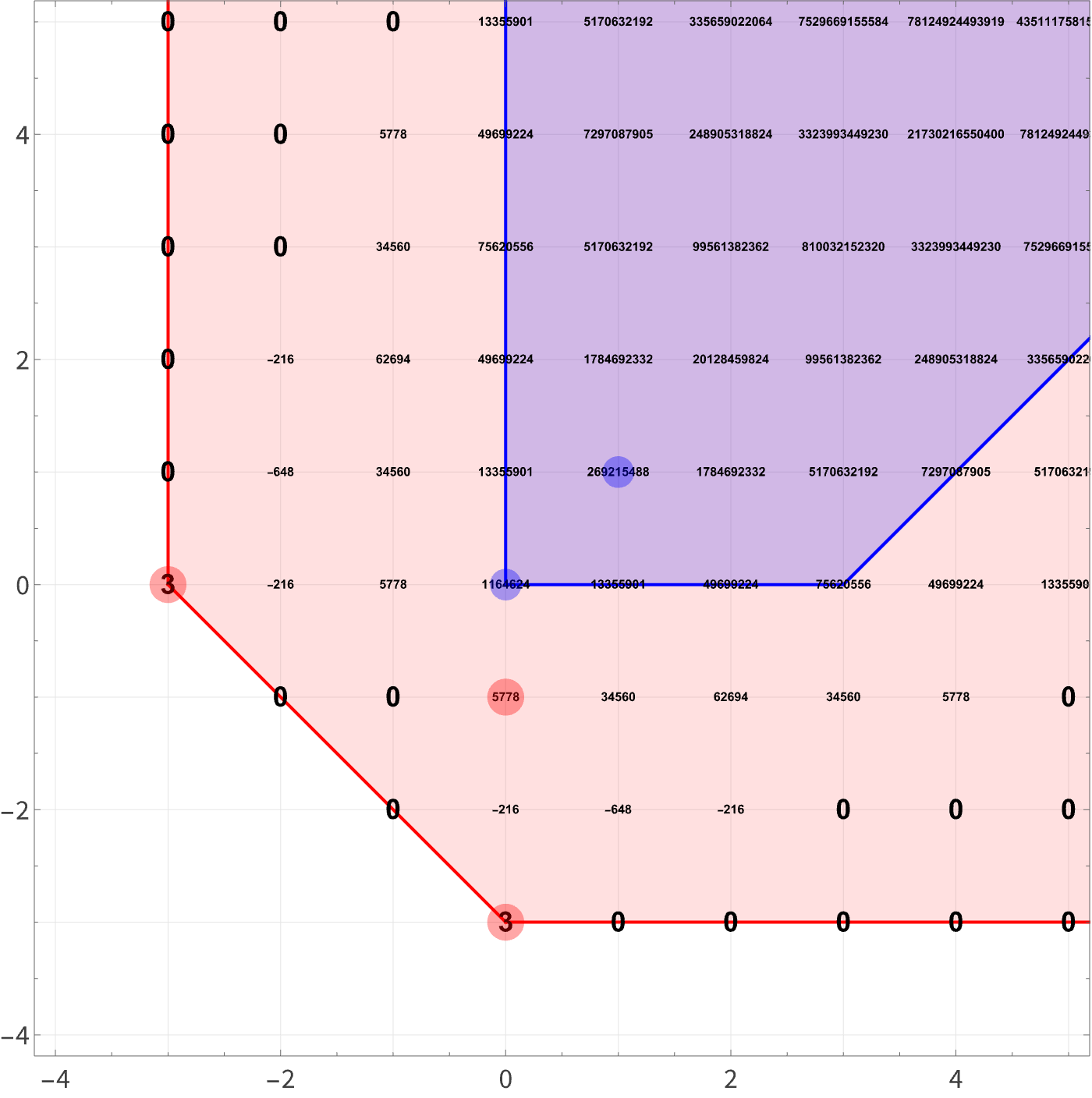}
        \begin{picture}(0,0)\vspace*{-1.2cm}
            \put(-500,120){$q_1$}
            \put(-440,30){$q_2$}
            \put(-300,30){$q_3$}
            \put(-250,120){$q_3$}
            \put(-120,-10){$q_2$}
        \end{picture}
        \caption{Cones of BPS 0-branes}
        \label{fig:X3165 0branes}
    \end{subfigure}
    \hfill
    \vspace*{0.25cm}
    \begin{subfigure}{\textwidth}
        \includegraphics[width=.475\linewidth]{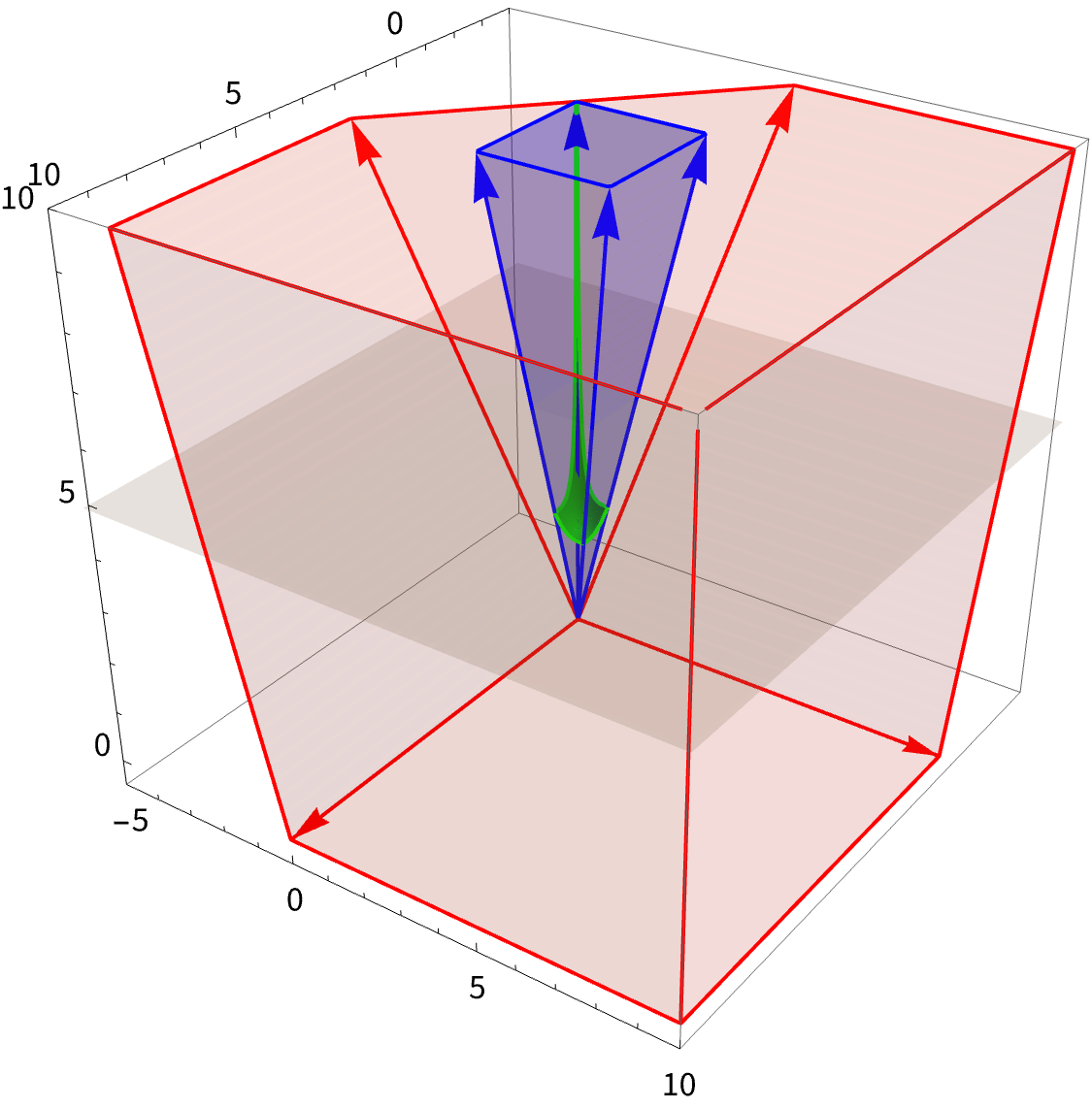}
        \hfill
        \includegraphics[width=.475\linewidth]{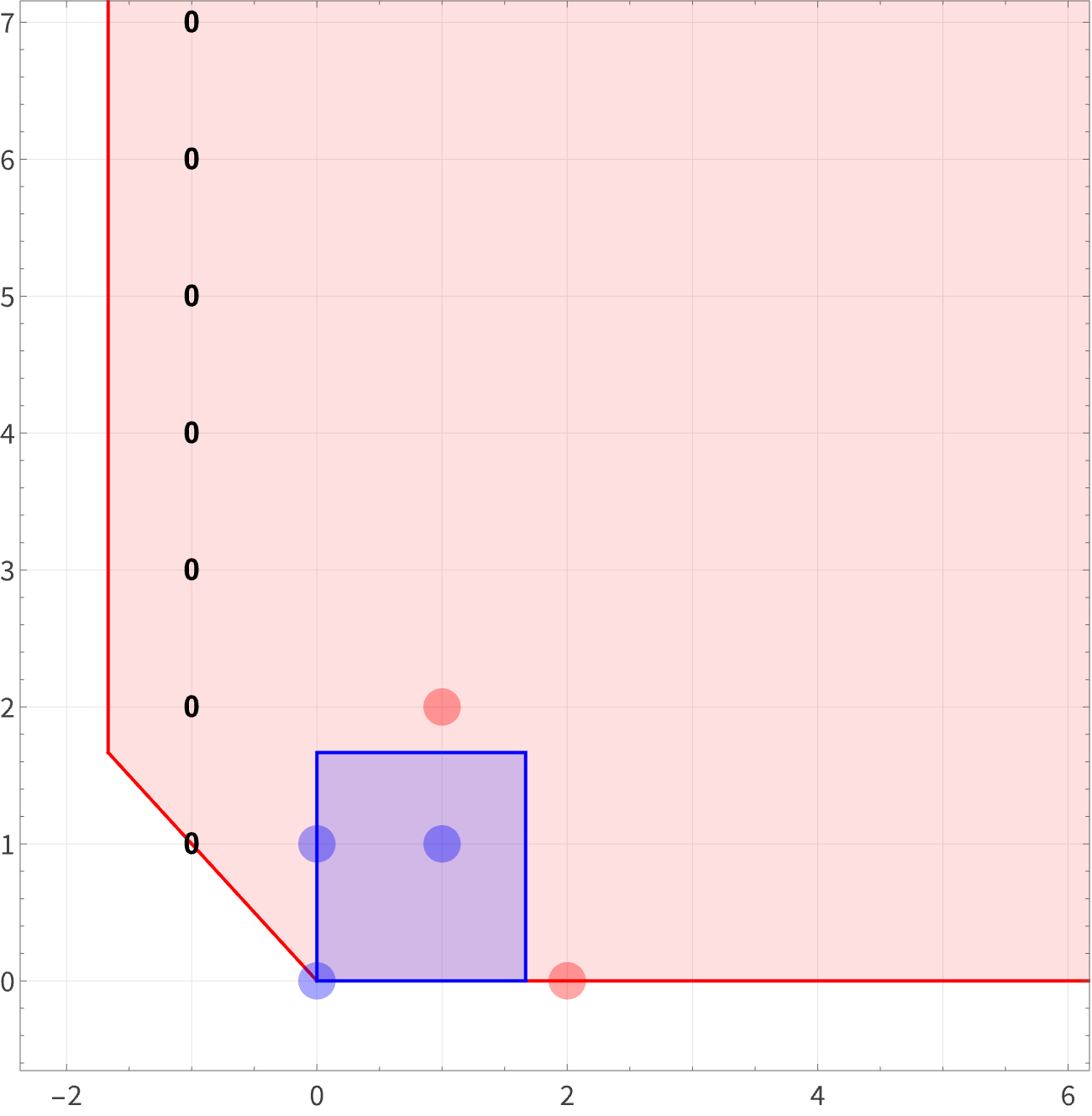}
        \begin{picture}(0,0)\vspace*{-1.2cm}
            \put(-500,105){$p^1$}
            \put(-405,20){$p^3$}
            \put(-445,220){$p^2$}
            \put(-250,120){$p^3$}
            \put(-120,-10){$p^2$}
        \end{picture}
        \vspace*{0.cm}
        \caption{Cones of BPS 1-branes}
        \label{fig:X3165 1branes}
    \end{subfigure}
    \caption{The $\cC_{\rm BPS-B}$ and $\cC_{\rm BPS-BB}$ for 0-branes and 1-branes in M-theory compactified on $X_{3,165}$. In the left figures, $\cC_{\rm BPS-B}$ and $\cC_{\rm BPS-BB}$ are illustrated as the red and blue shaded region, respectively. The arrows indicate the extremal rays of the cones. Additionally, the green two-dimensional region in \cref{fig:X3165 1branes} is the constant-volume slice in the K\"ahler cone. In the right figures, we present the invariants associated to curves and divisors along the two-dimensional gray shaded plane indicated in the corresponding left figures. In the case of 0-branes, the invariant shown is the genus-0 GV invariants along the $q_1=1$ plane. In the case of 1-branes, the invariant shown is $h^0(X_{3,165},\cO_{X_{3,165}}(D))$ along the $p^1=5$ plane. The unlabeled integral points have non-zero $h^0$ and hence correspond to effective divisor classes.}
    \label{fig:X3165}
\end{figure}

\begin{figure}[!tp]
    \centering
    \begin{subfigure}{\textwidth}
    \centering
    \includegraphics[width=\linewidth]{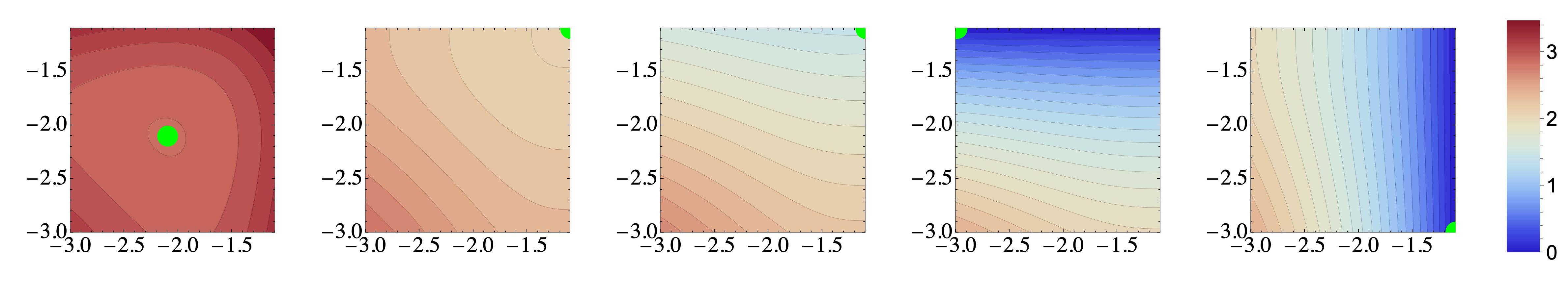}
    \begin{picture}(0,0)\vspace*{-1.2cm}
        \put(-20,10){\footnotesize $\log t^2/t^1$}
        \put(-275,60){\footnotesize $\log t^3/t^1$}
        \put(-200,110){\footnotesize $(1,1,1)$}
        \put(-115,110){\footnotesize $(1,0,0)$}
        \put(-25,110){\footnotesize $(1,0,-1)$}
        \put(67,110){\footnotesize $(1,0,-3)$}
        \put(157,110){\footnotesize $(1,-3,0)$}
        \put(225,110){\footnotesize $m$}
    \end{picture}
    \vspace*{-0.5cm}
    \caption{Mass of BPS 0-branes in $\cC_{\rm BPS-B}$}
    \label{fig:X3165Mass}
    \end{subfigure}
    \hfill
    \vspace*{0.5cm}
    \begin{subfigure}{\textwidth}
    \centering
    \includegraphics[width=\linewidth]{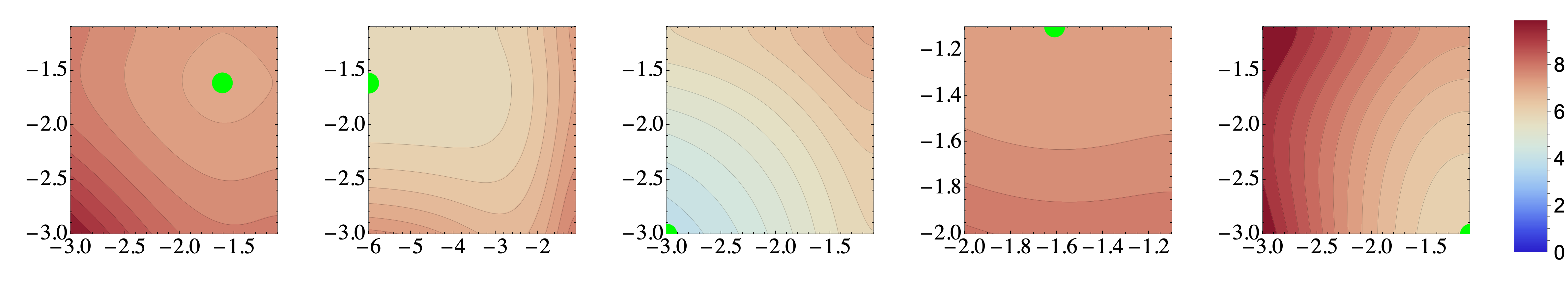}
    \begin{picture}(0,0)\vspace*{-1.2cm}
        \put(-20,10){\footnotesize $\log t^2/t^1$}
        \put(-275,60){\footnotesize $\log t^3/t^1$}
        \put(-200,110){\footnotesize $(5,1,1)$}
        \put(-115,110){\footnotesize $(5,0,1)$}
        \put(-25,110){\footnotesize $(5,0,0)$}
        \put(67,110){\footnotesize $(5,1,2)$}
        \put(157,110){\footnotesize $(5,2,0)$}
        \put(228,110){\footnotesize $T$}
    \end{picture}
    \vspace*{-0.5cm}
    \caption{Tension of BPS 1-branes in $\cC_{\rm BPS-B}$}
    \label{fig:X3165Tensions}
    \end{subfigure}
    \caption{The moduli dependence of BPS brane masses and tensions in $\cC_{\rm BPS-B}^{(0)}$ and $\cC_{\rm BPS-B}^{(1)}$, respectively. The BPS 0-branes and BPS 1-branes chosen here are highlighted in the right panel of \cref{fig:X3165}. The locations of the minima of their masses and tensions are shown as the green dot (restricted to the plotting region).}
    \label{fig:X3165 mass and tension}
\end{figure}

With the above physics along the boundaries of the moduli space established, we turn to the spectrum of BPS states in the theory. The BPS 0-brane cone and the BPS 1-brane cone can be identified with the Mori cone and the effective cone, respectively. These cones in the basis chosen above are
\begin{align}
    \cC_{\rm BPS-B}^{(0)}&=M(X_{3,165})=\big\{(q_1,q_2,q_3)\in\mathbb{R}^3\,\vert\,q_1\geq 0\,,3q_1\geq -q_2\,,3q_1\geq -q_3\,,3q_1\geq -q_2-q_3\big\}\,,\\
    \cC_{\rm BPS-B}^{(1)}&=\cE(X_{3,165})=\big\{(p^1,p^2,p^3)\in\mathbb{R}^3\,\vert\,p^2\geq 0\,,p^1\geq -3p^2\,,p^1\geq -3p^3\,,p^2\geq -p^3\big\}\,.
\end{align}
These are represented as the red shaded regions shown in the left panel of \cref{fig:X3165}.

Using the prepotential and knowing the K\"ahler cone, we can similarly solve for the attractor solutions in this case. Restricting to the constant-volume slice, the BPS black 0-brane cone is again identified with the cone of movable curves
\begin{equation}
    \cC_{\rm BPS-BB}^{(0)}=\mathrm{Mov}(X_{3,165})=\big\{(q_1,q_2,q_3)\in\mathbb{R}^3\,\vert\,q_2,q_3\geq 0\,,3q_1\geq q_2-q_3\,,3q_1\geq q_3-q_2\big\}\,,
\end{equation}
while the BPS black 1-brane cone is identified with the K\"ahler cone, $\cC_{\rm BPS-BB}^{(1)}=\overline{K(X_{3,165})}$. These four cones are illustrated in \cref{fig:X3165}, where we observe strict inclusions in this geometry, namely $\cC_{\rm BPS-BB}\subset \cC_{\rm BPS-B}$. 

With the BPS black brane cone identified for the 0-branes and 1-branes, we can proceed to compute the genus-0 GV invariants for curve classes and $h^0(X_{3,165},\cO_{X_{3,165}}(D))$ for divisor classes. Their values along a two-dimensional slice of the relevant cones are shown in the right panel of \cref{fig:X3165}. We can see that in the right panel of \cref{fig:X3165 0branes}, the genus-0 GV invariants are non-vanishing and strictly positive in the BPS black brane cone. Similarly, the divisor classes that are not effective lie strictly outside the BPS black 1-brane cone as can be seen in the right panel of \cref{fig:X3165 1branes}. Therefore, \cref{conj:BPS completeness} holds in the BPS-BB cones in M-theory compactified on $X_{3,165}$. The duality between cones preserving the same supersymmetry is manifested through the duality between cones of curves and divisors
\begin{equation}
    \cC_{\rm BPS-B}^{(0)}=(\cC_{\rm BPS-BB}^{(1)})^\vee\,,\qquad \cC_{\rm BPS-BB}^{(0)}=(\cC_{\rm BPS-B}^{(1)})^\vee\,.
\end{equation}
Hence, \cref{conj:cone dual conjecture} also remains satisfied in this theory.

Let us examine the behavior of the masses and tensions of BPS 0-branes and 1-branes along the constant-volume slice $\cF=1$ within the K\"{a}hler cone. For convenience, we introduce the coordinates $u=t^2/t^1$ and $v=t^3/t^1$, such that the K\"ahler cone constraints become $0\leq u,v\leq 1/3$. The constant-volume constraint fixes the following relation
\begin{equation}
    t^1=f(u,v)^{-1/3}\,,\qquad t^2=u f(u,v)^{-1/3}\,,\qquad t^3=v f(u,v)^{-1/3}\,,
\end{equation}
where $f(u,v)=\frac12\left(u+v-3u^2-3v^2+3u^3+3v^3\right)$. With this notation, the mass of a BPS 0-brane of electric charge $q=(q_1,q_2,q_3)$ is
\begin{equation}
    m_q(u,v)=\frac{q_1+q_2u+q_3v}{f(u,v)^{1/3}}\,.
\end{equation}
The tension of a BPS 1-brane with magnetic charge $p=(p^1,p^2,p^3)$ is 
\begin{equation}
    T_p(u,v)=p^i\tau_i(u,v)\,,
\end{equation}
where 
\begin{equation}
    \tau_1(u,v)=\frac{u+v-\frac32(u^2+v^2)}{f(u,v)^{2/3}}\,,\qquad \tau_2(u,v)=\frac{\frac12(1-3u)^2}{{f(u,v)^{2/3}}}\,,\qquad \tau_3(u,v)=\frac{{\frac12(1-3v)^2}}{{f(u,v)^{2/3}}}\,.
\end{equation}

The contour plots in \cref{fig:X3165 mass and tension} illustrate the moduli dependence of masses and tensions of the four types of BPS 0-branes and BPS 1-branes expected from the discussion in \cref{sec:tension properties}. The green dots denote the minima within the plotted domain.
For the BPS 0-branes, the charge $(1,1,1)$ provides an example of a BPS 0-brane within $\cC_{\rm BPS-BB}^{(0)}$ whose mass is minimized in the interior of the K\"{a}hler moduli space. In this case, the minimum of the mass is a stable minimum. We next consider the mass of the particle with electric charge $(1,0,0)$. The mass is $m_{(1,0,0)}=t^1=f(u,v)^{-1/3}$ and minimizing $m_{(1,0,0)}$ is equivalent to maximizing $f(u,v)$ on the allowed square $0\leq u,v\leq 1/3$. The minimum is therefore at the finite-distance boundary $u=v=1/3$, which is the intersection between the two walls $t^1-3t^2=0$ and $t^1-3t^3=0$. The mass remains strictly positive and is approximately $2.08$. The BPS 0-branes with charges $(1,0,-1)$ and $(1,0,-3)$ have minimum masses along the wall $v=1/3$, as their masses are $m_{(1,0,-1)}=t^1(1-v)$ and $m_{(1,0,-3)}=t^1-3t^3=t^1(1-3v)$. The former remains strictly positive throughout the Kähler cone and is minimized at the finite-distance corner $u=v=1/3$. By contrast, the latter vanishes along the entire wall $v=\frac13$. Similarly, the plotted charge $(1,-3,0)$ has $m_{(1,-3,0)}=t^1-3t^2=t^1(1-3u)$ and therefore becomes massless on the wall $u=\frac13$. 

Let us examine the BPS 1-brane tensions along the two-dimensional slice $p^1=5$. As above, this slice is a general choice and the following analysis remains general for all such integral charge sites in $\cC_{\rm BPS-B}^{(1)}$. The BPS 1-brane with magnetic charge $(5,1,1)$ is representative of a BPS black 1-brane whose tension is minimized in the interior of the Kähler moduli space. The minimum occurs approximately near $u=v\approx 0.20$. The tension of the divisor $(5,0,1)$ is minimized along the boundary $u\to 0$, namely $t^2\to 0$ with $v$ remaining finite. This is a finite-distance flop wall of the Kähler cone. The tension of the divisor $(5,0,0)$ is $T_{(5,0,0)}=5[{u+v-\frac32(u^2+v^2)}]/{f(u,v)^{2/3}}$. In the plotted finite window, its smallest value appears at the lower-left corner. In the limit of $u\to 0,v\to 0$, the tension behaves as $T_{(5,0,0)}\sim (u+v)^{1/3}$. Therefore, the tension of this BPS 1-brane vanishes in the infinite-distance limit of the K\"ahler moduli space. The BPS 1-brane with magnetic charge $(5,1,2)$ is minimized on the finite-distance wall $v=1/3$. Additionally, its two-dimensional gradient along the constant-volume slice does not vanish at this boundary minimum, so the minimum is not an interior critical point. Lastly, the divisor $(5,2,0)$ is minimized near the boundary $u=1/3$. As with the boundary class $(5,1,0)$, its tension is minimized along the same finite-distance wall of the K\"ahler cone. This is expected as the boundary ray of $\cC_{\rm BPS-B}$ spanned by $(5,2,0)$ is aligned with the boundary ray of $\cC_{\rm BPS-BB}$ spanned by $(5,1,0)$. 

Hence, the properties of tensions of BPS branes across cones for BPS 0-branes and BPS 1-branes are consistent with those discussed in \cref{sec:tension properties}.

The non-BPS 1-branes with charges on the boundary of $\cC_{\rm BPS-B}^{(1)}$ in \cref{sec:Calabi--Yau threefold with boundary holes} are consistent with \cref{def:BPS occupied 2}. In this example, let us focus on the family of non-BPS 1-branes with charges $p=(5,-1,2+n)$ and $n\geq 0$. These charges are all in the strict interior of $\cC_{\rm BPS-B}^{(1)}$, but lie outside of $\cC_{\rm BPS-BB}^{(1)}$. The central charge of 1-branes within this family of non-BPS 1-branes is
\begin{equation}
    Z_p=\frac{-24u^2+16u+(9n+3)v^2-(6n+2)v+n+1}{2f(u,v)^{2/3}}\,.
\end{equation}
The minimum of $Z_p$ along the constant-volume slice occurs at $u=0,v=1/3$ with $Z_{\rm min}=12^{1/3}$ for every choice of $n\geq 0$. This is the intersection of the flop wall $t^2=0$ and the CFT wall $t^1-3t^3=0$. 
The effective black string potential associated with this family of 1-branes is
\begin{multline}
    V_{\rm BS}(p)=\frac{1}{8f(u,v)^{4/3}}\bigg(
    192u^4-6u^3v-354u^3-144u^2v^2+102u^2v+50u^2\\
    -384uv^3+480uv^2-194uv+42u+3v^4+24v^3-22v^2+6v+1\\
    +n\big(-36u^3v+12u^3-432u^2v^2+324u^2v-60u^2+288uv^2-204uv+36u+18v^4-24v^3+24v^2-12v+2\big)\\
    +n^2\big(-54u^3v+18u^3+54u^2v-18u^2-18uv+6u+27v^4-36v^3+18v^2-6v+1\big)\bigg)\,.
\end{multline}
This potential is minimized at $u=v=1/3$ where the two CFT walls $t^1-3t^2=0$ and $t^1-3t^3=0$ intersect. The tension of the would-be extremal black brane of charge $p$ for all $n\geq 0$ is $T^{\rm Ext}=\frac{5}{\sqrt{2}\cdot 3^{1/6}}$. Thus, the charges in $\cC_{\rm BPS-B}^{(1)}$ associated with non-BPS 1-branes in the spectrum all satisfy
\begin{equation}
    Z_{\rm min}<T^{\rm Ext}\,.
\end{equation}
This inequality is satisfied for all non-BPS 1-branes in this theory and is consistent with \cref{def:BPS occupied 2}.

\section{Conclusions and future directions}
\label{sec:conclusions}

In this paper, we have considered supersymmetric theories and asked which brane charges support BPS states? We have conjectured (and in certain cases proven) that the following strategy gives an answer to this question.  We first find all the classical BPS black brane solutions.  We have conjectured in \cref{conj:BPS completeness} that all the quantized charges allowed by these solutions support BPS states.   The interior of these charge cones for BPS-BB are also charges for which positive minimum tension $T^{\rm min}_{\rm BPS-{BB}}>0$ is achieved in the interior of the moduli space.  To find all the rest of the BPS states (whose tensions fall into the remaining categories in \cref{sec:tension properties}) is less trivial.  However, in cases where the existence of BPS branes are moduli independent, if one considers the cone they generate (over $\mathbb{R}_+$) we have provided evidence that this is dual to the BPS-BB cone of the dual electric/magnetic brane.  This link can be used to deduce the BPS-B cone.

In the following, we will outline a few exciting directions that we believe are interesting to pursue.
\vspace*{-0.75cm}
\paragraph{4d $\cN=2$ theories.} 
Our \cref{conj:BPS completeness} about the completeness of the BPS-BB cone also applies to the cases where the BPS branes jump, such as charged BPS particles for 4d ${\cal N}=2$ supersymmetric theories.  However for those cases we have no general picture about how to identify the BPS branes (or the cones they generate) for given moduli. In 4d $\cN=2$ theories from type IIA compactified on Calabi--Yau threefolds, the relevant invariant is the Donaldson--Thomas invariant counting BPS bound states associated to the D0-D2-D4-D6 configuration wrapping appropriate cycles in the Calabi--Yau threefold. In these theories, BPS states experience a wall-crossing phenomenon where across chambers, BPS states can decay. Therefore, necessarily the BPS-B cone becomes moduli dependent. For type IIB on Calabi--Yau threefold, the relevant cycles are the special-Lagrangian cycles (SLags) of which the D3-branes wrap to form BPS 0-branes in 4d. The dynamics of SLags as we vary moduli or across possible topological transitions of Calabi--Yau manifolds are complicated and furthermore, their behaviors are understood only in limited cases, e.g., in local Calabi--Yau threefolds~\cite{Shapere:1999xr}. It would be interesting to further develop techniques to address these issues.  It is natural to expect that our moduli-independent requirement for the duality between electric/magnetic BPS-B and BPS-BB cones should have generalization to cases such as 4d ${\cal N}=2$ where the BPS states depend both on moduli and the choice of supersymmetry parameterized by an angle.

\vspace*{-0.25cm}
\paragraph{BPS compatibility and vectorial central charges.}
The examples discussed in the present paper represent an albeit diverse, yet small sample within the vast landscape of quantum gravity theories. The above 4d $\cN=2$ theories, which can be thought of as type II string theories compactified on Calabi--Yau threefolds, represent one of the most complicated classes of quantum gravity theories for which both the central charges of BPS branes are complex and the BPS-B cone is moduli dependent due to the presence of walls of marginal stability. However, there are also more theories in which the choice of preserving the same supersymmetry and identifying the BPS-B cone remains challenging, such as heterotic string theories on $T^n$ of which the BPS-B cone depends on moduli. For example, in the case of heterotic string theory compactified on $T^3$, preserving the same supersymmetry among BPS states amounts to a fixed unit vector $\hat{p}_R\in S^2\subset \mathbb{R}^3$. This alone does not fix the BPS-B cone (which is not empty) as we vary the three-plane within $\Gamma^{3,19}\otimes \mathbb{R}$. Moreover, there are no extremal black brane solutions in this theory with a smooth horizon. Therefore, the electric-magnetic pairing between cones becomes ambiguous. On the other hand, via duality, we can study M-theory compactified on K3. The various cones that have been featured in \cref{sec:ECalabi--Yau threefolds} are identical in this scenario. Namely, via geometry, a duality between the ample cone and the Mori cone along with the duality between the cone of movable curves and the effective cone of divisors are manifested. Therefore, it would be interesting to extend our conjectures to these cases where the central charges and choice of supersymmetry among BPS branes become more complicated.

\vspace*{-0.25cm}
\paragraph{Refinement of BPS completeness.}
Among the BPS black 0-brane cones in 5d $\cN=1$ quantum gravity theories, we have already seen a refinement of our BPS completeness \cref{conj:BPS completeness} in the form of the positivity of genus-0 GV invariants in $\cC_{\rm BPS-BB}^{(0)}$ as stated in \cref{conj:GV BPS-BB}. Additionally, in the relevant examples discussed in \cref{sec:QG with 32}, the branes form (threshold) bound states~\cite{Witten:1995im,Sethi:1997pa}. Therefore, all of these seem to suggest that in addition to BPS occupancy at each integral sites within the BPS-BB cone, there is a BPS bound state at each integral charge in the BPS-BB cone. Upon establishing the aforementioned future directions, it would be interesting to examine a wider variety of examples in which upon studying the relevant BPS invariants, we can arrive at a better understanding of which consistent set of BPS states occupy the integral sites in the BPS-BB cone.

\section*{Acknowledgments}
We would like to thank Naomi Gendler, Damian van de Heisteeg, Elijah Sheridan, Mike Stillman, and Max Wiesner for helpful conversations. This work is supported in part by a grant from the Simons Foundation (602883, CV) and the DellaPietra Foundation.

\appendix

\bibliography{papers}

@article{Long:2021lon,
    author = "Long, Cody and Sheshmani, Artan and Vafa, Cumrun and Yau, Shing-Tung",
    title = "{Non-Holomorphic Cycles and Non-BPS Black Branes}",
    eprint = "2104.06420",
    archivePrefix = "arXiv",
    primaryClass = "hep-th",
    reportNumber = "CIMP-D-21-00618R0",
    doi = "10.1007/s00220-022-04587-4",
    journal = "Commun. Math. Phys.",
    volume = "399",
    number = "3",
    pages = "1991--2043",
    year = "2023"
}

@article{skauli,
  title={Curve classes on Calabi--Yau complete intersections in toric varieties},
  author={Skauli, Bj{\o}rn},
  journal={Bulletin of the London Mathematical Society},
  volume={55},
  number={2},
  pages={811--825},
  year={2023},
  publisher={Wiley Online Library},
  eprint = "1911.03146",
  archivePrefix = "arXiv",
  primaryClass = "math.AG",
}

@article{Heidenreich:2016aqi,
    author = "Heidenreich, Ben and Reece, Matthew and Rudelius, Tom",
    title = "{Evidence for a sublattice weak gravity conjecture}",
    eprint = "1606.08437",
    archivePrefix = "arXiv",
    primaryClass = "hep-th",
    doi = "10.1007/JHEP08(2017)025",
    journal = "JHEP",
    volume = "08",
    pages = "025",
    year = "2017"
}

@article{reece2025co,
  title={Co-scaling and alignment of electric and magnetic towers},
  author={Reece, Matthew and Rudelius, Tom and Tudball, Christopher},
  journal={Journal of High Energy Physics},
  volume={2025},
  number={9},
  pages={1--69},
  year={2025},
  publisher={Springer}
}

@article{Nevoa:2025xiq,
    author = "Nevoa, Vinicius and Raman, Sanjay and Vafa, Cumrun",
    title = "{Elementary Constituents Conjecture}",
    eprint = "2511.13813",
    archivePrefix = "arXiv",
    primaryClass = "hep-th",
    month = "11",
    year = "2025"
}

@article{Alim:2021vhs,
    author = "Alim, Murad and Heidenreich, Ben and Rudelius, Tom",
    title = "{The Weak Gravity Conjecture and BPS Particles}",
    eprint = "2108.08309",
    archivePrefix = "arXiv",
    primaryClass = "hep-th",
    reportNumber = "ACFI-T21-09",
    doi = "10.1002/prop.202100125",
    journal = "Fortsch. Phys.",
    volume = "69",
    number = "11-12",
    pages = "2100125",
    year = "2021"
}

@article{Gendler:2026uux,
    author = "Gendler, Naomi and Sheridan, Elijah and Stillman, Michael and Wu, David H.",
    title = "{Holes in Calabi-Yau Effective Cones}",
    eprint = "2603.11173",
    archivePrefix = "arXiv",
    primaryClass = "hep-th",
    month = "3",
    year = "2026"
}

@article{Arkani-Hamed:2006emk,
    author = "Arkani-Hamed, Nima and Motl, Lubos and Nicolis, Alberto and Vafa, Cumrun",
    title = "{The String landscape, black holes and gravity as the weakest force}",
    eprint = "hep-th/0601001",
    archivePrefix = "arXiv",
    reportNumber = "HUTP-05-A0057",
    doi = "10.1088/1126-6708/2007/06/060",
    journal = "JHEP",
    volume = "06",
    pages = "060",
    year = "2007"
}

@article{Ooguri:2016pdq,
    author = "Ooguri, Hirosi and Vafa, Cumrun",
    title = "{Non-supersymmetric AdS and the Swampland}",
    eprint = "1610.01533",
    archivePrefix = "arXiv",
    primaryClass = "hep-th",
    reportNumber = "CALT-TH-2016-027, IPMU16-0139",
    doi = "10.4310/ATMP.2017.v21.n7.a8",
    journal = "Adv. Theor. Math. Phys.",
    volume = "21",
    pages = "1787--1801",
    year = "2017"
}

@article{Demirtas:2018akl,
    author = "Demirtas, Mehmet and Long, Cody and McAllister, Liam and Stillman, Mike",
    title = "{The Kreuzer-Skarke Axiverse}",
    eprint = "1808.01282",
    archivePrefix = "arXiv",
    primaryClass = "hep-th",
    doi = "10.1007/JHEP04(2020)138",
    journal = "JHEP",
    volume = "04",
    pages = "138",
    year = "2020"
}

@article{Banks:2010zn,
    author = "Banks, Tom and Seiberg, Nathan",
    title = "{Symmetries and Strings in Field Theory and Gravity}",
    eprint = "1011.5120",
    archivePrefix = "arXiv",
    primaryClass = "hep-th",
    doi = "10.1103/PhysRevD.83.084019",
    journal = "Phys. Rev. D",
    volume = "83",
    pages = "084019",
    year = "2011"
}

@article{Polchinski:2003bq,
    author = "Polchinski, Joseph",
    editor = "Baer, H. and Belyaev, A.",
    title = "{Monopoles, duality, and string theory}",
    eprint = "hep-th/0304042",
    archivePrefix = "arXiv",
    doi = "10.1142/S0217751X0401866X",
    journal = "Int. J. Mod. Phys. A",
    volume = "19S1",
    pages = "145--156",
    year = "2004"
}

@article{Harlow:2018tng,
    author = "Harlow, Daniel and Ooguri, Hirosi",
    title = "{Symmetries in quantum field theory and quantum gravity}",
    eprint = "1810.05338",
    archivePrefix = "arXiv",
    primaryClass = "hep-th",
    doi = "10.1007/s00220-021-04040-y",
    journal = "Commun. Math. Phys.",
    volume = "383",
    number = "3",
    pages = "1669--1804",
    year = "2021"
}

@article{ottem2015birational,
  title={Birational geometry of hypersurfaces in products of projective spaces},
  author={Ottem, John Christian},
  journal={Mathematische Zeitschrift},
  volume={280},
  number={1},
  pages={135--148},
  year={2015},
  publisher={Springer},
  eprint = "1305.0537",
  archivePrefix = "arXiv",
  primaryClass = "math-AG",
}

@article{Kreuzer:2000xy,
    author = "Kreuzer, Maximilian and Skarke, Harald",
    title = "{Complete classification of reflexive polyhedra in four-dimensions}",
    eprint = "hep-th/0002240",
    archivePrefix = "arXiv",
    reportNumber = "HUB-EP-00-13, TUW-00-07",
    doi = "10.4310/ATMP.2000.v4.n6.a2",
    journal = "Adv. Theor. Math. Phys.",
    volume = "4",
    pages = "1209--1230",
    year = "2000"
}

@article{Hosono:1994ax,
    author = "Hosono, S. and Klemm, A. and Theisen, S. and Yau, Shing-Tung",
    editor = "Greene, B. and Yau, Shing-Tung",
    title = "{Mirror symmetry, mirror map and applications to complete intersection Calabi-Yau spaces}",
    eprint = "hep-th/9406055",
    archivePrefix = "arXiv",
    reportNumber = "HUTMP-94-02, CERN-TH-7303-94, LMU-TPW-94-03",
    doi = "10.1016/0550-3213(94)00440-P",
    journal = "Nucl. Phys. B",
    volume = "433",
    pages = "501--554",
    year = "1995"
}

@article{Maldacena:1997de,
    author = "Maldacena, Juan Martin and Strominger, Andrew and Witten, Edward",
    title = "{Black hole entropy in M theory}",
    eprint = "hep-th/9711053",
    archivePrefix = "arXiv",
    doi = "10.1088/1126-6708/1997/12/002",
    journal = "JHEP",
    volume = "12",
    pages = "002",
    year = "1997"
}

@article{Gopakumar:1998jq,
    author = "Gopakumar, Rajesh and Vafa, Cumrun",
    title = "{M theory and topological strings. 2.}",
    eprint = "hep-th/9812127",
    archivePrefix = "arXiv",
    reportNumber = "HUTP-98-A070",
    month = "12",
    year = "1998"
}

@article{Katz:2020ewz,
    author = "Katz, Sheldon and Kim, Hee-Cheol and Tarazi, Houri-Christina and Vafa, Cumrun",
    title = "{Swampland Constraints on 5d $\mathcal{N}=1$ Supergravity}",
    eprint = "2004.14401",
    archivePrefix = "arXiv",
    primaryClass = "hep-th",
    doi = "10.1007/JHEP07(2020)080",
    journal = "JHEP",
    volume = "07",
    pages = "080",
    year = "2020"
}

@article{Gendler:2022ztv,
    author = "Gendler, Naomi and Heidenreich, Ben and McAllister, Liam and Moritz, Jakob and Rudelius, Tom",
    title = "{Moduli space reconstruction and Weak Gravity}",
    eprint = "2212.10573",
    archivePrefix = "arXiv",
    primaryClass = "hep-th",
    reportNumber = "ACFI-T22-10",
    doi = "10.1007/JHEP12(2023)134",
    journal = "JHEP",
    volume = "12",
    pages = "134",
    year = "2023"
}

@article{McNamara:2019rup,
    author = "McNamara, Jacob and Vafa, Cumrun",
    title = "{Cobordism Classes and the Swampland}",
    eprint = "1909.10355",
    archivePrefix = "arXiv",
    primaryClass = "hep-th",
    month = "9",
    year = "2019"
}

@article{Demirtas:2022hqf,
    author = "Demirtas, Mehmet and Rios-Tascon, Andres and McAllister, Liam",
    title = "{CYTools: A Software Package for Analyzing Calabi-Yau Manifolds}",
    eprint = "2211.03823",
    archivePrefix = "arXiv",
    primaryClass = "hep-th",
    month = "11",
    year = "2022"
}

@article{Witten:1996qb,
    author = "Witten, Edward",
    title = "{Phase transitions in M theory and F theory}",
    eprint = "hep-th/9603150",
    archivePrefix = "arXiv",
    reportNumber = "IASSNS-HEP-96-26",
    doi = "10.1016/0550-3213(96)00212-X",
    journal = "Nucl. Phys. B",
    volume = "471",
    pages = "195--216",
    year = "1996"
}

@article{vandeHeisteeg:2023dlw,
    author = "van de Heisteeg, Damian and Vafa, Cumrun and Wiesner, Max and Wu, David H.",
    title = "{Species scale in diverse dimensions}",
    eprint = "2310.07213",
    archivePrefix = "arXiv",
    primaryClass = "hep-th",
    doi = "10.1007/JHEP05(2024)112",
    journal = "JHEP",
    volume = "05",
    pages = "112",
    year = "2024"
}

@article{Lee:2019wij,
    author = "Lee, Seung-Joo and Lerche, Wolfgang and Weigand, Timo",
    title = "{Emergent strings from infinite distance limits}",
    eprint = "1910.01135",
    archivePrefix = "arXiv",
    primaryClass = "hep-th",
    reportNumber = "CERN-TH-2019-159",
    doi = "10.1007/JHEP02(2022)190",
    journal = "JHEP",
    volume = "02",
    pages = "190",
    year = "2022"
}

@book{lazarsfeld2017positivity,
  title={Positivity in Algebraic Geometry II: Positivity for Vector Bundles, and Multiplier Ideals},
  author={Lazarsfeld, Robert K},
  volume={49},
  year={2004},
  publisher={Springer-Verlag Berlin Heidelberg}
}

@article{Greene:1995hu,
    author = "Greene, Brian R. and Morrison, David R. and Strominger, Andrew",
    title = "{Black hole condensation and the unification of string vacua}",
    eprint = "hep-th/9504145",
    archivePrefix = "arXiv",
    reportNumber = "CLNS-95-1335",
    doi = "10.1016/0550-3213(95)00371-X",
    journal = "Nucl. Phys. B",
    volume = "451",
    pages = "109--120",
    year = "1995"
}

@article{Greene:1996dh,
    author = "Greene, Brian R. and Morrison, David R. and Vafa, Cumrun",
    title = "{A Geometric realization of confinement}",
    eprint = "hep-th/9608039",
    archivePrefix = "arXiv",
    reportNumber = "CU-TP-769, DUKE-TH-96-125, HUTP-96-A033",
    doi = "10.1016/S0550-3213(96)00465-8",
    journal = "Nucl. Phys. B",
    volume = "481",
    pages = "513--538",
    year = "1996"
}

@Inbook{Hirzebruch1966,
    author="Hirzebruch, F.",
    title="The Riemann-Roch theorem for algebraic manifolds",
    bookTitle="Topological Methods in Algebraic Geometry",
    year="1966",
    publisher="Springer Berlin Heidelberg",
    address="Berlin, Heidelberg",
    pages="114--158",
    isbn="978-3-662-30697-0",
    doi="10.1007/978-3-662-30697-0_5",
    url="https://doi.org/10.1007/978-3-662-30697-0_5"
}

@incollection{miyaoka1987chern,
      title={The Chern classes and Kodaira dimension of a minimal variety},
      author={Miyaoka, Yoichi},
      booktitle={Algebraic geometry, Sendai, 1985},
      volume={10},
      pages={449--477},
      year={1987},
      publisher={Mathematical Society of Japan}
}

@article{kawamata1982generalization,
  title={A generalization of Kodaira-Ramanujam's vanishing theorem},
  author={Kawamata, Yujiro},
  journal={Mathematische Annalen},
  volume={261},
  number={1},
  pages={43--46},
  year={1982},
  publisher={Springer}
}

@article{viehweg1982vanishing,
  title={Vanishing theorems.},
  author={Viehweg, Eckart},
  journal = "Journal für die reine und angewandte Mathematik",
  volume = "335",
  pages = "1--8",
  year={1982},
  publisher={Walter de Gruyter, Berlin/New York Berlin, New York}
}

@book{CossecDolgachevLiedtke2025,
  author    = {Fran\c{c}ois Cossec and Igor Dolgachev and Christian Liedtke},
  title     = {Enriques Surfaces I},
  publisher = {Springer Singapore},
  year      = {2025},
  isbn      = {978-981-96-1213-0},
  doi       = {10.1007/978-981-96-1214-7},
  url       = {https://doi.org/10.1007/978-981-96-1214-7}
}

@article{Kim:2024eoa,
    author = "Kim, Hee-Cheol and Vafa, Cumrun and Xu, Kai",
    title = "{Finite landscape of 6d N=(1,0) supergravity}",
    eprint = "2411.19155",
    archivePrefix = "arXiv",
    primaryClass = "hep-th",
    doi = "10.21468/SciPostPhys.20.1.016",
    journal = "SciPost Phys.",
    volume = "20",
    number = "1",
    pages = "016",
    year = "2026"
}

@article{Morrison:1996pp,
    author = "Morrison, David R. and Vafa, Cumrun",
    title = "{Compactifications of F theory on Calabi-Yau threefolds. 2.}",
    eprint = "hep-th/9603161",
    archivePrefix = "arXiv",
    reportNumber = "DUKE-TH-96-107, HUTP-96-A012",
    doi = "10.1016/0550-3213(96)00369-0",
    journal = "Nucl. Phys. B",
    volume = "476",
    pages = "437--469",
    year = "1996"
}

@book{Hartshorne1977,
  author    = {Robin Hartshorne},
  title     = {Algebraic Geometry},
  series     = {Graduate Texts in Mathematics},
  volume     = {52},
  publisher = {Springer},
  address   = {New York},
  year      = {1977},
  isbn      = {978-0-387-90244-9},
  doi       = {10.1007/978-1-4757-3849-0},
  url       = {https://doi.org/10.1007/978-1-4757-3849-0}
}

@article{Ooguri:2006in,
    author = "Ooguri, Hirosi and Vafa, Cumrun",
    title = "{On the Geometry of the String Landscape and the Swampland}",
    eprint = "hep-th/0605264",
    archivePrefix = "arXiv",
    reportNumber = "CALT-68-2600, HUTP-06-A017",
    doi = "10.1016/j.nuclphysb.2006.10.033",
    journal = "Nucl. Phys. B",
    volume = "766",
    pages = "21--33",
    year = "2007"
}

@misc{cygv,
  author       = {Rios Tascon, Andres},
  title        = {{cygv}: Compute GV and GW invariants of Calabi--Yau manifolds},
  year         = {2026},
  howpublished = {\url{https://github.com/ariostas/cygv}},
  note         = {Version 0.2.2}
}

@article{Witten:1995im,
    author = "Witten, Edward",
    title = "{Bound states of strings and p-branes}",
    eprint = "hep-th/9510135",
    archivePrefix = "arXiv",
    reportNumber = "IASSNS-HEP-95-83",
    doi = "10.1016/0550-3213(95)00610-9",
    journal = "Nucl. Phys. B",
    volume = "460",
    pages = "335--350",
    year = "1996"
}

@book{Lazarsfeld2004,
  author    = {Robert Lazarsfeld},
  title     = {Positivity in Algebraic Geometry I: Classical Setting: Line Bundles and Linear Series},
  series     = {Ergebnisse der Mathematik und ihrer Grenzgebiete. 3. Folge / A Series of Modern Surveys in Mathematics},
  volume     = {48},
  publisher = {Springer},
  address   = {Berlin, Heidelberg},
  year      = {2004},
  isbn      = {978-3-540-22533-1},
  doi       = {10.1007/978-3-642-18808-4},
  url       = {https://doi.org/10.1007/978-3-642-18808-4}
}

@book{Huybrechts2005,
  author    = {Daniel Huybrechts},
  title     = {Complex Geometry: An Introduction},
  series     = {Universitext},
  publisher = {Springer},
  address   = {Berlin, Heidelberg},
  year      = {2005},
  isbn      = {978-3-540-21290-5},
  doi       = {10.1007/b137952},
  url       = {https://doi.org/10.1007/b137952}
}

@article{Sethi:1997pa,
    author = "Sethi, Savdeep and Stern, Mark",
    title = "{D-brane bound states redux}",
    eprint = "hep-th/9705046",
    archivePrefix = "arXiv",
    reportNumber = "IASSNS-HEP-97-45, DUK-M-97-5",
    doi = "10.1007/s002200050374",
    journal = "Commun. Math. Phys.",
    volume = "194",
    pages = "675--705",
    year = "1998"
}

@article{lazic2020generalised,
  title={On generalised abundance, I},
  author={Lazi{\'c}, Vladimir and Peternell, Thomas},
  journal={Publications of the Research Institute for Mathematical Sciences},
  volume={56},
  number={2},
  pages={353--389},
  year={2020}
}

@article{Seiberg:1996bd,
    author = "Seiberg, Nathan",
    title = "{Five-dimensional SUSY field theories, nontrivial fixed points and string dynamics}",
    eprint = "hep-th/9608111",
    archivePrefix = "arXiv",
    reportNumber = "RU-96-69",
    doi = "10.1016/S0370-2693(96)01215-4",
    journal = "Phys. Lett. B",
    volume = "388",
    pages = "753--760",
    year = "1996"
}

@article{Shapere:1999xr,
    author = "Shapere, Alfred D. and Vafa, Cumrun",
    title = "{BPS structure of Argyres-Douglas superconformal theories}",
    eprint = "hep-th/9910182",
    archivePrefix = "arXiv",
    reportNumber = "HUTP-99-A057, UKHEP-99-15",
    month = "10",
    year = "1999"
}

@article{Larfors:2019sie,
    author = "Larfors, Magdalena and Schneider, Robin",
    title = "{Line bundle cohomologies on CICYs with Picard number two}",
    eprint = "1906.00392",
    archivePrefix = "arXiv",
    primaryClass = "hep-th",
    reportNumber = "UUITP-18/19",
    doi = "10.1002/prop.201900083",
    journal = "Fortsch. Phys.",
    volume = "67",
    number = "12",
    pages = "1900083",
    year = "2019"
}
\bibliographystyle{JHEP}

\end{document}